\newcommand{\ppk}[1]{\todo[inline,color=red!20!white]{#1 ---PPK}}
\newcommand{\tgandhi}[1]{\todo[inline,color=green!20!white]{#1 ---TG}}
\newcommand{\ppkold}[1]{}
\newcommand{\rmittalold}[1]{}
\newcommand{\tgandhiold}[1]{}
\newcommand\mcm[1]{\mathcal{#1}}
\newcommand\mc[1]{\ifmmode \mcm{#1} \else $\mcm{#1}$\fi}
\newcommand\mbm[1]{\mathbb{#1}}
\newcommand\mb[1]{\ifmmode \mbm{#1} \else $\mbm{#1}$\fi}
\newcommand\bhm[1]{\mc{B}({#1})}
\newcommand\bh[1]{\ifmmode \bhm{#1} \else $\bhm{#1}$\fi}
\newcommand{\wt}[1]{\ensuremath{\mathrm{wt}\left(#1\right)}}
\newcommand\chmpm[3]{#1:\bh{#2}\rightarrow \bh{#3}}
\newcommand\chmp[3]{\ifmmode \chmpm{#1}{#2}{#3} \else $\chmpm{#1}{#2}{#3}$\fi}
\newcommand\chosrepm[4]{#1(#2) = \sum_{#3} #4 #2 {#4}^{\dagger}}
\newcommand\chosrep[4]{\ifmmode \chosrepm{#1}{#2}{#3}{#4}  \else $\chosrepm{#1}{#2}{#3}{#4} $\fi}
\newcommand\tensor{\otimes}
\newcommand{\transpose}[1]{\ensuremath{#1^{\mathrm{T}}}}
\newcommand \Cm[1]{\overline{#1}}
\newcommand \C[1]{\ifmmode \Cm{#1} \else $\Cm{#1}$\fi}
\newcommand \Zm[1]{\mc{Z}\left({#1}\right)}
\newcommand \Z[1]{\ifmmode \Zm{#1} \else $\Zm{#1}$\fi}
\newcommand \ordm[1]{\mid {#1}\mid}
\newcommand \ord[1]{\ifmmode \ordm{#1} \else $\ordm{#1}$\fi}
\newcommand \F[1][2]{\ensuremath{\mathbb{F}_{#1}}}
\DeclarePairedDelimiterX\set[1]\lbrace\rbrace{\def\given{\;\delimsize\vert\;}#1}
\newcommand{\Va}{\ensuremath{\mathbf{a}}}
\newcommand{\Vb}{\ensuremath{\mathbf{b}}}
\newcommand{\Vc}{\ensuremath{\mathbf{c}}}
\newcommand{\Vd}{\ensuremath{\mathbf{d}}}
\newcommand{\Vx}{\ensuremath{\mathbf{x}}}
\newcommand{\Vy}{\ensuremath{\mathbf{y}}}
\newcommand{\Vu}{\ensuremath{\mathbf{u}}}
\newcommand{\Vv}{\ensuremath{\mathbf{v}}}
\newcommand{\Vg}{\ensuremath{\mathbf{g}}}
\newcommand{\Vh}{\ensuremath{\mathbf{h}}}
\newcommand{\Vf}{\ensuremath{\mathbf{f}}}
\newcommand{\Vue}{\ensuremath{\mathbf{e_{1}}}}
\newcommand{\Vve}{\ensuremath{\mathbf{e_{2}}}}
\newcommand{\Symplectic}[2]{\ensuremath{\left\langle #1, #2 \right\rangle}}
\newcommand{\pSymplectic}[2]{\ensuremath{\left\langle #1, #2 \right\rangle_\sigma}}
\newcommand{\FFpn}{\mb{F}_p^n \times \mb{F}_p^n}
\newcommand{\ketvec}[1]{\ensuremath{\ket{\mathbf{#1}}}}
\newcommand{\SH}{\ensuremath{N}}
\newcommand{\PR}{\ensuremath{\mc{R}}}
\newcommand{\Sm}{\ensuremath{m}}
\theoremstyle{proposition}
\newtheorem{proposition}[theorem]{Proposition}
\begin{document}




\title{Stabilizer codes from modified symplectic form}
\titlerunning{Stabilizer codes from modified symplectic form} 

\author[1]{Tejas Gandhi}
\author[2]{Piyush Kurur}
\author[3]{Rajat Mittal}
\affil[1]{Dept. of Computer Science and Engineering, Indian Institute of Technology Kanpur, Kanpur, UP 208016, India \\\texttt{tgandhi@cse.iitk.ac.in}}
\affil[2]{Dept. of Computer Science and Engineering, Indian Institute of Technology Kanpur, Kanpur, UP 208016, India \\\texttt{ppk@cse.iitk.ac.in}}
\affil[3]{Dept. of Computer Science and Engineering, Indian Institute of Technology Kanpur, Kanpur, UP 208016, India \\\texttt{rmittal@cse.iitk.ac.in}}
\authorrunning{T. Gandhi \, P. Kurur \, and R. Mittal} 

\Copyright{Tejas Gandhi, Piyush Kurur, and Rajat Mittal}

\keywords{Quantum Error Correction, Stabilizer codes, Linear Codes, Symplectic form}

\maketitle

\begin{abstract}
  Stabilizer codes form an important class of quantum error correcting
codes which have an elegant theory, efficient error detection, and many
known examples.  Constructing stabilizer codes of length $n$ is
equivalent to constructing subspaces of $\FFpn$ which
are \emph{isotropic} under the symplectic bilinear form defined by
$\Symplectic{(\Va,\Vb)}{(\Vc,\Vd)} = \transpose{\Va} \Vd
- \transpose{\Vb} \Vc$.
As a result, many, but not all, ideas from the theory of classical
error correction can be translated to quantum error
correctio. One of the main
theoretical contribution of this article is to study stabilizer codes
starting with a different symplectic form.

In this paper, we concentrate on cyclic codes. Modifying the
symplectic form allows us to generalize the previous known
construction for linear cyclic stabilizer
code, and in the process,
circumvent some of the Galois theoretic no-go results proved
there. More importantly, this tweak in the symplectic form allows us
to make use of well known error correcting algorithms for cyclic codes
to give efficient quantum error correcting algorithms. Cyclicity of
error correcting codes is a \emph{basis dependent} property. Our codes
are no more \emph{cyclic} when they are derived using the standard
symplectic forms.\footnote{If we ignore the error correcting
properties like distance, all such symplectic forms can be converted
to each other via a basis transformation.}. Hence this change of
perspective is crucial from the point of view of designing efficient
decoding algorithm for these family of codes. In this context, recall
that for general codes, efficient decoding algorithms do not exist
if some widely believed complexity theoretic assumptions are true.

\tgandhi{Should we add reference about hardness of decoding problem, like \cite{iyer2015hardness}?}
\ppk{I added this citation.}

\end{abstract}


\section{Introduction}

Classical error correcting codes have been instrumental in various areas, 
not just in communication and data storage systems but even in complexity and cryptography. 
In the quantum setting, the major technique to construct error correcting codes is through stabilizers on which 
there exists a substantial body of 
research~\cite{steane1996error,calderbank1996good,rains1999nonbinary,ashikhmin2001nonbinary,calderbank1997quantum,calderbank1998quantum,ketkar2006nonbinary}. 

The theory of quantum information is usually formulated using Hilbert
spaces. Nonetheless, a stabilizer code of block length $n$ over the
$p$-ary alphabet (for some prime $p$) can be uniquely identified with a 
linear subspace $C$ of the space $\F[p]^{2n}$ over the finite field
$\F[p]$. This subspace essentially determines all
the important properties of the code like its distance and dimension
(Theorem~\ref{equivalence_stabilizer_isotropic}) and hence stabilizer
codes can be seen as classical additive codes of twice the block
length. However, for quantum stabilizer codes, the associated subspace
$C$ should be \emph{isotropic}: for any two vectors $\Vu=(\Va,\Vb)$
and $\Vv=(\Vc,\Vd)$ of $C$, the \emph{symplectic linear form}
$\Symplectic{\Vu}{\Vv} = \transpose{\Va}\Vd - \transpose{\Vb}\Vc$
should vanish \cite[Section II]{calderbank1998quantum}\cite[Section
  IV]{ketkar2006nonbinary}. Therefore, constructing quantum stabilizer
codes boils down to constructing an isotropic subspace of $\F[p]^{2n}$
(Theorem~\ref{equivalence_stabilizer_isotropic}). This additional
condition of isotropy is what differentiates quantum codes from
classical codes and often turns out to be a hindrance in transferring
results from classical error correction to the quantum world.

Our main theoretical contribution is to rethink the role played by the
form $\Symplectic{\cdot}{\cdot}$, which was determined by the choice of the
Weyl operators as the basis for quantum errors. The symplectic form
$\Symplectic{\cdot}{\cdot}$ captures the commutation relation between
these Weyl operators. By choosing a different set 
of Weyl operators, any form $\Symplectic{\Vu}{\Vv}_A = \transpose{\Vu} A
\Vv$ can be used for the construction of stabilizer codes\footnote{The matrix 
$A$ needs to be full rank and skew-symmetric for odd prime $p$. The $p=2$ case can be separately handled.}. 
The main idea of this paper is to generalize the
study of stabilizer codes by choosing a different symplectic form as
the starting point. With this change in perspective, we obtain the following results:

\begin{enumerate}
  \item We initiate the study of stabilizer codes based on non-standard symplectic forms. 
  While the freedom to choose the symplectic form is indeed liberating,
  for codes thus constructed, the joint Hamming weight no longer measures the distance of the
  code. Motivated by this difficulty, we formulate the right notion of distance in this context. 
  We identify key features of such symplectic forms required to make bounds on the distance possible.   
  
  \item To complement these theoretical results with concerete examples, we generalize the previous known 
    construction of linear cyclic
    codes by Dutta et. al.~\cite{DuttaK2011cyclic} and are able to work around
    certain no-go theorems proved there. For block lengths $n$ that
    divides $p^t +1$ for some \emph{odd} $t$, Dutta
    et.al.~\cite[Corollary IV.5]{DuttaK2011cyclic} (see also the Ph.D
    thesis \cite{dutta2012exploiting}) proved that there can be no
    linear cyclic stabilizer codes. This impossibility arise due to
    the Galois theoretical restrictions imposed on certain ideals due
    to the isotropy condition arising from the symplectic form
    $\Symplectic{\cdot}{\cdot}$. By modifying the underlying form, we
    are able to circumvent this barrier.

  \item Furthermore, we extend the efficient 
    decoding algorithm given in Dutta et.al.~\cite{dutta2012exploiting,DuttaK2011cyclic}, 
    which in turn uses the celebrated 
    Berlekamp-Massey-Welch~\cite{berlekamp1968algebraic,massey1969-shift-register,welch1986error}
    algorithm for classical cyclic codes. We are able to generalize the 
    decoding algorithm inspite of the fact that our codes are no longer cyclic.
    Notice that the cyclicity of a code is a basis dependent property and our codes are cyclic only 
    when viewed under the modified symplectic form. Thus, these error
    correcting algorithms would not have been possible if we were
    stuck with the standard symplectic form. We believe this is
    important as efficient decoding even for general classical codes
    are intractable~\cite{iyer2015hardness}.
\end{enumerate}

\section{Preliminaries}

\ppkold{I suggest the following notation through out the writing. We
  should use bold face roman letter like $\mathbf{x}$ for vectors over
  $\F^n$ and the $i$-th component of $\mathbf{x}$ with $x_i$. This
  will make it much clearer. I have added the command
  \texttt{\textbackslash{}ketvec\{x\}} should give you $\ketvec{x}$ }

In the quantum setting, a finite dimensional Hilbert space $\mc{H}$
plays the role of the alphabet. A \emph{quantum block code} $\mc{C}$
of \emph{length} $n$ is just a subspace of the tensor product
$\HS^{\tensor n}$. From now on, we assume that the alphabet space
$\mc{H}$ has a prime dimension $p$. When $p$ is $2$, the Hilbert space
$\mc{H}$ is the space of qubits. We fix an orthonormal basis $\left\{
\ketvec{x} \mid \mathbf{x} \in \F[p] \right\}$ for $\mc{H}$. This is
analogous to picking $\F[p]$ as the alphabet set in the classical
case. Having picked such a basis, a natural basis for the space
$\mc{H}^{\tensor n}$ is given by the set $\left\{ \ketvec{x} \mid \Vx
\in \F[p]^n \right\}$ where $\ketvec{x}$ denotes the state
$\ket{x_1}\tensor \cdots \tensor \ket{x_n}$, where $x_i \in \mb{F}_p$ is the
$i$-th component of $\mathbf{x}$.

For any $\Va$ and $\Vb$ in $\mb{F}_{p}^n$, define the unitary
operators

\begin{equation}\label{shift_operator_equation}
U_{\Va} \ketvec{x} = \ketvec{x+a} \textrm{ and } V_{\Vb}\ketvec{y}=
\omega^{\transpose{\Vb}{\Vy}}\ketvec{y},
\end{equation}
\begin{equation}\label{weyl_commutation_equation}
V_{\Vb}U_{\Va}=\omega^{\Vb^T\Va}U_{\Va}V_{\Vb}
\end{equation}
where $\omega$ is some fixed primitive $p$-th root of unity.  These
operators are called the \emph{Weyl operators} and are used to model
errors in the quantum setting: $U_{\Va}$ corresponds to the \emph{bit
  flips} in the classical setting and $V_{\Vb}$ is the \emph{phase
  flip}. The set of all Weyl operators $U_{\Va} V_{\Vb}$ forms the
basis of the operator space $\mc{B}\left( \HS^{\tensor n} \right)$.
It follows from the general theory of quantum mechanics that any
quantum error in transmission can essentially be modelled using the
Weyl operators. In particular, the group generated by these operators
are what we call the error group.

\begin{definition}\label{error_group}
  Let $p$ be an odd prime. The \emph{error group} $\mc{E}$ associated
  with the block length $n$ is the group of all operators of the from
  $\zeta U_\Va V_\Vb$ where $\zeta$ is a $p$-th root of unity and
  $\Va$ and $\Vb$ are elements of $\F[p]^n$.
\end{definition}

When the characteristic $p$ is $2$, the error group is similar, except that
the scalar factor $\zeta$ is allowed to vary over all the $4$-th roots
of unity, $\{ \pm 1, \pm \iota \}$.

\subsection{Stabilizer Code}


\rmittalold{Gottesman thesis covers only the case of $\mb{F}_2$. We need
  to find a better reference for the theorems below.}

\tgandhiold{Fixed the Fp citation part}

Stabilizer codes are subspaces that are fixed by some subset $\mc{S}$
of the error group $\mc{E}$. More precisely, for any subset $\mc{S}$,
the subspace

\begin{equation*}
\mc{C}_{\mc{S}} = \left\{ \ket{\psi} \in \HS^{\tensor n} \mid ~ \forall~S \in \mc{S}~~ S\ket{\psi}=\ket{\psi} \right\}
\end{equation*}
is called the stabilizer code associated with the subset
$\mc{S}$. First introduced by Gottesman~\cite{gottesman1997stabilizer}
for the binary alphabet and subsequently
generalized~\cite{rains1999nonbinary,AP02,ketkar2006nonbinary}, the
class of stabilizer codes plays a role analogous to the role played by
linear codes in the classical setting. The following theorem specifies
the conditions under which the code $\mc{C}_{\mc{S}}$ is non-trivial,
 i.e, it has non-zero dimension.

\begin{theorem}\label{stabilizer_group_theorem}
  \cite{gottesman1997stabilizer} For a subset $\mc{S}$ of $\mc{E}$,
  the associated stabilizer code $\mc{C}_{\mc{S}}$ is non trivial if
  and only if
  \begin{enumerate}
    \item $\mc{S}$ forms an Abelian subgroup of the error group $\mc{E}$.
    \item The operator $\zeta I$ \emph{does not} belong to $\mc{S}$
      for any nontrivial root of unity $\zeta$.
  \end{enumerate}

\end{theorem}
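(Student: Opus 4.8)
The plan is to prove both directions of the equivalence separately, treating the Abelian condition and the "no nontrivial scalar" condition as the two obstacles to non-triviality.

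\medskip

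\noindent\textbf{Forward direction (necessity).} Assume $\mc{C}_{\mc{S}}$ is non-trivial, i.e.\ there is a nonzero $\ket{\psi}$ fixed by every $S \in \mc{S}$. First I would observe that $\mc{S}$ must be closed under multiplication and inverses when restricted to its action on the fixed space: if $S_1\ket{\psi}=\ket{\psi}$ and $S_2\ket{\psi}=\ket{\psi}$ then $S_1S_2\ket{\psi}=\ket{\psi}$ and $S_1^{-1}\ket{\psi}=\ket{\psi}$, so without loss of generality we may replace $\mc{S}$ by the group it generates (this does not change $\mc{C}_{\mc{S}}$). To see the group must be Abelian, take $S_1, S_2 \in \mc{S}$; since these are (scalar multiples of) Weyl operators $\zeta_i U_{\Va_i}V_{\Vb_i}$, the commutator $S_1 S_2 S_1^{-1} S_2^{-1}$ is a scalar $\omega^{c}$ by the Weyl commutation relation~\eqref{weyl_commutation_equation} (specifically $c = \transpose{\Vb_1}\Va_2 - \transpose{\Vb_2}\Va_1$). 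But this scalar fixes $\ket{\psi}$, forcing $\omega^{c}=1$, hence $S_1 S_2 = S_2 S_1$. The second condition is immediate: if $\zeta I \in \mc{S}$ with $\zeta \neq 1$, then $\zeta\ket{\psi} = \ket{\psi}$ for the nonzero vector $\ket{\psi}$, a contradiction.

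\medskip

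\noindent\textbf{Reverse direction (sufficiency).} Assume $\mc{S}$ is an Abelian subgroup not containing any $\zeta I$ with $\zeta$ a nontrivial root of unity. The plan is to exhibit a nonzero fixed vector via the averaging/projection idea. Consider the operator $P = \frac{1}{|\mc{S}|}\sum_{S \in \mc{S}} S$. Using that $\mc{S}$ is a finite group, $P^2 = P$ and $SP = P$ for all $S\in\mc{S}$, so $P$ is the projection onto $\mc{C}_{\mc{S}}$; it remains to show $P \neq 0$, i.e.\ $\operatorname{tr}(P) \neq 0$. Here I would use that each non-identity element $S \in \mc{S}$ is a scalar times a nontrivial Weyl operator $U_{\Va}V_{\Vb}$ with $(\Va,\Vb)\neq 0$, which is traceless (the diagonal of $U_\Va V_\Vb$ in the computational basis is zero unless $\Va = 0$, and when $\Va = 0$ it is $\sum_\Vy \omega^{\transpose{\Vb}\Vy}=0$ unless $\Vb=0$ too). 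Crucially, the hypothesis that no $\zeta I$ lies in $\mc{S}$ guarantees that the only element of $\mc{S}$ with nonzero trace is $I$ itself — without this, a term like $\zeta I$ would contribute and one could have cancellation. Hence $\operatorname{tr}(P) = \frac{1}{|\mc{S}|}\dim(\HS^{\tensor n}) = p^{n}/|\mc{S}| > 0$, so $\mc{C}_{\mc{S}}$ has dimension $p^n/|\mc{S}| \geq 1$.

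\medskip

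\noindent I expect the main obstacle to be the bookkeeping around scalar factors: one must carefully argue that in an Abelian subgroup of $\mc{E}$ avoiding nontrivial scalars, every nonidentity element still "looks like" a genuine nontrivial Weyl operator up to a scalar and therefore is traceless, and that the subgroup generated by $\mc{S}$ (used in the forward direction) still avoids $\zeta I$ — this last point follows because the commutators already collapsed to $1$, so the generated group introduces no new scalars beyond those already in $\mc{S}$. The rest is the standard group-averaging projection argument. Since the excerpt cites~\cite{gottesman1997stabilizer} for this theorem, I would also remark that the odd-$p$ case is what is stated in Definition~\ref{error_group}, and the $p=2$ case is handled identically after replacing $p$-th roots of unity by $4$-th roots.
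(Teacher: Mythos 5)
The paper does not give a proof of this theorem; it is cited from Gottesman's thesis. Your argument is the standard one and it is correct: the forward direction uses that the commutator of any two elements of the error group is a scalar which must fix the nonzero code vector (hence equals $1$), and the reverse direction is the group-averaging projection $P=\frac{1}{|\mc{S}|}\sum_{S\in\mc{S}}S$ together with the observation that every non-identity element $\zeta U_\Va V_\Vb$ with $(\Va,\Vb)\neq 0$ is traceless, while elements with $(\Va,\Vb)=0$ and $\zeta\neq 1$ are excluded by hypothesis, so $\operatorname{tr}(P)=p^n/|\mc{S}|>0$. One small point worth flagging: as literally stated, the theorem asserts that $\mc{S}$ itself is a subgroup, which fails for, say, $\mc{S}=\{U_\Va\}$ (a nontrivial eigenspace but not a group); your reading—passing to the generated group, which leaves $\mc{C}_{\mc{S}}$ unchanged and inherits the no-nontrivial-scalar property because the commutators are forced to be $1$—is the correct and standard way to interpret the statement, and you do address this explicitly.
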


A subgroup $\mc{S}$ satisfying the above conditions is called a
\emph{stabilizer subgroup} of the error group.

The \emph{centralizer} $\C{\mc{S}}$ is the set of all operators in
$\mc{E}$ that commute with all the operators of $\mc{S}$. It determines 
the error correcting properties of the
code $\mc{C}_{\mc{S}}$: an error in $\mc{S}$ does not affect the code
space whereas an error in $\mc{E} \setminus \C{\mc{S}}$ leaves a
nontrivial phase on every vector in $\mc{C}_{\mc{S}}$ and hence can be
detected. It is precisely the errors in $\C{\mc{S}} \setminus \mc{S}$
that modifies a vector in $\mc{C}_{\mc{S}}$ to another vector in
$\mc{C}_{\mc{S}}$ and hence cannot be
detected~\cite{gottesman1997stabilizer}. Thus the error correcting
parameters, like the \emph{distance} of the code, depend on the
centralizer.


Finding a stabilizer subgroup can be reduced to a
problem of designing special subspaces of $\F[p]^n \times \F[p]^n$.
  Given two vectors $\Vu = (\Va,\Vb)$ and
$\Vv=(\Vc,\Vd)$ in $\FFpn $, define the \emph{symplectic
  inner product}, $\Symplectic{\Vu}{\Vv}$, as the scalar
$\transpose{\Va} \Vd - \transpose{\Vb} \Vc$. A subspace $S$ of $\FFpn$
is called \emph{isotropic} if and only if for any two vectors $\Vu$
and $\Vv$ in $S$, $\Symplectic{\Vu}{\Vv} = 0$. From the following theorem,
designing stabilizer subgroup is essentially equivalent to
constructing isotropic subspace.

\begin{theorem} \label{equivalence_stabilizer_isotropic}
  \cite{calderbank1998quantum,AP02,ketkar2006nonbinary}
  \begin{enumerate}
   \item Let $\mathcal{S}$ be a stabilizer subgroup of the error
    group then the subset
    \[ S=\set*{ (\Va,\Vb) \given \zeta U_\Va V_\Vb \in \mc{S}}\]
    is isotropic.
  \item Let $S$ be any isotropic subset of $\FFpn$, then the subgroup
    \[
    \mc{S} = \left\{ \rho(\Va,\Vb) U_{\Va}V_{\Vb} \mid (\Va,\Vb) \in S \subseteq \FFpn \right\}
    \]
    forms a stabilizer subgroup. In the above expression
    $\rho(\Va,\Vb)$ is $\omega^{\frac{1}{2}\Va^T \Vb}$ if $p\neq 2$ and $\iota^{\Va^T \Vb}$ if $p=2$.
  \end{enumerate}
\end{theorem}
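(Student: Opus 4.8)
The plan is to reduce the whole statement to a single computation with Weyl operators. From \eqref{shift_operator_equation} we have $U_\Va U_\Vc = U_{\Va+\Vc}$ and $V_\Vb V_\Vd = V_{\Vb+\Vd}$, and from \eqref{weyl_commutation_equation}, $V_\Vb U_\Vc = \omega^{\transpose{\Vb}\Vc} U_\Vc V_\Vb$; composing these gives
\[
  (U_\Va V_\Vb)(U_\Vc V_\Vd) = \omega^{\transpose{\Vb}\Vc}\, U_{\Va+\Vc} V_{\Vb+\Vd},
\]
and, comparing with the product in the opposite order,
\[
  (U_\Va V_\Vb)(U_\Vc V_\Vd) = \omega^{-\Symplectic{(\Va,\Vb)}{(\Vc,\Vd)}}\,(U_\Vc V_\Vd)(U_\Va V_\Vb).
\]
Thus two Weyl operators commute exactly when the symplectic product of their labels is zero. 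I would also use the elementary fact that the operators $U_\Va V_\Vb$ are linearly independent (they are a basis of $\mc{B}(\mc{H}^{\tensor n})$); in particular $U_\Va V_\Vb = \zeta I$ forces $\Va = \Vb = \mathbf 0$ and $\zeta = 1$, as one checks at once by evaluating on $\ketvec{0}$.

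For part~(1): if $\mc{S}$ is a stabilizer subgroup it is abelian, so any two elements $\zeta_1 U_\Va V_\Vb, \zeta_2 U_\Vc V_\Vd \in \mc{S}$ commute (scalars are central), and the computation above gives $\Symplectic{(\Va,\Vb)}{(\Vc,\Vd)} = 0$. That is exactly isotropy of $S$. (Linear independence together with condition~(2) of Theorem~\ref{stabilizer_group_theorem} also shows that the correspondence $(\Va,\Vb) \leftrightarrow \zeta U_\Va V_\Vb$ is well defined, and one sees in the same way that $S$ is in fact a linear subspace.)

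For part~(2), reading ``isotropic subset'' as ``isotropic subspace'' in keeping with the definition given above: by Theorem~\ref{stabilizer_group_theorem} it suffices to show that $\mc{S} = \{\rho(\Va,\Vb)U_\Va V_\Vb : (\Va,\Vb) \in S\}$ is an abelian subgroup of $\mc{E}$ containing no scalar besides $I$. Each $\rho(\Va,\Vb)$ is a $p$-th (resp.\ $4$-th) root of unity, so $\mc{S} \subseteq \mc{E}$; abelianness is immediate from the commutation formula and isotropy. The point of the phase $\rho$ is to absorb the Weyl cocycle $\omega^{\transpose{\Vb}\Vc}$: for odd $p$, with $\rho(\Va,\Vb) = \omega^{\frac12 \transpose{\Va}\Vb}$ ($\tfrac12$ taken in $\F[p]$), a short expansion yields
\[
  \rho(\Va,\Vb)\,\rho(\Vc,\Vd)\,\omega^{\transpose{\Vb}\Vc} = \rho(\Va+\Vc,\Vb+\Vd)\,\omega^{-\frac12 \Symplectic{(\Va,\Vb)}{(\Vc,\Vd)}},
\]
so that $(\rho(\Va,\Vb)U_\Va V_\Vb)(\rho(\Vc,\Vd)U_\Vc V_\Vd) = \rho(\Va+\Vc,\Vb+\Vd)\,U_{\Va+\Vc}V_{\Vb+\Vd} \in \mc{S}$, using that $S$ is closed under addition and that the leftover phase is trivial because $S$ is isotropic. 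A finite nonempty subset of a group closed under multiplication is a subgroup, so $\mc{S}$ is a subgroup (and it contains $\rho(\mathbf 0,\mathbf 0)U_{\mathbf 0}V_{\mathbf 0} = I$). Lastly, $\rho(\Va,\Vb)U_\Va V_\Vb = \zeta I$ forces $\Va = \Vb = \mathbf 0$ and hence $\zeta = \rho(\mathbf 0,\mathbf 0) = 1$, so condition~(2) of Theorem~\ref{stabilizer_group_theorem} holds and $\mc{S}$ is a stabilizer subgroup.

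The one genuinely substantive step is the cocycle identity for $\rho$: one must verify that, after combining $\rho(\Va,\Vb)$, $\rho(\Vc,\Vd)$, and the Weyl phase $\omega^{\transpose{\Vb}\Vc}$, the residual phase is precisely a scalar multiple of $\Symplectic{(\Va,\Vb)}{(\Vc,\Vd)}$ — and not some other bilinear expression — which is what makes it vanish on an isotropic subspace. Everything else is either the single commutation computation or a soft finiteness argument. The characteristic-$2$ case lies outside our setting (Definition~\ref{error_group}); it is handled with the separate normalization $\rho(\Va,\Vb) = \iota^{\transpose{\Va}\Vb}$ and the analogous identity modulo $4$, which we do not carry out here.
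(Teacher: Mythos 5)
Your proof is correct and follows the same route the paper sketches: reduce both directions to the commutation relation $(U_\Va V_\Vb)(U_\Vc V_\Vd) = \omega^{-\Symplectic{(\Va,\Vb)}{(\Vc,\Vd)}}(U_\Vc V_\Vd)(U_\Va V_\Vb)$ and to Theorem~\ref{stabilizer_group_theorem}, with the cocycle computation for $\rho$ supplying closure of $\mc{S}$ under products. The paper simply cites references and records the commutation fact; your proposal is the fully worked-out version of that same argument.
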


The above theorem follows from the fact that for two vectors $\Vu =
(\Va,\Vb)$ and $\Vv = (\Vc,\Vd)$, the Weyl operators $W_{\Vu} =
U_{\Va}V_{\Vb}$ and $W_{\Vv} = U_{\Vc}V_{\Vd}$ commute if and only if
the symplectic inner product $\Symplectic{\Vu}{\Vv} = 0$.  In view of
the above theorem, from now on, stabilizer codes will be characterized
by the associated isotropic subspaces $S$. We also define the
centralizer subspace which corresponds to the centralizer subgroup
$\C{\mc{S}}$.

\begin{definition}[Centralizer subspace]
  Let $S$ be any subspace of $\FFpn$. The \emph{centralizer subspace}
  of $S$, denoted by $\C{S}$, is the subspace of all vectors $\Vu$
  such that $\Symplectic{\Vu}{\Vx} = 0$ for all $\Vx$ in $S$.
\end{definition}

The Hamming weight of an error measures the number of bits that the
error corrupts in the classical setting. For quantum errors, the
\emph{joint weight} is the corresponding measure.

\begin{definition}
  Let $\Vu = (\Va,\Vb)$ be any vector in the vector space $\FFpn$.
  The joint weight $\wt{\Vu}$ is defined as the number of indices $1
  \leq i \leq n$ such that the pair $(a_i,b_i)$ is not $(0,0)$. The joint weight of 
a subset $S$, $\wt{S}$, is the minimum of the joint weights of elements in $S\setminus {0}$. 
\end{definition}

We summarize the error correcting properties of the stabilizer code in
the following theorem~\cite{calderbank1998quantum,AP02}.

\begin{theorem} \label{error_correction_stabilizer}
  Let $S$ be an isotropic subspace of $\FFpn$ and let $\mc{C}_S$ be
  the associated stabilizer code. Then

  \begin{enumerate}
  \item The dimension of $S$ as a vector space over $\F[p]$ is $n - k$
    for some non-negative integer $k$. The dimension of $\mc{C}_S$, as
    a Hilbert space is $p^k$.

  \item If every element of $\C{S}\setminus S$ has joint weight at
    least $d$, then the associated code $\mc{C}_S$ can \emph{detect}
    up to $d-1$ errors and correct up to $\left\lfloor \frac{d-1}{2}
    \right\rfloor$ errors.
  \end{enumerate}
\end{theorem}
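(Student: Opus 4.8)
The plan is to treat the two parts separately, working throughout with the abelian subgroup $\mc{S}$ attached to $S$ by Theorem~\ref{equivalence_stabilizer_isotropic}(2) and with its averaging operator
\[
P \;=\; \frac{1}{\ord{\mc{S}}}\sum_{T\in\mc{S}} T .
\]
A preliminary step is to check that $P$ is the orthogonal projector onto $\mc{C}_S$: it is self-adjoint and idempotent because $\mc{S}$ is a finite group closed under inverses, and $TP = P$ for every $T\in\mc{S}$, so its image is exactly the common $+1$-eigenspace of $\mc{S}$, i.e. $\mc{C}_S$.

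For part (1) I would first observe that $\Symplectic{\cdot}{\cdot}$ is non-degenerate on all of $\FFpn$, its Gram matrix $\left(\begin{smallmatrix} 0 & I\\ -I & 0\end{smallmatrix}\right)$ being invertible; hence $\dim\C{S} = 2n - \dim S$. Isotropy gives $S\subseteq\C{S}$, so $\dim S \le 2n-\dim S$, i.e. $\dim S = n-k$ for a non-negative integer $k$. For the Hilbert-space dimension I would compute $\dim\mc{C}_S = \operatorname{tr} P$. Since $(\Va,\Vb)\mapsto \rho(\Va,\Vb)\,U_\Va V_\Vb$ is a bijection from $S$ onto $\mc{S}$, we have $\ord{\mc{S}} = p^{n-k}$; and every Weyl operator $\zeta U_\Va V_\Vb$ other than the identity is traceless ($U_\Va$ has zero diagonal unless $\Va = 0$, and $\sum_{\Vx\in\F[p]^n}\omega^{\transpose{\Vb}\Vx} = 0$ unless $\Vb = 0$). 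As condition (2) of Theorem~\ref{stabilizer_group_theorem} excludes $\zeta I\in\mc{S}$ for $\zeta\ne 1$, only the identity term contributes and $\operatorname{tr} P = p^n/\ord{\mc{S}} = p^k$.

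For part (2) I would use the Knill--Laflamme criterion in the form: $\mc{C}_S$ corrects every error acting on at most $t$ coordinates iff $P\,E^\dagger F\,P$ is a scalar multiple of $P$ for all Weyl operators $E,F$ of joint weight at most $t$, and it detects every error of joint weight at most $s$ iff $P\,E\,P$ is a scalar multiple of $P$ for all Weyl operators $E$ of joint weight at most $s$; expanding an arbitrary error on few coordinates in the Weyl basis reduces to this case. The crux is the standard trichotomy for a Weyl operator $E$ with associated vector $\Vu$: (i) if $\Vu\in S$ then $E$ is a root-of-unity phase times the canonical generator $T_\Vu\in\mc{S}$, so $PEP$ is that phase times $P$; (ii) if $\Vu\notin\C{S}$ there is $\Vv\in S$ with $\Symplectic{\Vu}{\Vv}\ne 0$, so by \eqref{weyl_commutation_equation} $E$ and $T_\Vv$ fail to commute, $ET_\Vv = \omega^{c}T_\Vv E$ with $\omega^{c}\ne 1$, whence $PEP = PET_\Vv P = \omega^{c}PT_\Vv EP = \omega^{c}PEP$ forces $PEP = 0$; (iii) the only remaining case is $\Vu\in\C{S}\setminus S$. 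Now if $E,F$ each have joint weight at most $t = \lfloor(d-1)/2\rfloor$, then $E^\dagger F$ is, up to a phase, the Weyl operator attached to $\Vu_F-\Vu_E$, whose joint weight is at most $\wt{\Vu_E}+\wt{\Vu_F}\le d-1<d$; by the hypothesis on $\C{S}\setminus S$ this vector cannot lie in $\C{S}\setminus S$, so case (iii) is impossible and $PE^\dagger F P$ is a scalar multiple of $P$. The identical weight count for a single Weyl error of joint weight at most $d-1$ gives detection.

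The step I expect to be the main obstacle is the phase bookkeeping: one must check that $E^\dagger F$ really is the canonical stabilizer generator of $\Vu_F-\Vu_E$ up to a root of unity (so case (i) yields $PEP = (\text{phase})\,P$ rather than some other operator), and that the commutator exponent $c$ in case (ii) is $\pm\Symplectic{\Vu}{\Vv}$ modulo $p$, hence nonzero precisely when $\Vu\notin\C{S}$. Both follow from the composition law $U_\Va V_\Vb\,U_\Vc V_\Vd = \omega^{\transpose{\Vb}\Vc}\,U_{\Va+\Vc}V_{\Vb+\Vd}$ obtained from \eqref{weyl_commutation_equation}, together with the explicit normalization $\rho$ from Theorem~\ref{equivalence_stabilizer_isotropic}. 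Everything else is linear algebra over $\F[p]$ and the traceless-ness of nonidentity Weyl operators.
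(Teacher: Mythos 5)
The paper does not supply a proof of this theorem: it is stated as background with a citation to Calderbank--Rains--Shor--Sloane and Ashikhmin--Knill, so there is no in-paper argument to compare yours against. Your proof is the standard one from those references and is correct: the group-averaging operator $P=\frac{1}{\ord{\mc{S}}}\sum_{T\in\mc{S}}T$ is the orthogonal projector onto the joint $+1$-eigenspace; tracelessness of non-scalar Weyl operators together with the exclusion of $\zeta I$ ($\zeta\neq 1$) from $\mc S$ gives $\dim\mc{C}_S=\operatorname{tr}P=p^n/\ord{\mc S}=p^k$, with $k\geq 0$ following from non-degeneracy of $\Symplectic{\cdot}{\cdot}$ and $S\subseteq\C{S}$; and the Knill--Laflamme trichotomy ($\Vu\in S$ gives a phase, $\Vu\notin\C{S}$ forces $PEP=0$ via a non-commuting stabilizer, $\Vu\in\C{S}\setminus S$ is excluded by the weight hypothesis since $\wt{\Vu_F-\Vu_E}\leq\wt{\Vu_E}+\wt{\Vu_F}\leq d-1$) yields both the correction and detection bounds. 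The only cosmetic slip is the phrase \emph{``other than the identity is traceless''}: what you mean, and what your subsequent sentence actually uses, is that any Weyl operator $\zeta U_{\Va}V_{\Vb}$ with $(\Va,\Vb)\neq(\mathbf 0,\mathbf 0)$ is traceless, while the scalar multiples of $I$ are excluded from $\mc S$ except for $I$ itself; you may want to phrase it that way to avoid ambiguity. Otherwise the argument is complete.
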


Often it is easier to lower bound the distance of the code $\mc{C}_S$ by the joint weight $\wt{\C{S}}$. 
This is known as the \emph{pure distance} of the code.


\rmittalold{we haven't defined the centralizer of a subspace}
\tgandhiold{Added the centralizer as a subspace}


\section{Modifying the symplectic form}

The isotropy condition associated with the symplectic form
$\Symplectic{\cdot}{\cdot}$ is essentially the only challenge that
prevents us from lifting constructions of classical linear codes to quantum
stabilizer codes
(Theorem~\ref{equivalence_stabilizer_isotropic}). This
symplectic form sometimes even leads to certain Galois theoretic no-go
results~\cite{DuttaK2011cyclic,dutta2012exploiting}. This is what motivates us to modify
the symplectic form and circumvent such impossibility theorems.

Let $p$ be an odd prime and let $A$ be any $2n \times 2n$
skew-symmetric matrix of full rank with entries in $\F[p]$. By
appropriate relabelling of the Weyl operators, the theory of
stabilizer codes can be built where the underlying symplectic form is
given by $\Symplectic{\Vu}{\Vv}_A = \transpose{\Vu} A \Vv$. This is
because, there is always a basis transformation $C$ of $\FFpn$ such
that the $\Symplectic{\cdot}{\cdot}_{\transpose{C}AC}$ is the standard
symplectic $\Symplectic{\cdot}{\cdot}$~\cite[Chapter XV, Corollary
  8.2]{lang:algebra}. The theory of codes then needs to be built out
of the Weyl operators $W_{C\Vu}$ instead of the standard Weyl
operators $W_{\Vu}$. A similar change of symplectic form can be done
in the case when $p$ is 2 as well.

The joint Hamming weight of the vector $(\Va,\Vb)$ measures the number
of indices corrupted by the error $U_\Va V_\Vb$. If we restrict our
attention to symplectic forms given by matrices
$\left(\begin{array}{cc}0&\sigma\\-\sigma^T&0\end{array}\right)$, for
  some $n\times n$ \emph{permutation matrix} $\sigma$, as opposed to
  general forms, a variant of joint Hamming weight would serve the
  purpose of measuring errors -- the weight of $(\Va,\Vb)$ in the
  modified setting should be the joint Hamming weight of $(\Va,\sigma
  \Vb)$, i.e. permute the second component before computing
  weight. Furthermore, if the permutation $\sigma$ in the above matrix
  is also an involution, i.e.  $\transpose{\sigma} = \sigma$, the
  associated symplectic form simplifies further: for vectors
  $\Vu=(\Va,\Vb)$ and $\Vv=(\Vc,\Vd)$, we define the
  \emph{$\sigma$-symplectic inner product} as follows:

  \[ \pSymplectic{\Vu}{\Vv} = \transpose{\Va} \sigma\Vd -  \transpose{\Vb} \sigma\Vc . \]

  The notion of isotropy and centralizer can now be formalized in this
  new setting.

\begin{definition}
  A subspace $S$ of $\FFpn$ is called a \emph{$\sigma$-isotropic
    subspace} if for all $\Vu$ and $\Vv$ $\in$ $S$, $\pSymplectic{\Vu}{\Vv}
  = 0$.

  For any subspace $S$ of $\FFpn$ the \emph{$\sigma$-centralizer}
  $\C{S}$ is the subspace of all vectors $\Vx$ in $\FFpn$ such that
  $\pSymplectic{\Vx}{\Vu} = 0$ for all $\Vu$ in $S$.
\end{definition}

We have the following result that connects standard isotropy and
$\sigma$-isotropy.

\begin{lemma}\label{lem-sigma-standard-connection}

  For any subset $S$ of $\FFpn$, let $S^{\sigma}$ denote the set of
  all elements $(\Va,\sigma\Vb)$ such that $(\Va,\Vb)\in S$, then.
  \begin{enumerate}
  \item $S$ is $\sigma$-isotropic if and only if $S^{\sigma}$ is isotropic
  \item $\C{S}$ is a $\sigma$-centralizer of $S$ if and only if
    $\C{S}^\sigma$ is a centralizer of $S^{\sigma}$.
  \end{enumerate}
\end{lemma}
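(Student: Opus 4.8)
The plan is to reduce everything to a direct computation using the key algebraic identity that relates $\pSymplectic{\cdot}{\cdot}$ to $\Symplectic{\cdot}{\cdot}$ via the map $(\Va,\Vb)\mapsto(\Va,\sigma\Vb)$. First I would observe that for vectors $\Vu=(\Va,\Vb)$ and $\Vv=(\Vc,\Vd)$ in $\FFpn$, writing $\Vu^\sigma=(\Va,\sigma\Vb)$ and $\Vv^\sigma=(\Vc,\sigma\Vd)$, we have
\[
\Symplectic{\Vu^\sigma}{\Vv^\sigma} = \transpose{\Va}(\sigma\Vd) - \transpose{(\sigma\Vb)}\Vc = \transpose{\Va}\sigma\Vd - \transpose{\Vb}\transpose{\sigma}\Vc = \transpose{\Va}\sigma\Vd - \transpose{\Vb}\sigma\Vc = \pSymplectic{\Vu}{\Vv},
\]
where the third equality uses that $\sigma$ is an involution, $\transpose{\sigma}=\sigma$. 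This single identity is the engine behind both parts.

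For part (1), the forward direction is immediate: if $S$ is $\sigma$-isotropic, then for any two elements of $S^\sigma$, say $\Vu^\sigma$ and $\Vv^\sigma$ with $\Vu,\Vv\in S$, the identity gives $\Symplectic{\Vu^\sigma}{\Vv^\sigma}=\pSymplectic{\Vu}{\Vv}=0$, so $S^\sigma$ is isotropic. The converse is symmetric, using that the map $(\Va,\Vb)\mapsto(\Va,\sigma\Vb)$ is an involution on $\FFpn$ (again because $\sigma^2=I$), so $(S^\sigma)^\sigma=S$; apply the forward direction with $S^\sigma$ in place of $S$.

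For part (2), I would likewise unwind the definitions. A vector $\Vx$ lies in the $\sigma$-centralizer $\C{S}$ iff $\pSymplectic{\Vx}{\Vu}=0$ for all $\Vu\in S$, which by the identity is equivalent to $\Symplectic{\Vx^\sigma}{\Vu^\sigma}=0$ for all $\Vu\in S$, i.e. for all $\Vu^\sigma\in S^\sigma$; that says exactly $\Vx^\sigma$ lies in the standard centralizer of $S^\sigma$. Hence $\C{S}^\sigma$ equals the centralizer of $S^\sigma$ as sets. One small point to note here is that the $\sigma$-symplectic form is, up to the involution $\sigma$, still skew-symmetric in the appropriate sense, so $\pSymplectic{\Vx}{\Vu}=0$ for all $\Vu\in S$ is equivalent to $\pSymplectic{\Vu}{\Vx}=0$ for all $\Vu\in S$, matching the asymmetry in the two centralizer definitions; this follows because $\Symplectic{\cdot}{\cdot}$ itself is skew-symmetric and the identity transports that property.

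I do not expect any genuine obstacle here — the statement is essentially a bookkeeping lemma. The only thing to be careful about is making the involution property $\sigma=\transpose{\sigma}$, $\sigma^2=I$ explicit at each point where it is used (it is used both to get $\transpose{(\sigma\Vb)}=\transpose{\Vb}\sigma$ and to invert the map $S\mapsto S^\sigma$), and to phrase the centralizer equivalence carefully given that $\C{S}$ is defined with $\Vx$ in the first slot while isotropy/standard-centralizer conventions may place it in either slot.
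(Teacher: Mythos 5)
Your proof is correct and takes essentially the same approach as the paper: both rest on the single identity $\Symplectic{(\Va,\sigma\Vb)}{(\Vc,\sigma\Vd)} = \pSymplectic{(\Va,\Vb)}{(\Vc,\Vd)}$, which follows from $\transpose{\sigma}=\sigma$. The paper simply states the identity and lets the two claims fall out, whereas you spell out the unwinding and the role of $\sigma^2=I$ in inverting the map $S\mapsto S^\sigma$; the content is identical.
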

\begin{proof}
  Since $\sigma$ is an involution we have $\transpose{\sigma} =
  \sigma$. The proof then follows from the identity
  \[ \pSymplectic{(\Va,\Vb)}{(\Vc,\Vd)} = \Symplectic{(\Va,\sigma\Vb)}{(\Vc,\sigma{\Vd})}.
  \]
 \end{proof} 

In view of Theorem~\ref{equivalence_stabilizer_isotropic} and the
previous lemma, it follows that constructing stabilizer codes is
equivalent to constructing $\sigma$-isotropic subspaces of $\FFpn$. We
have the following variant of
Theorem~\ref{error_correction_stabilizer} for $\sigma$-isotropic sets.

\begin{theorem} \label{thm:error_correction_sigma_stabilizer}
  Let $S$ be a $\sigma$-isotropic subspace of $\FFpn$ with
  $\sigma$-centralizer $\C{S}$ then

  \begin{enumerate}
  \item The dimension of $S$ as a vector space over $\FFpn$ is at most $n$, say $n - k$.
        Using $S$, we can construct a stabilizer code of dimension $p^k$.

  \item Suppose, every element of $\C{S}\setminus S$ has joint weight at least $d$,
    then the associated stabilizer code has joint weight at least
    $\left\lfloor \frac{d+1}{2} \right\rfloor$ and correct up
    to $\left\lfloor \frac{d-1}{4} \right\rfloor$ errors.
  \end{enumerate}
\end{theorem}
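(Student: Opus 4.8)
The strategy is to reduce everything to the already-established standard case (Theorems~\ref{equivalence_stabilizer_isotropic} and~\ref{error_correction_stabilizer}) via the dictionary $S \leftrightarrow S^{\sigma}$ supplied by Lemma~\ref{lem-sigma-standard-connection}, and then to control how the joint weight changes under the map $(\Va,\Vb)\mapsto(\Va,\sigma\Vb)$. For part~(1): the map $(\Va,\Vb)\mapsto(\Va,\sigma\Vb)$ is an $\F[p]$-linear bijection of $\FFpn$, so $\dim_{\F[p]} S = \dim_{\F[p]} S^{\sigma}$; since $S^{\sigma}$ is isotropic by Lemma~\ref{lem-sigma-standard-connection}, Theorem~\ref{error_correction_stabilizer}(1) gives $\dim S^{\sigma} = n-k$ for some $k\ge 0$, hence $\dim S = n-k$, and the associated stabilizer code (built from the isotropic set $S^{\sigma}$, equivalently from $S$ via the $\sigma$-form) has dimension $p^{k}$.

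For part~(2), the key observation is that a permutation $\sigma$ fixes the support structure \emph{coordinatewise in the second block}, but it does not fix the \emph{joint} support, since the $i$-th coordinate of $(\Va,\sigma\Vb)$ pairs $a_i$ with $(\sigma\Vb)_i = b_{\sigma^{-1}(i)}$. Thus for a vector $\Vu=(\Va,\Vb)$ one has $\wt{\Vu^{\sigma}} \le 2\,\wt{\Vu}$ always (each of the at most $\wt{\Vu}$ nonzero positions of $\Va$ and at most $\wt{\Vu}$ nonzero positions of $\Vb$ contributes at most one index to the joint support of $(\Va,\sigma\Vb)$), and conversely $\wt{\Vu} \le 2\,\wt{\Vu^{\sigma}}$. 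I would make this precise by writing $\mathrm{supp}(\Va,\sigma\Vb) \subseteq \mathrm{supp}(\Va) \cup \sigma(\mathrm{supp}(\Vb))$ and $|\mathrm{supp}(\Va)|,|\mathrm{supp}(\Vb)| \le \wt{\Vu}$. Now suppose every element of $\C{S}\setminus S$ has joint weight at least $d$ in the $\sigma$-setting (i.e.\ the quantity the theorem calls "joint weight" here is really $\wt{(\Va,\sigma\Vb)}$ for $(\Va,\Vb)\in\C{S}$, the natural error measure identified in the discussion preceding the definition). Then by Lemma~\ref{lem-sigma-standard-connection}(2), $\C{S}^{\sigma}$ is the centralizer of $S^{\sigma}$, and every element of $\C{S}^{\sigma}\setminus S^{\sigma}$ has (standard) joint weight at least $d$; Theorem~\ref{error_correction_stabilizer}(2) then says $\mc{C}_{S^{\sigma}}$ corrects $\lfloor (d-1)/2\rfloor$ errors measured in standard joint weight. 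Translating a standard-joint-weight-$e$ error back through $\sigma^{-1}$ inflates its $\sigma$-weight by at most a factor of $2$, which is where the extra factor of two (and hence $\lfloor(d-1)/4\rfloor$ and $\lfloor(d+1)/2\rfloor$) enters; I would spell out the arithmetic $\lfloor(\lfloor(d-1)/2\rfloor)/2\rfloor$-type bound carefully to land exactly on $\lfloor(d-1)/4\rfloor$ and likewise derive the detectable-weight/distance bound $\lfloor(d+1)/2\rfloor$.

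The main obstacle I anticipate is \emph{not} the algebra but getting the weight bookkeeping and the floor functions exactly right, and in particular being careful about \emph{which} weight notion ``joint weight at least $d$'' refers to in the hypothesis versus the conclusion — the hypothesis is most naturally about ordinary joint weight of $\C{S}\setminus S$ (so it transfers verbatim to $\C{S}^\sigma\setminus S^\sigma$), while the conclusion is about the $\sigma$-adapted weight, and the factor-$2$ discrepancy between these two measures for a fixed vector is precisely what degrades $\lfloor(d-1)/2\rfloor$ to $\lfloor(d-1)/4\rfloor$. I would therefore state a short preliminary lemma isolating the inequality $\tfrac12\wt{\Vu} \le \wt{\Vu^{\sigma}} \le 2\wt{\Vu}$ (or whichever direction is needed), prove it in one line from the support inclusion above, and then the theorem follows by plugging into Theorem~\ref{error_correction_stabilizer} with no further quantum-information input.
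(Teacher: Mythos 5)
Your high-level plan matches the paper's: reduce to the standard case via Lemma~\ref{lem-sigma-standard-connection}, observe that $\Vu\mapsto\Vu^{\sigma}$ is a linear bijection preserving dimension (part~1), control the change in joint weight (part~2), and plug into Theorem~\ref{error_correction_stabilizer}. The inequality $\wt{\Vu^{\sigma}}\ge\tfrac12\wt{\Vu}$ you isolate is exactly the right tool, and it is what the paper proves in the slightly sharper form $\wt{(\Va,\sigma\Vb)}\ge\max\bigl(|\mathrm{supp}\,\Va|,|\mathrm{supp}\,\Vb|\bigr)\ge\lceil\wt{(\Va,\Vb)}/2\rceil$.

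However, your account of \emph{where} the factor of two enters is internally inconsistent and, in places, simply wrong. You first commit to reading the hypothesis as $\wt{(\Va,\sigma\Vb)}\ge d$; under that reading $\C{S}^{\sigma}\setminus S^{\sigma}$ already has distance $\ge d$ and the code corrects $\lfloor(d-1)/2\rfloor$ physical errors with no further loss — there is no meaningful ``translating a standard-joint-weight-$e$ error back through $\sigma^{-1}$'' step, because the weight that Theorem~\ref{error_correction_stabilizer} charges an error is already the physical joint weight of the Weyl operator, and that is invariant. In your closing paragraph you flip to the correct reading (hypothesis is $\wt{(\Va,\Vb)}\ge d$) but then assert it ``transfers verbatim to $\C{S}^{\sigma}\setminus S^{\sigma}$,'' which is precisely false: elements of $\C{S}^{\sigma}\setminus S^{\sigma}$ are $(\Va,\sigma\Vb)$, and $\wt{(\Va,\sigma\Vb)}$ can be as low as roughly half of $\wt{(\Va,\Vb)}$. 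The paper's proof has exactly one factor-of-two loss, and it occurs right there: the hypothesis gives $\wt{(\Va,\Vb)}\ge d$ for $(\Va,\Vb)\in\C{S}\setminus S$, your inequality then yields $\wt{(\Va,\sigma\Vb)}\ge\lfloor(d+1)/2\rfloor$, i.e.\ the code's distance is $\ge\lfloor(d+1)/2\rfloor$, and a single application of Theorem~\ref{error_correction_stabilizer} gives $\lfloor(\lfloor(d+1)/2\rfloor-1)/2\rfloor=\lfloor(d-1)/4\rfloor$ correctable errors. If you fix the bookkeeping to follow this single, forward direction (hypothesis on $\C{S}$, weight drop to $\C{S}^{\sigma}$, then standard error correction), your outline becomes the paper's proof; as written, the middle of your argument would not survive scrutiny.
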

\begin{proof}
  From Lemma~\ref{lem-sigma-standard-connection}, we have $S^\sigma$
  is isotropic and its centralizer is $\C{S}^\sigma$. Notice that the
  dimension of the space $S^\sigma$ and $S$ are equal as the map
  $(\Va, \Vb) \mapsto (\Va,\sigma \Vb)$ is a permutation on the $2n$
  indices. The stabilizer code required in part 1 is just the
  stabilizer code associated with the isotropic set $S^{\sigma}$.

  Consider any element $(\Va,\Vb)$ and let $A$ denote the indices $i$
  such that $a_i\neq 0$. Similarly let $B$ denote the set of indices
  $j$ such that $b_j \neq 0$. Then the joint weight of $(\Va,\Vb)$
  is the cardinality of $A \cup B$. The joint weight of
  $(\Va,\sigma\Vb)$ is at least the maximum of the cardinalities of
  $A$ and $B$ and hence is at least $\left\lfloor
  \frac{d+1}{2}\right\rfloor$. It follows that $\C{S}^\sigma \setminus
  S^\sigma$ has joint weight at least $\left\lfloor
  \frac{d+1}{2}\right\rfloor$. Using
  Theorem~\ref{error_correction_stabilizer} we get the necessary
  result.
 \end{proof}

The bound on the distance in the previous theorem is conservative. Theoretically this is the
best bound that we can derive. Explicit examples
constructed in Section \ref{section_code_examples} often give
much better distances. It may as well be the case that the actual joint weight of
$\C{S}^\sigma \setminus S^\sigma$ could even be higher than that of $\C{S}
\setminus S$. 


\ppkold{
  I found that there is really no use of the $\sigma$-variant of
  Theorem~\ref{equivalence_stabilizer_isotropic}. It is rather the
  $\sigma$-variant of Theorem~\ref{error_correction_stabilizer} that is
  more useful for our constructions. That is the theorem that gives the constructions.
}

\rmittalold{We need to define linear codes and $\mb{F}_{p^2}$-linear in preliminaries.}

\subsection{Cyclic codes}

We fix a finite field $\F[p]$ as the alphabet set and a block length
$n$ that is co-prime to $p$. Consider the right shift operator
$\SH$ that maps a vector $\Va = (a_0,\ldots, a_{n-1})$ to its right shift
$(a_{n-1},a_0,\ldots,a_{n-2})$. A classical code $C$ is \emph{cyclic}
if for all $\Va$ in $C$ its right shift $\SH \Va$ is also in $C$. It
turns out that the right generalization of this notion is \emph{simultaneous
cyclicity}.

\begin{definition}\label{simulcyc_definition}
A subset $S$ of $\FFpn$ is \emph{simultaneously cyclic} if for all
$(\Va,\Vb)$ in $S$, its simultaneous shift $(\SH \Va,\SH \Vb)$ is also
in $S$.
\end{definition}

A quantum stabilizer code is \emph{cyclic} if the associated isotropic
set $S$ is simultaneously cyclic~\cite[III.2]{DuttaK2011cyclic}. It
turns out that the centralizer $\C{S}$ is also simultaneously cyclic.
As in the case of classical cyclic codes, the associated code can be 
seen as an ideal over an appropriate cyclotomic ring.  In the more general setting of
$\sigma$-isotropic sets, for a simultaneously cyclic subspace $S$, its
centralizer $\C{S}$ \emph{need not} be simultaneously cyclic and hence
the theory of cyclotomic rings  will not be applicable any
more. Though, if we further restrict to involution $\sigma$ to be of the form $i
\mapsto m i$ modulo $n$ for some $m$, we get back all the nice
properties that we are accustomed to in the classical setting. 
We will call such an involution $\sigma_m$, fix it
for the rest of the article. Notice, $\sigma_m$
being an involution means $m$ is a square root of $1 \mod n$.  It is
easy to see that the shift operator $\SH$ and $\sigma_m$ satisfy the
commutation relation:

\begin{equation}\label{eq:shift-sigma-commute}
   N \sigma_\Sm  = \sigma_\Sm \SH ^\Sm
\end{equation}

The following theorem  follows directly.

\begin{theorem} \label{centralizer_sim_cyc_theorem}
  Let $S$ be a $\sigma_\Sm$-isotropic, simultaneously cyclic subspace
  of $\FFpn$. Then its $\sigma_\Sm$-centralizer $\C{S}$ is also
  simultaneously cyclic.
\end{theorem}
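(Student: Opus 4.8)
The plan is to show that simultaneous cyclicity of $S$ transfers to $\C{S}$ by using the commutation relation~\eqref{eq:shift-sigma-commute} together with the fact that the shift operator $\SH$ is an isometry for the $\sigma_\Sm$-symplectic form, in the sense that it preserves the form up to the substitution encoded by $\sigma_\Sm$. Concretely, first I would record the basic observation that $\SH$ is orthogonal, i.e. $\transpose{\SH}\SH = I$ (equivalently $\transpose{\SH} = \SH^{-1} = \SH^{n-1}$), since $\SH$ is a permutation matrix. Next, I would compute, for arbitrary $\Vu = (\Va,\Vb)$ and $\Vv = (\Vc,\Vd)$, the quantity $\pSymplectic{(\SH\Va,\SH\Vb)}{(\SH\Vc,\SH\Vd)} = \transpose{(\SH\Va)}\sigma_\Sm(\SH\Vd) - \transpose{(\SH\Vb)}\sigma_\Sm(\SH\Vc)$, and use~\eqref{eq:shift-sigma-commute} in the form $\sigma_\Sm \SH = \SH^{\Sm}\sigma_\Sm$ (or its transpose) to move the shift operators around.

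The key identity I expect to need is that $\transpose{\SH}\,\sigma_\Sm\,\SH = \sigma_\Sm\, \SH^{\Sm-1}$ is \emph{not} quite right on its own; rather, the cleaner route is to observe that applying the simultaneous shift $\Sm$ times, $(\Va,\Vb)\mapsto (\SH^\Sm\Va, \SH^\Sm\Vb)$, also preserves simultaneous cyclicity, and that $\sigma_\Sm \SH = \SH^\Sm \sigma_\Sm$ lets one rewrite $\transpose{(\SH\Va)}\sigma_\Sm(\SH\Vd) = \transpose{\Va}\transpose{\SH}\sigma_\Sm\SH\Vd = \transpose{\Va}\transpose{\SH}\SH^\Sm\sigma_\Sm\Vd$. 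Since $m$ is a square root of $1 \bmod n$, we have $\SH^\Sm = \transpose{\SH^\Sm}{}^{-1}$ and more usefully $\transpose{\SH}\SH^\Sm = \SH^{\Sm-1}$; but then one still has a leftover $\SH^{\Sm-1}$. The trick to kill it is that $\sigma_\Sm$ commutes with $\SH^{\Sm-1}$ up to another power: applying~\eqref{eq:shift-sigma-commute} repeatedly shows $\SH^{\Sm-1}\sigma_\Sm = \sigma_\Sm \SH^{\Sm(\Sm-1)} = \sigma_\Sm \SH^{\Sm^2 - \Sm} = \sigma_\Sm\SH^{1-\Sm}$ because $\Sm^2 \equiv 1$, and combining gives $\transpose{\SH}\sigma_\Sm\SH = \sigma_\Sm$ exactly. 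Thus $\SH$ preserves the $\sigma_\Sm$-symplectic form: $\pSymplectic{(\SH\Va,\SH\Vb)}{(\SH\Vc,\SH\Vd)} = \pSymplectic{(\Va,\Vb)}{(\Vc,\Vd)}$.

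Given this, the theorem follows quickly. Let $\Vx \in \C{S}$, so $\pSymplectic{\Vx}{\Vu} = 0$ for all $\Vu \in S$. I want to show the simultaneous shift of $\Vx$, call it $\SH\Vx$ (abusing notation for the action on both components), also lies in $\C{S}$. For any $\Vu \in S$, since $S$ is simultaneously cyclic it contains $\SH^{-1}\Vu = \SH^{n-1}\Vu$; then using the form-preservation property with the pair $(\SH\Vx, \Vu) = (\SH\Vx, \SH(\SH^{-1}\Vu))$ we get $\pSymplectic{\SH\Vx}{\Vu} = \pSymplectic{\SH\Vx}{\SH(\SH^{-1}\Vu)} = \pSymplectic{\Vx}{\SH^{-1}\Vu} = 0$ because $\SH^{-1}\Vu \in S$ and $\Vx \in \C{S}$. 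Hence $\SH\Vx \in \C{S}$, proving $\C{S}$ is simultaneously cyclic.

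The main obstacle is the bookkeeping in the second paragraph: verifying that the powers of $\SH$ collapse correctly, which hinges essentially on $\Sm^2 \equiv 1 \pmod n$ and on $\SH$ having order $n$, so that $\SH^{\Sm^2-\Sm} = \SH^{1-\Sm}$ and the stray shift cancels against $\transpose{\SH}\SH^\Sm = \SH^{\Sm-1}$. One should double-check the direction of the identity~\eqref{eq:shift-sigma-commute} (whether it is $\SH\sigma_\Sm = \sigma_\Sm\SH^\Sm$ or $\sigma_\Sm\SH = \SH^\Sm\sigma_\Sm$, which as stated reads $N\sigma_\Sm = \sigma_\Sm\SH^\Sm$) and keep transposes straight, since $\sigma_\Sm$ is symmetric but $\SH$ is not. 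Once the single clean identity $\transpose{\SH}\sigma_\Sm\SH = \sigma_\Sm$ is in hand, the rest is immediate, which is presumably why the authors call it a theorem that "follows directly."
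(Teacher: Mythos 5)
Your overall plan is right, but the key identity you derive is false, and the slip propagates to the final step. You correctly compute $\transpose{\SH}\sigma_\Sm\SH = \transpose{\SH}\SH^\Sm\sigma_\Sm = \SH^{\Sm-1}\sigma_\Sm$, and then correctly rewrite $\SH^{\Sm-1}\sigma_\Sm = \sigma_\Sm\SH^{1-\Sm}$ using $\Sm^2\equiv 1$. But from there you conclude ``combining gives $\transpose{\SH}\sigma_\Sm\SH = \sigma_\Sm$ exactly,'' which does not follow: you have only shown $\transpose{\SH}\sigma_\Sm\SH = \sigma_\Sm\SH^{1-\Sm}$, and the stray factor $\SH^{1-\Sm}$ vanishes only if $\Sm\equiv 1\pmod{n}$. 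Moving the leftover power from one side of $\sigma_\Sm$ to the other does not kill it. Consequently your asserted isometry $\pSymplectic{\SH\Vu}{\SH\Vv} = \pSymplectic{\Vu}{\Vv}$ is false for $\Sm\neq 1$ (a $3$-dimensional check with $\Sm=-1$ already shows it), and the step $\pSymplectic{\SH\Vx}{\SH(\SH^{-1}\Vu)}=\pSymplectic{\Vx}{\SH^{-1}\Vu}$ in your last paragraph is not justified.

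The fix is local. The identity that actually holds (again from $\SH\sigma_\Sm=\sigma_\Sm\SH^\Sm$, so $\SH^{-1}\sigma_\Sm=\sigma_\Sm\SH^{-\Sm}$) is
\[
\pSymplectic{\SH\Vx}{\Vu} \;=\; \pSymplectic{\Vx}{\SH^{-\Sm}\Vu},
\]
equivalently $\transpose{\SH}\sigma_\Sm\SH^{\Sm}=\sigma_\Sm$, i.e.\ $\SH$ is an ``isometry'' only if the two arguments are shifted by \emph{different} powers ($1$ and $\Sm$). With this corrected identity the proof goes through immediately: for $\Vx\in\C{S}$ and any $\Vu\in S$, $\pSymplectic{\SH\Vx}{\Vu}=\pSymplectic{\Vx}{\SH^{-\Sm}\Vu}=0$ because $\SH^{-\Sm}\Vu=\SH^{n-\Sm}\Vu\in S$ by simultaneous cyclicity of $S$, so $\SH\Vx\in\C{S}$. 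This is essentially the one-line argument the paper is alluding to when it says the theorem ``follows directly'' from Equation~\eqref{eq:shift-sigma-commute}; your detour through the (too strong) claim $\transpose{\SH}\sigma_\Sm\SH=\sigma_\Sm$ is where the write-up breaks.
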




Consider the cyclotomic ring $\PR =\F[p][X]/\left<X^n-1\right>$. As in
the classical case, representing a vector $\Va$ as the polynomial
$\Va(X)=a_0+a_1X+\cdots+a_{n-1}X^{n-1} \in \PR$ provides an elegant
mechanism to work with cyclic codes. For example, the cyclic shift of
a vector in $\F[p]^n$ is equivalent to multiplication by $X$ in the
ring $\PR$. The following theorem expresses the $\sigma_\Sm $-isotropy
condition in terms of these polynomial representations.

\begin{theorem}\label{permuted_isotropic_subspace_theorem}
  Let $S$ be a simultaneously cyclic subspace of $\FFpn$. $S$ is a
  $\sigma_\Sm $-isotropic subspace if and only if for any two elements
  $\Vu=\left( \Va,\Vb \right)$ and $\Vv= \left( \Vc,\Vd \right)$ in
  $S$, the corresponding polynomials satisfy the identity:
  \begin{equation}\label{polynomial_isotropy_condition}
    \Va(X)\Vd(X^{-\Sm}) - \Vb(X)\Vc(X^{-\Sm}) = 0 \mod X^n -1
  \end{equation}
\end{theorem}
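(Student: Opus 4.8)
The plan is to unwind the definition of $\pSymplectic{\cdot}{\cdot}$ and translate it into the polynomial ring $\PR = \F[p][X]/\langle X^n-1\rangle$ under the correspondence $\Va \leftrightarrow \Va(X) = \sum_i a_i X^i$. The key observation, which I would establish first, is how the involution $\sigma_\Sm$ and the shift $\SH$ act on polynomials: multiplication by $X$ realizes $\SH$, so $\Va(X^{\Sm})$ (reduced mod $X^n-1$) corresponds to applying $\sigma_\Sm$ to the coefficient vector, i.e. permuting index $i$ to $mi \bmod n$. Equivalently $\sigma_\Sm \Vb$ has polynomial $\Vb(X^{m})$. Since $m^2 \equiv 1 \pmod n$, we have $X^{-\Sm} \equiv X^{\Sm} \pmod{X^n-1}$ up to the identification $X^n = 1$; I would be careful here and note that $\Vb(X^{-m}) = \Vb(X^{m})$ in $\PR$ because raising to the $m$-th power is an involution on the exponents modulo $n$. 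This gives the dictionary I need.

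Next I would recall the standard fact (already implicit in the classical cyclic-code literature and used for the unmodified symplectic form in \cite{DuttaK2011cyclic}) that for two vectors $\Va,\Vd \in \F[p]^n$ the single inner product $\transpose{\Va}\Vd$ equals the constant (degree-zero) coefficient of the product $\Va(X)\Vd(X^{-1})$ in $\PR$; more generally, the coefficient of $X^j$ in $\Va(X)\Vd(X^{-1})$ is $\transpose{(\SH^{-j}\Va)}\Vd = \langle \Va, \SH^{j}\Vd\rangle_{\text{Hamming}}$, the $j$-th cyclic cross-correlation. Applying this with $\Vd$ replaced by $\sigma_\Sm \Vd$, whose polynomial is $\Vd(X^{-\Sm})$ after the sign bookkeeping above, the constant coefficient of $\Va(X)\Vd(X^{-\Sm})$ is exactly $\transpose{\Va}\sigma_\Sm\Vd$. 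Doing the same for the $\Vb,\Vc$ term, the constant coefficient of $\Va(X)\Vd(X^{-\Sm}) - \Vb(X)\Vc(X^{-\Sm})$ is precisely $\pSymplectic{(\Va,\Vb)}{(\Vc,\Vd)}$.

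To finish, I would exploit simultaneous cyclicity of $S$. If $(\Va,\Vb)\in S$ then $(\SH^j\Va,\SH^j\Vb)\in S$ for all $j$, so $\sigma_\Sm$-isotropy of $S$ forces $\pSymplectic{(\SH^j\Va,\SH^j\Vb)}{(\Vc,\Vd)} = 0$ for every $j$ and every $(\Vc,\Vd)\in S$. Using the commutation relation \eqref{eq:shift-sigma-commute} (equivalently, that multiplying $\Va(X)$ by $X^j$ multiplies the whole product by $X^j$ in $\PR$), the constant coefficient of $X^j\big(\Va(X)\Vd(X^{-\Sm}) - \Vb(X)\Vc(X^{-\Sm})\big)$ is the $(-j)$-th coefficient of the original product; ranging $j$ over $\mathbb{Z}/n$ shows that \emph{all} coefficients of $\Va(X)\Vd(X^{-\Sm}) - \Vb(X)\Vc(X^{-\Sm})$ vanish in $\PR$, which is exactly \eqref{polynomial_isotropy_condition}. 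Conversely, if \eqref{polynomial_isotropy_condition} holds for all pairs in $S$, then in particular its constant coefficient vanishes, giving $\pSymplectic{\Vu}{\Vv}=0$, so $S$ is $\sigma_\Sm$-isotropic. The main obstacle I anticipate is purely bookkeeping: pinning down the exponent conventions so that $\sigma_\Sm$ acting on a coefficient vector really corresponds to $X \mapsto X^{\Sm}$ (and correctly interacting with the inversion $X\mapsto X^{-1}$ coming from the transpose), and making sure the $m^2\equiv 1$ identity is used in the right place so that $X^{-\Sm}$ and $X^{\Sm}$ may be interchanged under the ring relation without sign or index errors.
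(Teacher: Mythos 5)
Your proof follows essentially the same approach as the paper's: both arguments express the coefficients of $\Va(X)\Vd(X^{-\Sm}) - \Vb(X)\Vc(X^{-\Sm})$ as $\sigma_\Sm$-symplectic inner products of pairs of elements of $S$, which then vanish by $\sigma_\Sm$-isotropy combined with simultaneous cyclicity. The only cosmetic difference is that you shift $(\Va,\Vb)$ by multiplying by $X^j$ to move between coefficients, whereas the paper fixes $(\Va,\Vb)$ and pushes the shift onto $(\Vc,\Vd)$ via the commutation relation $\SH\sigma_\Sm = \sigma_\Sm\SH^{\Sm}$; both routes are valid.

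One auxiliary claim in your write-up is false and should be removed: you assert that $X^{-\Sm} \equiv X^{\Sm} \pmod{X^n-1}$, equivalently that $\Vb(X^{-\Sm}) = \Vb(X^{\Sm})$ in $\PR$, on the grounds that "raising to the $\Sm$-th power is an involution on the exponents." This conflates the \emph{multiplicative} inverse $\Sm^{-1} \bmod n$ (which does equal $\Sm$ when $\Sm^2 \equiv 1 \bmod n$) with the \emph{additive} inverse $-\Sm \bmod n$ (which equals $n - \Sm$ and is generally different from $\Sm$). For example, with $n = 8$, $\Sm = 3$, and any odd prime $p$, one has $\Sm^2 = 9 \equiv 1 \pmod 8$ yet $X^{-3} = X^5 \neq X^3$ in $\PR$. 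Fortunately the rest of your argument never actually uses this identity: the correct chain is $(\sigma_\Sm \Vd)(X) = \Vd(X^{\Sm})$, followed by the substitution $X \mapsto X^{-1}$ to get $(\sigma_\Sm \Vd)(X^{-1}) = \Vd(X^{-\Sm})$, which is exactly the polynomial your constant-coefficient computation needs. Strike the spurious $X^{-\Sm} = X^{\Sm}$ claim and the proof is sound.
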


\begin{proof}
The constant coefficient of the polynomial on the left hand side of
Equation~\ref{polynomial_isotropy_condition} is equivalent to the
$\sigma_m$-isotropy condition. So the polynomial condition implies
that the subspace $S$ is $\sigma_m$-isotropic.

For the converse, notice that $\Va(X^\Sm) = (\sigma_\Sm
\Va)(X)$. Hence it is sufficient to prove the identity,
\[
\Va(X)(\sigma_\Sm \Vd)(X^{-1}) - \Vb(X)(\sigma_\Sm \Vc)(X^{-1}) = 0 \mod
X^n -1,
\]
for every $(\Va,\Vb), (\Vc,\Vd) \in S$.
The coefficient of $X^i$ on the left is,
\[ \transpose{\Va}(\SH ^i \sigma_\Sm) \Vd - \Vb^T (\SH ^i \sigma_\Sm) \Vc. \]
Using the Equation~\ref{eq:shift-sigma-commute} repeatedly, the coefficient simplifies to,
\[ \transpose{\Va}(\sigma_m \SH^{mi}) \Vd - \Vb^T (\sigma_m \SH^{mi})\Vc, \]
which is equal to $\Symplectic{(\Va,\Vb)}{(N^{mi}\Vc,N^{mi}\Vd)}_{\sigma_m}$.

Since $S$ is $\sigma_m$-isotropic and simultaneously cyclic, this coefficient is $0$.

 \end{proof} 


\subsection{Linear codes}

Let $p$ be a prime. It is well known that the finite field $\F[p]$ has
a \emph{unique} quadratic extension $\F[p^2]$.  Such an extension is
essentially the field $\F[p](\eta)$ consisting of all elements of the
form $a + \eta b$, where $\eta$ is the root of some irreducible
quadratic polynomial $\mu(Y) = Y^2-c_1Y-c_0$.  The encoding $(\Va,\Vb)
\mapsto \Va + \eta \Vb$ gives an encoding of $\FFpn$ to
$\F[p^2]^n$. We fix such an element for the rest of the section.
A $\F[p^2]$-linear subspace of $\F[p^2]^n$ is isotropic iff the
corresponding preimage forms an isotropic subspace of $\FFpn$.
Stabilizer codes associated to isotropic $\F[p^2]$-vector spaces 
are called \emph{linear stabilizer codes}. As isotropic subspace
is an $\F[p^2]$-linear subspace, it is closed under the multiplication
by $\eta$.

For an $\F[p]$ subspace $C$ of $\FFpn$, the necessary and sufficient
condition for it to be an $\F[p^2]$-subspace under the above encoding
is that it should be closed under multiplication by $\eta$. Since
$\eta^2 = c_o + c_1 \eta$, $C$ is $\F[p^2]$-linear iff 
for all pair $(\Va,\Vb)$ in $C$, the pair $(c_o
\Vb, \Va + c_1 \Vb)$ also belongs to $C$.  We have the following
theorem on $\sigma$-centralizers.

\begin{theorem}\label{linearity_theorem}
  Let $\sigma$ be an involution in $S_n$ and $S$ be a
  $\sigma$-isotropic subspace. Let $\C{S}$ be the corresponding
  $\sigma$-centralizer subspace. Then $S$ is $\mb{F}_{p^2}$-linear
  implies $\C{S}$ is $\mb{F}_{p^2}$-linear.
\end{theorem}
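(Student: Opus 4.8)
The plan is to reduce the $\mathbb{F}_{p^2}$-linearity of $\C{S}$ to that of $S$ by exploiting the compatibility between multiplication by $\eta$ and the $\sigma$-symplectic form. Concretely, recall that an $\mathbb{F}_p$-subspace $C$ is $\mathbb{F}_{p^2}$-linear precisely when for every $(\Va,\Vb)\in C$ the pair $(c_0\Vb,\ \Va+c_1\Vb)$ also lies in $C$; call this latter operation $\eta\cdot(\Va,\Vb)$. So the goal is: assuming $\eta\cdot S\subseteq S$, show $\eta\cdot\C{S}\subseteq\C{S}$. Since $\C{S}$ is by definition the set of $\Vx$ with $\pSymplectic{\Vx}{\Vu}=0$ for all $\Vu\in S$, it suffices to take $\Vx\in\C{S}$ and $\Vu\in S$ arbitrary and verify $\pSymplectic{\eta\cdot\Vx}{\Vu}=0$.

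The key computational step is an ``adjoint'' identity: I expect that
\[
\pSymplectic{\eta\cdot\Vx}{\Vu} \;=\; \pSymplectic{\Vx}{\eta'\cdot\Vu}
\]
for some related operation $\eta'$ that also preserves $S$ (most likely $\eta'\cdot(\Vc,\Vd) = (c_1\Vc - \Vd,\ c_0\Vc)$, the ``conjugate/companion'' action, or $\eta'=\eta$ up to a sign and a shift by a multiple of the identity action). First I would write $\eta\cdot\Vx = (c_0\Vx_2,\ \Vx_1 + c_1\Vx_2)$ where $\Vx=(\Vx_1,\Vx_2)$, plug this into $\pSymplectic{\cdot}{\Vu} = \transpose{(\cdot)_1}\sigma\Vu_2 - \transpose{(\cdot)_2}\sigma\Vu_1$, and expand. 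The cross terms will regroup, and because $\sigma$ is symmetric I can transpose scalars freely; the result should be a $\sigma$-symplectic pairing of $\Vx$ against some explicit $\mathbb{F}_p$-linear combination of $\Vu$ and $\eta\cdot\Vu$. Since both $\Vu$ and $\eta\cdot\Vu$ lie in $S$ (using $\eta\cdot S\subseteq S$) and $\Vx\in\C{S}$, every term in that combination pairs to zero with $\Vx$, giving $\pSymplectic{\eta\cdot\Vx}{\Vu}=0$ as required.

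I would organize the write-up as: (i) restate the $\eta$-closure characterization of $\mathbb{F}_{p^2}$-linearity from the preceding paragraph; (ii) fix $\Vx\in\C{S}$, $\Vu\in S$ and compute $\pSymplectic{\eta\cdot\Vx}{\Vu}$ by direct expansion, using $\transpose{\sigma}=\sigma$ to move $\sigma$ across and scalars in and out of transposes; (iii) recognize the output as a linear combination of $\pSymplectic{\Vx}{\Vu}$ and $\pSymplectic{\Vx}{\eta\cdot\Vu}$ (or an equivalent element of $S$), both of which vanish; (iv) conclude $\eta\cdot\Vx\in\C{S}$, hence $\C{S}$ is $\mathbb{F}_{p^2}$-linear. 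Note that $\sigma$ being merely an involution in $S_n$ (not necessarily of the form $\sigma_m$) is all that is used, since the only property of $\sigma$ invoked is $\transpose{\sigma}=\sigma$.

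The main obstacle I anticipate is purely bookkeeping: getting the exact form of the adjoint operation $\eta'$ right, i.e.\ checking that the coefficients $c_0, c_1$ shuffle so that the expansion really does close up inside $S$ rather than spilling into a term that is not manifestly in $S$. If the naive expansion produces a stray term like $\transpose{\Vx_2}\sigma\Vu_2$ with the ``wrong'' coefficient, I would absorb it by noting that $\eta$ acting twice satisfies $\eta^2 = c_0 + c_1\eta$, so $\eta^2\cdot S\subseteq S$ as well, giving an extra relation among $\Vu$, $\eta\cdot\Vu$, $\eta^2\cdot\Vu$ that lets me eliminate the offending term. Once the identity $\pSymplectic{\eta\cdot\Vx}{\Vu}=\pSymplectic{\Vx}{\eta'\cdot\Vu}$ is nailed down, the rest is immediate from the definition of $\C{S}$.
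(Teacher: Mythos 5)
Your approach is correct and is essentially the paper's proof, just reorganized as an explicit adjoint identity. Expanding $\pSymplectic{\eta\Vx}{\Vu}$ with $\Vx=(\Vx_1,\Vx_2)$, $\Vu=(\Vu_1,\Vu_2)$, and $\eta\Vx=(c_0\Vx_2,\Vx_1+c_1\Vx_2)$ gives
\[
\pSymplectic{\eta\Vx}{\Vu}=c_0\transpose{\Vx_2}\sigma\Vu_2-\transpose{\Vx_1}\sigma\Vu_1-c_1\transpose{\Vx_2}\sigma\Vu_1=\pSymplectic{\Vx}{\Vw},\qquad \Vw=(c_1\Vu_1-c_0\Vu_2,\ -\Vu_1),
\]
and one checks $\Vw=c_1\Vu-\eta\cdot\Vu$, so the right adjoint is $\eta'=c_1\,\mathrm{id}-\eta$ (the conjugate root $\bar\eta$ of $\mu$), not quite the specific $(c_1\Vc-\Vd,\ c_0\Vc)$ you tentatively wrote, but exactly the ``$\eta$ up to sign and identity shift'' fallback you named. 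Since $\eta'\cdot\Vu$ is then an $\F[p]$-combination of $\Vu$ and $\eta\cdot\Vu$, both in $S$, the pairing with $\Vx\in\C{S}$ vanishes. The paper reaches the same conclusion by first recording $\pSymplectic{\eta\Vu}{\Vv}=0$ and then substituting it to show $\pSymplectic{\Vu}{\eta\Vv}=c_1\pSymplectic{\Vu}{\Vv}=0$; your adjoint framing is a cleaner packaging of the identical computation, and in both cases the only property of $\sigma$ used is $\transpose{\sigma}=\sigma$.
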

\begin{proof}
  Consider any arbitrary element $\Vv=\left( \Vx,\Vy \right)$ in
  $\C{S}$. By the definition of $\sigma$-centralizer, for any
  $\Vu=(\Va,\Vb)$ in $S$, we have $\pSymplectic{\Vu}{\Vv} =0$. In
  addition, we also have
  \begin{equation}\label{psymplectic_eta_u_v_equation}
    \pSymplectic{\eta\Vu}{\Vv} = c_0\Vb^{T}\sigma\Vy - \transpose{\Va}\sigma\Vx - c_1\transpose{\Vb}\sigma\Vx = 0.
  \end{equation}
  due to the linearity of $S$. Now consider $\pSymplectic{\Vu}{\eta
    \Vv}$. We have:
  \begin{align*}
    \pSymplectic{\Vu}{\eta\Vv} &=& \Va^{T}\sigma\left( \Vx+c_1\Vy
    \right) -\Vb^{T}\sigma\left( c_0\Vy \right) && \\ &=& \Va^{T}
    \sigma\Vx +c_1\Va^{T}\sigma\Vy -c_0\Vb^{T}\sigma\Vy && \\ &=& -
    c_1\transpose{\Vb} \sigma \Vx + c_1 \transpose{\Va} \sigma\Vy &&
    \text{ from (\ref{psymplectic_eta_u_v_equation})} \\ &=& c_1
    \pSymplectic{\Vu}{\Vv}\\ &=& 0
  \end{align*}
  This proves that $\eta \Vv$ is in $\C{S}$ and hence $\C{S}$ is a
  $\F[p^2]$-subspace.
 \end{proof}

The above theorem is true for any involution $\sigma$. In addition, if
the involution is $\sigma_\Sm$, using Theorem
~\ref{centralizer_sim_cyc_theorem} we have

\begin{theorem}
  Let $S$ be a simultaneously cyclic, $\sigma_\Sm$-isotropic and
  $\mathbb{F}_{p^2}$-linear subspace. Then its
  $\sigma_\Sm$-centralizer $\C{S}$ is simultaneously cyclic and
  $\mathbb{F}_{p^2}$-linear subspace.
\end{theorem}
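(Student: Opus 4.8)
The plan is to observe that this statement is nothing more than the conjunction of the two theorems that immediately precede it, both of which make assertions about the \emph{same} object, the $\sigma_\Sm$-centralizer $\C{S}$. So no new argument is really required; one only has to check that $S$ satisfies the hypotheses of each.

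First I would apply Theorem~\ref{centralizer_sim_cyc_theorem}: since $S$ is simultaneously cyclic and $\sigma_\Sm$-isotropic, that theorem (which rests on the commutation relation $N\sigma_\Sm = \sigma_\Sm \SH^\Sm$ of Equation~\ref{eq:shift-sigma-commute}) gives at once that $\C{S}$ is simultaneously cyclic. This half uses nothing about $\mathbb{F}_{p^2}$-linearity. Next I would apply Theorem~\ref{linearity_theorem} with the involution $\sigma$ instantiated as $\sigma_\Sm$ --- this is legitimate because $\sigma_\Sm$ is an involution in $S_n$, as $m$ was chosen to be a square root of $1 \bmod n$. Since $S$ is $\sigma_\Sm$-isotropic and $\mathbb{F}_{p^2}$-linear, Theorem~\ref{linearity_theorem} yields that $\C{S}$ is $\mathbb{F}_{p^2}$-linear. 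Combining the two conclusions, $\C{S}$ is simultaneously cyclic and $\mathbb{F}_{p^2}$-linear, which is exactly the claim.

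I do not expect any genuine obstacle. The only points worth verifying explicitly are: (i) both prior theorems are applicable, i.e. $S$ meets the (simultaneous cyclicity, $\sigma_\Sm$-isotropy) hypothesis of the first and the ($\sigma$-isotropy, $\mathbb{F}_{p^2}$-linearity) hypothesis of the second; and (ii) the two conclusions are properties of one and the same subspace $\C{S}$, so there is no compatibility or intersection argument to make --- we may simply assert them together. One could optionally note that this theorem is precisely the "$\sigma_\Sm$ plus $\mathbb{F}_{p^2}$-linear" specialization foreshadowed in the remark after Theorem~\ref{linearity_theorem}, and that it is this specialization that feeds into the cyclic constructions and the decoding algorithm discussed later.
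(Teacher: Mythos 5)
Your proposal is correct and matches the paper's own (implicit) argument exactly: the paper presents this theorem without a separate proof, as the immediate conjunction of Theorem~\ref{centralizer_sim_cyc_theorem} (which gives simultaneous cyclicity of $\C{S}$) and Theorem~\ref{linearity_theorem} specialized to $\sigma = \sigma_\Sm$ (which gives $\mathbb{F}_{p^2}$-linearity of $\C{S}$). Your observation that both conclusions concern the same object $\C{S}$, so no compatibility argument is needed, is precisely the right thing to notice.
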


We now look at $\F[p^2]$-linear, $\sigma_\Sm$-isotropic, and simultaneously cyclic
 subspaces. Recall that we encoded a vector $\Va$ as the polynomial 
$\sum a_i X^i$ in the ring $\PR$. In this  case, we encode pairs
 of polynomials in $\PR \times \PR$ as elements of 
$\PR(\eta) = \F[p^2][X]/\left< X^n-1\right>$. A $\F[p^2]$-linear
 simultaneously cyclic subspace  has to be an 
ideal of $\PR(\eta)$ and hence should have a generating polynomial.
 A consequence of previous theorem is that its centralizer would also
 be an ideal and have a generating polynomial.

\subsection{Linear stabilizer codes from $\sigma_{\Sm}$-isotropic sets} \label{sec-cyclic-sigma-m1-isotorpic}

An $\F[p^2]$-linear simultaneously cyclic subspace is equivalent to an ideal of $\PR(\eta)$.
Define a triplet $(n,p,m)$ to be \emph{good} if $n \divides p^{t} + m$ for some $t$ and $m^2 = 1 \mod n$. 
The following theorem characterizes $\sigma_\Sm$-isotropic ideals of $\PR(\eta)$ for good triplets.

%

\begin{theorem}\label{general_construction_theorem}

  Let $n \divides p^{t} + m$, where $m$ is a square root of
  $1 \mod n$. An ideal $S$ of $\PR(\eta)$ is $\sigma_m$-isotropic 
  if and only if $S$ is generated by 
  the product of two polynomials
  $g(X)h(X,\eta)$ which satisfy the following conditions.

  \begin{enumerate}
  \item $g(X)$ is a factor of $X^n - 1$ in $\F[p]$ which includes all
    the odd irreducible factors.
  \item $h(X,\eta)$ is such that for any $r(X,\eta)$ which is a factor
    of $(X^n-1)/g(X)$ over $\mathbb{F}_{p}(\eta)$ exactly one of
    $r(X,\eta)$ or its conjugate $r(X,\eta)^p$ divides $h(X,\eta)$.
\end{enumerate}
 Hence the ideal will be non-trivial only if $t$ is even.
\end{theorem}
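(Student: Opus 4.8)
The plan is to pass to the Chinese Remainder decomposition of $\PR(\eta)$ and reduce $\sigma_m$-isotropy of an ideal to a combinatorial condition on the irreducible factors of $X^n-1$ that is governed by the arithmetic of $p$, $m$ and $n$. Since $\gcd(n,p)=1$, the polynomial $X^n-1$ is squarefree over $\F[p]$ and over $\F[p^2]$, so $\PR(\eta)\cong\prod_i K_i$ with each $K_i=\F[p^2][X]/\langle f_i\rangle$ a field, the $f_i$ being the distinct irreducible factors of $X^n-1$ over $\F[p^2]$. An $\F[p^2]$-linear, simultaneously cyclic subspace is exactly an ideal of $\PR(\eta)$, and every ideal has the form $S_T:=\bigoplus_{i\in T}K_i$ for a set $T$ of indices, with $S_T=\langle\prod_{i\notin T}f_i\rangle$. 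I would also use the elementary Galois fact that an $\F[p]$-irreducible factor of $X^n-1$ of odd degree stays irreducible over $\F[p^2]$ and is self-conjugate, while one of even degree splits as $r\,r^p$ with $r\neq r^p$; thus the odd irreducible factors of $X^n-1$ over $\F[p]$ are precisely the self-conjugate $f_i$, and if a divisor $g$ of $X^n-1$ over $\F[p]$ contains all of them then $(X^n-1)/g$ is a product of conjugate pairs over $\F[p^2]$.

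Next I would encode $\sigma_m$-isotropy as an annihilation condition inside $\PR(\eta)$. Consider the ring involution $\psi$ of $\PR(\eta)$ given by $\psi(f)(X)=\overline f(X^{-m})$, that is: conjugate the $\F[p^2]$-coefficients and then substitute $X\mapsto X^{-m}$; this is well defined and satisfies $\psi^2=\mathrm{id}$ because $m$ is a square root of $1$ modulo $n$. For $f=\Va(X)+\eta\Vb(X)$ and $g=\Vc(X)+\eta\Vd(X)$ with $\Va,\Vb,\Vc,\Vd\in\PR$, expanding $f\cdot\psi(g)$ using $\eta^2=c_1\eta+c_0$ shows that the coefficient of $\eta$ in $f\cdot\psi(g)$ equals $-\bigl(\Va(X)\Vd(X^{-m})-\Vb(X)\Vc(X^{-m})\bigr)$. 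By Theorem~\ref{permuted_isotropic_subspace_theorem} this quantity vanishes for all pairs of elements of $S$ exactly when $S$ is $\sigma_m$-isotropic; and since $S$ is closed under multiplication by $\eta$, applying the same statement with $\eta f$ in place of $f$ converts vanishing of the $\eta$-part of $f\cdot\psi(g)$ into vanishing of its $\F[p]$-part as well. Hence, for an ideal $S$, $S$ is $\sigma_m$-isotropic $\iff f\cdot\psi(g)=0$ for all $f,g\in S$ $\iff S\cdot\psi(S)=0$.

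It then remains to read off what $\psi$ does to the factors. Let $\pi$ be the permutation of the index set induced by $\psi$, so that $\psi(S_T)=S_{\pi(T)}$ and $S_T\cdot\psi(S_T)=S_{T\cap\pi(T)}$; thus $S_T$ is $\sigma_m$-isotropic iff $T\cap\pi(T)=\emptyset$. Fixing a primitive $n$-th root of unity $\alpha$, a short computation gives that $\psi$ sends the factor with root $\alpha^j$ to the factor with root $\alpha^{-pmj}$, and $-pm\equiv p^{t+1}\bmod n$ by the hypothesis $p^t\equiv-m\bmod n$. If $t$ is even, then $\alpha^{p^{t+1}j}$ lies in the $\F[p^2]$-Frobenius orbit of $\alpha^{pj}$, so $\psi$ carries each $f_i$ to its conjugate $f_i^p$; hence $\pi$ is conjugation, and the condition $T\cap\pi(T)=\emptyset$ says precisely that $T$ avoids every self-conjugate factor and contains at most one member of each conjugate pair. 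Collecting the $\F[p]$-irreducible factors having no component in $T$ into a polynomial $g(X)$ (which then contains all odd irreducible factors), and, for each remaining $\F[p]$-factor $r\,r^p$, putting the unique component not in $T$ into $h(X,\eta)$, exhibits $S=S_T=\langle g(X)h(X,\eta)\rangle$ in the stated form; conversely any $g,h$ of that form produce a set $T$ with $T\cap\pi(T)=\emptyset$, hence a $\sigma_m$-isotropic ideal. If instead $t$ is odd, then $t+1$ is even, $\alpha^{p^{t+1}j}$ lies in the $\F[p^2]$-orbit of $\alpha^j$ itself, and $\psi$ fixes every factor; then $\pi=\mathrm{id}$, so $T\cap\pi(T)=T$, which is empty only for $T=\emptyset$. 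Hence the only $\sigma_m$-isotropic ideal is $0$, which is exactly the final assertion that a non-trivial $\sigma_m$-isotropic ideal forces $t$ to be even.

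The main obstacle is the last step: one must carefully track how the substitution $X\mapsto X^{-m}$, composed with coefficient conjugation, permutes the Frobenius orbits of the $n$-th roots of unity, and see that the arithmetic hypothesis $n\mid p^t+m$ is precisely what pins $\psi$ down --- to conjugation when $t$ is even (so that codes exist and the $g\,h$ factorization appears) and to the identity when $t$ is odd (so that nothing but the zero ideal survives). By comparison, the identity computing the $\eta$-coefficient of $f\cdot\psi(g)$ is a routine expansion once the correct map $\psi$ has been guessed, and the remaining steps are bookkeeping inside the CRT decomposition.
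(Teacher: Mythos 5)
Your proof is correct, and it takes a genuinely different and arguably cleaner route than the one in the paper. The paper's argument (Appendix~\ref{proof_construction_theorem}) works directly with a generator $\Vg(X) + \eta\Vf(X)$ of the uniquely cyclic ideal, uses $\F[p^2]$-linearity to extract a polynomial $\Va(X)$ satisfying $\mu(\Va)\equiv 0 \bmod (X^n-1)/\Vg$, and then derives the degree-parity restrictions on $g$ and the conjugate-pair condition on $h$ from Galois-theoretic constraints on $\Va$, finally invoking Equation~\ref{a_equation} and a contradiction argument to force $t$ even. You instead recast $\sigma_m$-isotropy as annihilation under the conjugate-semilinear ring involution $\psi(f)(X)=\overline f(X^{-m})$ — the identity you compute for the $\eta$-coefficient of $f\,\psi(g)$, combined with closure of $S$ under multiplication by $\eta$, cleanly shows that $\sigma_m$-isotropy of an ideal $S$ is exactly $S\cdot\psi(S)=0$. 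Passing to the CRT decomposition $\PR(\eta)\cong\prod_i K_i$ then reduces everything to the combinatorial condition $T\cap\pi(T)=\emptyset$, and the arithmetic $-pm\equiv p^{t+1}\bmod n$ pins $\pi$ down to Frobenius conjugation ($t$ even, giving the $g\cdot h$ factorization in the statement) or the identity ($t$ odd, forcing $S=0$). Each approach has something to offer: yours is more conceptual, self-contained, and makes the ``iff'' and the $t$-parity dichotomy transparent in one stroke, whereas the paper's construction produces the explicit polynomial $\Va(X)$ with $\Va\equiv\eta^p\bmod h(X,\eta)$, which is reused downstream in the proof of Proposition~\ref{prop:centralizer-generator} and in the decoding argument of Theorem~\ref{error_correction_theorem}. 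If you adopt your version for this theorem, you would still want to record that explicit generator description (or derive $\Va$ from your idempotent picture) to keep those later proofs intact.
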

\begin{proof}
  Refer to the section \ref{proof_construction_theorem} of the appendix.
\end{proof} 

Dutta et. al.~\cite{DuttaK2011cyclic,dutta2012exploiting}
gave a characterization of the $\sigma_\Sm$-isotropic ideals of $\PR(\eta)$ for 
good triplets where $m=1$. Theorem~\ref{general_construction_theorem} generalizes there 
construction for other square roots of $1 \mod n$.  

Dutta et. al.~\cite{DuttaK2011cyclic} also proved that if $n \divides p^t + 1$ for some odd $t$ 
then there are no linear cyclic stabilizer codes. This will be the case when the order of $p$ in $\ZZ_n$ is 
$2t$ (for some odd $t$) and $p^{t} = -1 \mod n$. However, Theorem.~\ref{general_construction_theorem} 
allows us to construct $\sigma_{-1}$-isotropic ideals for such $(n,p)$ pairs.

For $m \neq -1$, the $\sigma_\Sm$-isotropy condition as polynomials is
given by
\begin{equation}\label{eqn:standard-isotropy}
\Va(X)\Vd(X)^{p^t} - \Vb(X)\Vc(X)^{p^t} \mod X^n -1
\end{equation}
The $p^t$ powers that occur in the above condition leads to certain
Galois theoretic situations that makes such ideals trivial when $t$ is
odd. The $\sigma_{-1}$-isotropy condition on the other hand is much
simpler
\[
\Va(X)\Vd(X) - \Vb(X)\Vc(X) \mod X^n -1.
\]
This is the reason why we are able to construct linear cyclic
stabilizer codes when we modify the symplectic condition to the
bilinear form $\Symplectic{\cdot}{\cdot}_{\sigma_{-1}}$.

\ppkold{Please mention an example for say $n = 2^3 + 1 = 9$.}

Our results also gives other variants: If $p$ has order $4t$ and
the quantity $-m = p^{2t}$ is a square root different from $\pm 1$, by
considering $\sigma_{\Sm}$-isotropic sets we get examples of codes that
were not considered above. Such non-trivial square roots exists when
the block length $n$ is composite. With such variants, we may be
able to prove better lower bounds than the one mentioned here (Theorem~\ref{thm:error_correction_sigma_stabilizer}).

When the order of $p$ in $\ZZ_n$ is odd, all the factors of $X^n-1$
have odd degree. Therefore, we do not have any non-trivial $\sigma_{\Sm}$-isotropic
ideals.

To summarize:
\begin{itemize}
\item Prime $p$ has order $4t$ and $p^{2t} = -1$ then
  construct codes based on the work of Dutta et. al. \cite{DuttaK2011cyclic,dutta2012exploiting}.
\item Prime $p$ has even order, construct codes based on any one of
  the non-trivial $\sigma_{-1}$-isotropic ideals using
  Theorem~\ref{general_construction_theorem}.
\item Prime $p$ has odd order in $\ZZ_n$, our strategy fails.
\end{itemize}

\tgandhi{ The codes constructed by Theorem Frobenius construction theorem are subsumed by the construction of Theorem \ref{general_construction_theorem}. With this respect the Theorem Frobenius construction theorem appears of no use. }

\ppk{That is incorrect, $p^t = m$ can be a square root of 1 that is different from $\pm 1$}

\tgandhi{ That true. But still Theorem Frobenius construction theorem 
 does not provide the ability to construct more codes. 
 All the codes created by Theorem Frobenius construction theorem 
can be constructed by Theorem \ref{general_construction_theorem}.
 As both Theorem \ref{general_construction_theorem} and 
 Theorem Frobenius construction theorem impose the same constraint on
 the ideal generating them. Moreover these constraint are
 are a consequence of uniquely cyclic and 
$\F[p^2]$-linearity. The only part where $\sigma_\Sm$-isotropy
 plays a role is in enforcing $t$ to be even. 
 See Section 5.1 for the details }

In the constructions that we have sketched above, we need a
characterization of the $\sigma$-centralizer $\C{S}$ if we need some
handle on the error correction properties. We have the following
proposition (proof in section \ref{proof_centralizer-generator} of the
appendix).

\begin{proposition}\label{prop:centralizer-generator}
  Consider a $\sigma$-isotropic ideal $S$ whose generating polynomial
  is $g(X) h(X,\eta)$ as in
  Theorem~\ref{general_construction_theorem}.  The
  $\sigma$-centralizer $\C{S}$ of $S$ is given by the ideal generated
  by $h(X,\eta)$.
\end{proposition}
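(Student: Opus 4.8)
The plan is to characterize the $\sigma_m$-centralizer $\C{S}$ directly using the polynomial form of the isotropy condition from Theorem~\ref{permuted_isotropic_subspace_theorem}, and then to identify it with the ideal generated by $h(X,\eta)$ by factoring $X^n-1$ over $\F[p^2]$ into conjugate pairs. First I would observe that, since $S$ is a simultaneously cyclic $\F[p^2]$-linear subspace, the same holds for $\C{S}$ by Theorem~\ref{centralizer_sim_cyc_theorem} together with Theorem~\ref{linearity_theorem}, so $\C{S}$ is itself an ideal of $\PR(\eta)$ and is therefore generated by some divisor of $X^n-1$. It then suffices to show that this generator is exactly $h(X,\eta)$ — equivalently, that a pair $\Vu=(\Vx,\Vy)$ lies in $\C{S}$ iff the associated polynomial pair is a multiple of $h(X,\eta)$.

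The key computation is to unwind the $\sigma_m$-centralizer condition. Using Theorem~\ref{permuted_isotropic_subspace_theorem} (applied to the pair consisting of a generic element of $S$ and the element $\Vu$), membership $\Vu \in \C{S}$ is equivalent to
\[
\Va(X)\Vy(X^{-m}) - \Vb(X)\Vx(X^{-m}) = 0 \bmod X^n-1
\]
for all $(\Va,\Vb) \in S$. Since $S$ is generated by $g(X)h(X,\eta)$, writing $(\Va,\Vb)$ in terms of this generator and the ``free'' first-versus-second-coordinate structure, the condition becomes a divisibility statement: every irreducible factor $r(X,\eta)$ of $X^n-1$ over $\F[p^2]$ must divide either the generator of $S$ or (after applying $X\mapsto X^{-m}$, which on $\F[p^2]$-irreducible factors of $X^n-1$ permutes them and, crucially for $n \mid p^t+m$, sends a factor to a power of its conjugate) the polynomial associated to $\Vu$. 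Running through the factors: $g(X)$ contains all odd (i.e.\ $\F[p]$-irreducible self-conjugate) factors, and for each conjugate pair $r, r^p$ dividing $(X^n-1)/g(X)$, exactly one of them divides $h(X,\eta)$; the centralizer condition forces $\Vu$'s polynomial to be divisible by precisely the complementary factor of each such pair together with (vacuously) $g$. Matching these divisibility requirements factor-by-factor with the definition of $h$ in Theorem~\ref{general_construction_theorem} shows the generator of $\C{S}$ is an associate of $h(X,\eta)$.

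The main obstacle I expect is the bookkeeping around conjugation and the map $X \mapsto X^{-m}$ on the factorization of $X^n-1$ over $\F[p^2]$: one must check carefully that, under the hypothesis $n \mid p^t+m$, applying $X\mapsto X^{-m}$ to an irreducible factor $r(X,\eta)$ yields (up to units) $r(X,\eta)^p$, so that the ``exactly one of $r$ or $r^p$'' condition defining $h$ is exactly the condition that makes the bilinear pairing between $S$ and $\C{S}$ vanish and be non-degenerate on the complement. A secondary point requiring care is the role of $g(X)$: since $g$ collects the self-conjugate factors, the $\sigma_m$-isotropy of $S$ already forces $g^2 \mid X^n-1$-type behavior is not needed — rather $g$ appears once in $S$ and zero times in $\C{S}$, and one must confirm the centralizer condition imposes no constraint from the $g$-part beyond what $h$ already encodes. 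Once the factor-by-factor correspondence is set up, the conclusion that $\C{S} = \langle h(X,\eta)\rangle$ is immediate, and combining with Theorem~\ref{thm:error_correction_sigma_stabilizer} (not needed here) or simply with dimension counting ($\dim S + \dim \C{S} = 2n$, consistent with $\deg(gh) + \deg h = \ldots$) gives a useful sanity check.
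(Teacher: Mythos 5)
Your strategy is genuinely different from the paper's. Rather than analyzing the centralizer condition factor-by-factor over $\F[p^2]$, the paper writes down an explicit candidate set $Z$ consisting of all pairs $(\Vu,\ \Vu\Va + \Vv')$ with $\Vu$ arbitrary and $\Vv'$ ranging over multiples of $(X^n-1)/\Vg(X)$ (where $\Va$ is the polynomial with $\mu(\Va)\equiv 0$ arising in the proof of Theorem~\ref{general_construction_theorem}), verifies $\sigma_m$-orthogonality of $Z$ to the generator $(\Vg,\Va\Vg)$ of $S$ using only the identity $\Va^{p^t} \equiv \Va \bmod (X^n-1)/\Vg$, and then concludes $Z = \C{S}$ by a cardinality count against $\dim S = n-\deg\Vg$. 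It then maps $Z$ into $\PR(\eta)$ via $(\Vu,\Vv)\mapsto\Vu + c_0^{-1}\eta\Vv$ and observes the image lies in the ideal generated by $h = \gcd\bigl((X^n-1)/\Vg,\ 1+c_0^{-1}\eta\Va\bigr)$. This sidesteps all of the factor-level bookkeeping you propose.

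Beyond the difference in route, your sketch has a genuine gap. The claim that applying $X\mapsto X^{-m}$ to an $\F[p^2]$-irreducible factor $r(X,\eta)$ of $X^n-1$ yields $r(X,\eta)^p$ up to units is false precisely where the construction is non-trivial: Theorem~\ref{general_construction_theorem} forces $t$ to be even, so $X^{-m}=X^{p^t}$ is a power of the $\F[p^2]$-Frobenius and therefore \emph{fixes} each $\F[p^2]$-irreducible factor up to a unit (its roots form one Frobenius orbit under $x\mapsto x^{p^2}$, which $x\mapsto x^{p^t}$ with $t$ even merely permutes). The $r\leftrightarrow r^p$ dichotomy defining $h$ does not come from the $X\mapsto X^{-m}$ twist; it comes from the $\F[p^2]$-linearity forcing $\mu(\Va)\equiv 0 \bmod (X^n-1)/\Vg$, i.e.\ from whether $\Va\equiv\eta$ or $\Va\equiv\eta^p$ modulo each factor. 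Moreover, even granting your ``complementary factor'' conclusion, that reasoning would identify the generator as $h^p$ under the standard encoding $(\Vu,\Vv)\mapsto\Vu+\eta\Vv$; the paper obtains $h$ only because it switches to the twisted encoding $(\Vu,\Vv)\mapsto\Vu+c_0^{-1}\eta\Vv$ for the centralizer, a subtlety your proposal does not address but which is essential to landing on $h$ rather than its conjugate.
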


\ppk{The initial wording was "$\C{S}$ maps to an ideal" which clearly looked odd.}

\tgandhi{$h(X,\eta)$ does not generate the centralizer $\C{S}$ under the map
 $(\Vu,\Vv) \mapsto \Vu+\eta \Vv$. However, centralizer $\C{S}$ maps to the ideal
$h(X,\eta)$ under the map $(\Vu,\Vv) \mapsto \Vu+\eta c_0^{-1} \Vb$.
That is the reason to add the word "$\C{S}$ maps to an ideal" without specifying to
much details here. }

\section{BCH distance and decoding}\label{sec-bch-distance-and-decoding}

For the generators of cyclic codes, we now define the BCH-distance.

\begin{definition}\label{ideal_distance_proposition}
  Let $f(X)$ be any factor of $X^n - 1$ over a field $\F[q]$. The BCH
  distance of $f(X)$ is the maximum $l$ such that there exists a
 sequence of $l-1$ consecutive powers $\beta,\beta^2,\ldots,\beta^{l-1}$,
  all of which are roots of $f(X)$ for some primitive $n$-th root of
  unity $\beta$.
\end{definition}

The BCH distance of $f(X)$ gives a lower bound on the distance of the
associated ideal as a code. Notice that for the codes constructed in
the previous section, the centralizer as an ideal over $\PR[\eta]$ has
generator $h(X,\eta)$ and the Hamming distance of this code is its
joint weight in the quantum setting. Using
Theorem~\ref{thm:error_correction_sigma_stabilizer}, we have the
following proposition.

\begin{proposition}
  Let $S$ be a linear cyclic stabilizer code generated using one of
  Theorems~\ref{general_construction_theorem} and let $g(X)h(X,\eta)$ be
  its generator polynomial. If the polynomial $h(X,\eta)$ has BCH
  distance $d$, then the associated stabilizer code has distance at-least
  $\lfloor\frac{d+1}{2}\rfloor$.
\end{proposition}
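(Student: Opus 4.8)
The plan is to chain together the facts already assembled in the paper. The ideal $S$ is generated by $g(X)h(X,\eta)$, and by Proposition~\ref{prop:centralizer-generator} its $\sigma$-centralizer $\C{S}$ corresponds (under the appropriate encoding $\PR\times\PR \to \PR(\eta)$) to the ideal generated by $h(X,\eta)$ in $\PR(\eta)$. So I would first reduce the statement to a claim about the Hamming distance of the ideal $\langle h(X,\eta)\rangle$ as a classical cyclic code over $\F[p^2]$: namely, that this distance is at least $d$ whenever $h(X,\eta)$ has BCH distance $d$ in the sense of Definition~\ref{ideal_distance_proposition}.

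Second, I would invoke the classical BCH bound. If $\beta$ is a primitive $n$-th root of unity and $\beta,\beta^2,\ldots,\beta^{d-1}$ are all roots of $h(X,\eta)$, then every codeword polynomial in the ideal generated by $h(X,\eta)$ vanishes at these $d-1$ consecutive powers of $\beta$; the standard Vandermonde argument then forces any nonzero such codeword to have Hamming weight at least $d$. (This is exactly the classical BCH/Singleton-type lower bound, applied over the field $\F[p^2]$ rather than $\F[p]$; nothing new is needed here.) Hence the ideal $\langle h(X,\eta)\rangle$, viewed as an $\F[p^2]$-linear code of length $n$, has minimum Hamming distance at least $d$.

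Third, I would translate Hamming distance back into joint weight and then feed it into Theorem~\ref{thm:error_correction_sigma_stabilizer}. As noted in the paragraph preceding Definition~\ref{ideal_distance_proposition}, for these linear cyclic codes the Hamming distance of the centralizer ideal over $\PR(\eta)$ equals the joint weight of $\C{S}$ in the quantum setting (a vector $\Va+\eta\Vb$ is nonzero in a coordinate exactly when $(a_i,b_i)\neq(0,0)$). Therefore every nonzero element of $\C{S}$ — in particular every element of $\C{S}\setminus S$ — has joint weight at least $d$. Applying part~2 of Theorem~\ref{thm:error_correction_sigma_stabilizer} with this value $d$ gives that the associated stabilizer code has distance at least $\left\lfloor\frac{d+1}{2}\right\rfloor$, which is the claim.

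The only genuinely delicate point is the bookkeeping in the second and third steps: one must be careful that the "conjugate factor" structure of $h(X,\eta)$ in Theorem~\ref{general_construction_theorem} (exactly one of $r(X,\eta)$, $r(X,\eta)^p$ divides $h$) does not interfere with the count of common consecutive roots — but since the BCH distance is defined directly as the length of the longest run of consecutive powers of some primitive $\beta$ that are roots of $h(X,\eta)$, this is already baked into the hypothesis and no extra work is required. The remaining subtlety is the precise encoding under which $\C{S}$ corresponds to $\langle h(X,\eta)\rangle$ (the map $(\Vu,\Vv)\mapsto \Vu+\eta c_0^{-1}\Vv$ rather than $\Vu+\eta\Vv$, as flagged in the discussion of Proposition~\ref{prop:centralizer-generator}); one checks that this rescaling is a coordinatewise invertible linear map and hence preserves both Hamming weight and joint weight, so the distance bound carries over unchanged.
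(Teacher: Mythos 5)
Your proof fills in precisely the chain the paper itself gestures at immediately before the proposition (BCH bound for the ideal $\langle h(X,\eta)\rangle$, identification of that ideal with $\C{S}$ via Proposition~\ref{prop:centralizer-generator}, Hamming weight over $\F[p^2]$ equals joint weight, then Theorem~\ref{thm:error_correction_sigma_stabilizer}); the paper gives no further proof, so you have reconstructed the intended argument. Your closing remarks about the pure-distance relaxation ($\C{S}$ rather than $\C{S}\setminus S$) and the $c_0^{-1}$ rescaling being weight-preserving are correct and match what the paper implicitly relies on.
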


For a classical cyclic code, the celebrated
Berlekamp-Massey-Welch~\cite{berlekamp1968algebraic,massey1969-shift-register,welch1986error}
algorithm gives an efficient error correction procedure. We
reformulate this result for our use in the decoding of quantum cyclic codes.

\begin{theorem}[Berlekamp-Massey-Welch~\cite{berlekamp1968algebraic,massey1969-shift-register,welch1986error}]
\label{berlekamp_theorem}
  Let $f(X)$ be any factor of $X^n-1$ over a finite field $\F[q]$ with
  BCH distance at least $2t+1$. Let $e(X)$ be any unknown polynomial
  with Hamming weight at most $t$. There exists an efficient
  algorithm, that takes as input any $r(X)=e(X) \mod f(X)$ and outputs $e(X)$.
\end{theorem}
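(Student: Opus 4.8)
The plan is to reduce the Berlekamp-Massey-Welch theorem for factors of $X^n - 1$ to the classical statement for BCH codes, which is how the result is usually phrased in the literature. First I would recall the setup: $f(X)$ is a factor of $X^n - 1$ over $\F[q]$ with $\gcd(n,q) = 1$, so $X^n - 1$ is separable and its roots are exactly the $n$-th roots of unity in some extension $\F[q^s]$. By the definition of BCH distance, there is a primitive $n$-th root of unity $\beta$ and a run of $2t$ consecutive powers $\beta^b, \beta^{b+1}, \ldots, \beta^{b+2t-1}$ all of which are roots of $f(X)$. Consider the ideal $\left<f(X)\right>$ in $\PR' = \F[q][X]/\left<X^n-1\right>$; this is a cyclic code whose codewords, viewed as polynomials, are precisely those vanishing at all roots of $f(X)$, in particular at the $2t$ consecutive powers of $\beta$. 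Hence this code contains the narrow-sense-like BCH code with designed distance $2t+1$, so its minimum Hamming distance is at least $2t+1$ by the BCH bound (a Vandermonde argument on the chosen consecutive roots).

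Next I would set up the decoding. Given the received word $r(X) = e(X) \bmod f(X)$ with $\wt{e} \le t$, the key observation is that $r(X)$ and $e(X)$ agree as elements of the quotient by $f(X)$, so they have the same "syndrome" with respect to the roots of $f(X)$: in particular $r(\beta^{b+j}) = e(\beta^{b+j})$ for $j = 0, \ldots, 2t-1$. These $2t$ field-element evaluations $S_j := r(\beta^{b+j})$ are computable in polynomial time from $r(X)$. Now I invoke the classical Berlekamp-Massey algorithm (or the Welch-Berlekamp key-equation solver): from the syndrome sequence $S_0, \ldots, S_{2t-1}$ it produces, in time polynomial in $n$, $t$, and $\log q$, an error-locator polynomial $\Lambda(X)$ and error-evaluator polynomial, from which Forney's formula (or direct Gaussian elimination over $\F[q^s]$) recovers the error positions among $\{0,\ldots,n-1\}$ and their values; uniqueness here is exactly the statement that two words of weight $\le t$ with the same $2t$ consecutive syndromes must coincide, which follows from the BCH bound established in the previous step. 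This reconstructs $e(X)$ exactly.

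The final bookkeeping step is to confirm that the arithmetic all lives in a field of manageable size: the extension $\F[q^s]$ needed contains a primitive $n$-th root of unity, so $s$ is the multiplicative order of $q$ modulo $n$, which is at most $n$; thus each field operation costs $\mathrm{poly}(s \log q) = \mathrm{poly}(n \log q)$, and the whole procedure is efficient. One should also note that the output $e(X)$ is returned as an element of $\PR'$ (equivalently a vector in $\F[q]^n$) of weight $\le t$, and that it is the unique such preimage of $r(X)$ modulo $f(X)$, so the algorithm is well-defined.

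The main obstacle, and the place to be slightly careful, is that the consecutive roots $\beta, \ldots, \beta^{l-1}$ guaranteed by Definition~\ref{ideal_distance_proposition} need not start at the first power (the definition allows $\beta^1$ through $\beta^{l-1}$, but after replacing $\beta$ by a suitable power one really does get $2t$ consecutive exponents, possibly not beginning at $0$), so I would phrase the BCH bound and the key equation for a general consecutive block $\beta^b, \ldots, \beta^{b+2t-1}$ rather than the narrow-sense case; this is a standard generalization (the "generalized BCH bound") and changes nothing essential in Berlekamp-Massey beyond a shift in indexing. Everything else is a direct citation of the classical algorithm, so the content of the theorem is really just the translation of "ideal generated by a factor of $X^n-1$ with large BCH distance" into "cyclic code with large designed distance."
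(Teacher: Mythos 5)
Your proposal is correct and is exactly the standard reduction that the cited references establish; the paper itself offers no proof of this theorem and simply cites Berlekamp, Massey, and Welch, so your argument fills in the content those citations point to rather than competing with an in-paper derivation. The chain of ideas is right: evaluating $r(X)$ at the $2t$ consecutive powers of $\beta$ recovers the syndrome sequence of $e(X)$ because $r \equiv e \pmod{f}$ and those powers are roots of $f$; the Vandermonde/BCH argument gives uniqueness of a weight-$\le t$ preimage; and Berlekamp--Massey plus error-evaluation over the splitting field $\F[{q^s}]$ with $s \le n$ recovers $e(X)$ in time polynomial in $n$ and $\log q$. Two small points worth tightening: first, the paper's Definition~\ref{ideal_distance_proposition} fixes the consecutive run to start at $\beta^1$ (while allowing $\beta$ to be any primitive $n$-th root), so you may simply take the narrow-sense formulation with $b=1$ and avoid the digression about general starting exponents; second, the uniqueness claim is most cleanly justified by applying the Vandermonde bound directly to the difference $e_1 - e_2$ (which vanishes at the $2t$ consecutive roots and has weight $\le 2t$), rather than by appealing to the minimum distance of the cyclic code generated by $f(X)$, since $e_1 - e_2$ need not lie in that ideal. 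Neither point is a gap; both are phrasing choices, and the proof as written is sound.
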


Let $S$ be any $\sigma_\Sm$-isotropic cyclic code that we constructed
in the previous section. We fix some notation for this section.  Recall,
$S^{\sigma_\Sm}$ is the space of all elements $(\Va,\sigma_\Sm \Vb)$,
where $(\Va,\Vb)$ is in $S$. This set forms an isotropic subset under the
standard symplectic form
(Lemma~\ref{lem-sigma-standard-connection}). The corresponding
stabilizer group $\mc{S}'$ consists of Weyl operators $W_{\Va,\Vb}' =
\rho(\Va,\sigma_m \Vb)U_\Va V_{\sigma_\Sm \Vb}$, $\left(\Va,\Vb
\right)$ is in $S$, and the quantum code $\mc{C}$ is the space of
vectors stabilized by $\mc{S}'$.

Suppose $\ket{\psi}$ in $\mc{C}$ was the message that was transmitted
and the received message was $\ket{\phi}=U_\Vue V_{\sigma_\Sm\Vve}\ket{\psi}$ for
some unknown $\Vue$ and $\Vve$ in $\F[p]^n$.  It is sufficient to find
$\Vue$ and $\Vve$ to recover the actual message: given $\Vue$ and
$\Vve$ we apply the operator $V_{\sigma_\Sm \Vve}^\dagger
U_{\Vue}^\dagger$ to $\ket{\phi}$. The following proposition plays an
important role in finding $\Vue$ and $\Vve$ efficiently.

\begin{proposition}\label{isotropy_poly_computation_proposition}
For any $( \Va,\Vb ) \in S$, we can efficiently compute the polynomial $\Va(X)\Vve(X^{-\Sm})
  -\Vb(X) \Vue(X^{-\Sm}) \mod X^n-1$
\end{proposition}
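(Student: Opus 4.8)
The plan is to use the measurement data available from the stabilizer formalism and translate it into the polynomial identity we want. When the corrupted state $\ket{\phi} = U_{\Vue} V_{\sigma_\Sm \Vve}\ket{\psi}$ is received, for each generator $W_{\Va,\Vb}' = \rho(\Va,\sigma_\Sm\Vb) U_\Va V_{\sigma_\Sm\Vb}$ of $\mc{S}'$ we can measure the eigenvalue of $W_{\Va,\Vb}'$ on $\ket{\phi}$; this is the syndrome. Since $\ket{\psi}$ is stabilized by $W_{\Va,\Vb}'$, the eigenvalue is exactly the phase $\omega^{\gamma}$ picked up when commuting $W_{\Va,\Vb}'$ past $U_{\Vue} V_{\sigma_\Sm\Vve}$, i.e.\ $\gamma$ is (up to sign) the standard symplectic product $\Symplectic{(\Va,\sigma_\Sm\Vb)}{(\Vue,\sigma_\Sm\Vve)}$. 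So from the syndrome measurements we can read off this symplectic value for every $(\Va,\Vb)\in S$.

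The first step is therefore to write down explicitly, using Equation~\eqref{weyl_commutation_equation}, that the eigenvalue of $W_{\Va,\Vb}'$ on $\ket{\phi}$ equals $\omega$ raised to $\Symplectic{(\Va,\sigma_\Sm\Vb)}{(\Vue,\sigma_\Sm\Vve)} = \pSymplectic{(\Va,\Vb)}{(\Vue,\Vve)}$, using Lemma~\ref{lem-sigma-standard-connection} and the involution property $\transpose{\sigma_\Sm}=\sigma_\Sm$ to rewrite it in $\sigma_\Sm$-form. Next I would invoke the coefficient-extraction computation already carried out in the proof of Theorem~\ref{permuted_isotropic_subspace_theorem}: there it is shown that the coefficient of $X^i$ in $\Va(X)\Vd(X^{-\Sm}) - \Vb(X)\Vc(X^{-\Sm}) \bmod X^n-1$ equals $\pSymplectic{(\Va,\Vb)}{(N^{mi}\Vc, N^{mi}\Vd)}$. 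Applying this with $(\Vc,\Vd) = (\Vue,\Vve)$, the coefficient of $X^i$ in the target polynomial $\Va(X)\Vve(X^{-\Sm}) - \Vb(X)\Vue(X^{-\Sm})$ is $\pSymplectic{(\Va,\Vb)}{(N^{mi}\Vue, N^{mi}\Vve)}$. But $(N^{mi}\Vue, N^{mi}\Vve)$ is just the simultaneous shift of $(\Vue,\Vve)$ by $mi$ positions, and since $S$ is simultaneously cyclic, $(N^{mi}\Va, N^{mi}\Vb) \in S$ whenever $(\Va,\Vb)\in S$. Hence each coefficient is itself a syndrome value $\pSymplectic{(N^{mi}\Va, N^{mi}\Vb)}{(\Vue,\Vve)}$ (up to the sign from skew-symmetry), which we have already measured.

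Putting these together: for a fixed generator $(\Va,\Vb)$ of $S$, running over the $n$ cyclic shifts $(N^j\Va, N^j\Vb)$ and reading off the corresponding syndrome phases gives us all $n$ coefficients of $\Va(X)\Vve(X^{-\Sm}) - \Vb(X)\Vue(X^{-\Sm}) \bmod X^n-1$, so the polynomial is computed. The main obstacle I expect is bookkeeping rather than conceptual: carefully tracking the scalar $\rho(\Va,\sigma_\Sm\Vb)$ and the sign conventions in the commutation relation so that the measured phase is exactly the claimed symplectic pairing (and confirming the $\frac12$ in $\rho$ causes no trouble for odd $p$), and confirming that the reindexing $i \mapsto mi$ together with $m^2 \equiv 1 \bmod n$ really does run over all residues so that every coefficient is recoverable. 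Efficiency is immediate since there are only $n$ shifts and each syndrome is a single measurement.
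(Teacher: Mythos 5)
Your proposal is correct and follows essentially the same route as the paper's proof: both identify each coefficient of the target polynomial with a syndrome value $\pSymplectic{(\SH^j\Va,\SH^j\Vb)}{(\Vue,\Vve)}$ (the paper packages this as Equation~\ref{eqn:recover-poly}, drawing on the same computation as Theorem~\ref{permuted_isotropic_subspace_theorem}), and both extract each such value by a non-destructive eigenvalue measurement of $\ket{\phi}$ against the shifted generator $W'_{\SH^j\Va,\SH^j\Vb}$ — the paper names this phase estimation, you call it syndrome measurement, but it is the same step — repeating $n$ times over the cyclic shifts to obtain all coefficients.
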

\begin{proof}

  We need to compute the following polynomial.
  \begin{equation}\label{eqn:recover-poly}
    \Va(x)\Vve(X^{-\Sm}) - \Vb(X)\Vue(X^{-\Sm})= \sum_{i=0}^{n-1} \Symplectic{(\SH^{i}\Va,\SH^{i}\Vb)}{(\Vue,\Vve)}_{\sigma_m}X^{mi} \mod X^n-1
  \end{equation}
  We would recover one coefficient at a time. 
  Recall that the sent message $\ket{\psi}$ is
  stabilized by $W_{\Va,\Vb}'$ and it is easy to verify the
  commutation relation
  \[
  W_{\Va,\Vb}' U_{\Vue}V_{\sigma_m\Vve} =
  \omega^{\Symplectic{(\Va,\Vb)}{(\Vue,\Vve)}_{\sigma_m}}  U_{\Vue}V_{\sigma_m\Vve}
  W_{\Va,\Vb}'.
  \]
  Hence the received vector $\ket{\phi}=U_\Vue V_{\sigma_\Sm\Vve}
  \ket{\psi}$ is an eigen vector of $W_{\Va,\Vb}'$ with eigen value
  $\omega^{\Symplectic{(\Va,\Vb)}{(\Vue,\Vve)}_{\sigma_m}}$. Using the phase
  estimation algorithm~\cite[5.2]{nielsen2002quantum}, we extract the
  inner product $\Symplectic{(\Va,\Vb)}{(\Vue,\Vve)}_{\sigma_m}$ without
  modifying the received state $\ket{\phi}$. This recovers the
  constant term of the polynomial

  We repeat the above procedure with the Weyl Operator $W'_{\SH^i\Va,\SH^i\Vb}$
  to compute the coefficient $\Symplectic{(\SH^{i}\Va,\SH^{i}\Vb)}{(\Vue,\Vve)}_{\sigma_m} $.
  Each of these phase estimations
  gives us an additional coefficient of the polynomial. This allows
  us to recover the polynomial in Equation~\ref{eqn:recover-poly}
  after $n$ phase estimations.
 \end{proof}

  Notice that if the pair $(\Vue,\sigma_m\Vve)$ is of joint weight
  less than $\tau$, the polynomial $\Vue + \eta \Vve$ as a polynomial
  in $\PR[\eta]$ will have at most $2\tau$ non-zero coefficients. The
  main idea is to use 
  Proposition~\ref{isotropy_poly_computation_proposition} to recover
  the polynomial $\Vue + \eta \Vve$ modulo the generator polynomial
  $h(X,\eta)$. Then using the classical Berlekamp-Massey-Welch algorithm
  we recover $\Vue + \eta \Vve$. This is formalized in the following 
  proposition.

\begin{theorem}\label{error_correction_theorem}
  Let $S$ be $\sigma_\Sm$-isotropic ideal of $\PR(\eta)$. Let $g(X)$, $h(X,\eta)$
  be the polynomial satisfying the properties in Theorem \ref{general_construction_theorem}.
  Let $h(X,\eta)$ be of BCH distance $4\tau +1$.
  There exists an efficient quantum algorithm that corrects errors of joint weight $\tau$
\end{theorem}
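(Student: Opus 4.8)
The plan is to reduce the quantum decoding task to the classical Berlekamp--Massey--Welch algorithm of Theorem~\ref{berlekamp_theorem} applied to the generator polynomial $h(X,\eta)$ of the $\sigma_\Sm$-centralizer. Recall that the transmitted codeword $\ket{\psi} \in \mc{C}$ is corrupted to $\ket{\phi} = U_\Vue V_{\sigma_\Sm \Vve}\ket{\psi}$ where the pair $(\Vue,\sigma_\Sm \Vve)$ has joint weight at most $\tau$; by the observation following Proposition~\ref{isotropy_poly_computation_proposition}, the polynomial $e(X,\eta) := \Vue(X) + \eta\, \Vve(X) \in \PR(\eta)$ then has Hamming weight at most $2\tau$ as an $\F[p^2]$-polynomial of degree $< n$. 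Since $h(X,\eta)$ has BCH distance $4\tau+1 = 2(2\tau)+1$, Theorem~\ref{berlekamp_theorem} applies with error-weight bound $2\tau$: it suffices to produce the residue $e(X,\eta) \bmod h(X,\eta)$ and then run the classical decoder to recover $e(X,\eta)$ itself, from which we read off $\Vue$ and $\Vve$ and apply $V_{\sigma_\Sm\Vve}^\dagger U_\Vue^\dagger$ to $\ket{\phi}$ to recover $\ket{\psi}$.

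The first step is to compute $e(X,\eta) \bmod h(X,\eta)$ using only measurements on $\ket{\phi}$. Here I would invoke Proposition~\ref{prop:centralizer-generator}: the $\sigma$-centralizer $\C{S}$ is (up to the twist $(\Vu,\Vv)\mapsto \Vu + \eta c_0^{-1}\Vv$ noted in the margin comment) precisely the ideal generated by $h(X,\eta)$ in $\PR(\eta)$. Working modulo $h(X,\eta)$ is the same as projecting onto this ideal, i.e. onto those polynomials whose residues are determined by evaluation at the roots of $(X^n-1)/h(X,\eta)$ — equivalently, the residue is a fixed $\F[p^2]$-linear function of the coefficient vector $(\Vue,\Vve)$. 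By Proposition~\ref{isotropy_poly_computation_proposition}, for each generator $(\Va,\Vb)$ of $S$ and each shift $\SH^i$ we can, via phase estimation with the Weyl operator $W'_{\SH^i\Va,\SH^i\Vb}$, extract the symplectic pairings $\Symplectic{(\SH^i\Va,\SH^i\Vb)}{(\Vue,\Vve)}_{\sigma_\Sm}$ nondestructively, hence recover the full polynomial $\Va(X)\Vve(X^{-\Sm}) - \Vb(X)\Vue(X^{-\Sm}) \bmod X^n-1$. Running over a generating set of the ideal $S = \langle g(X)h(X,\eta)\rangle$ (together with the $\eta$-multiple, using $\F[p^2]$-linearity) gives a system of $\F[p^2]$-linear equations whose solution space is exactly the set of $(\Vue',\Vve')$ congruent to $(\Vue,\Vve)$ modulo the centralizer ideal $\langle h(X,\eta)\rangle$; solving this linear system classically yields $e(X,\eta) \bmod h(X,\eta)$.

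Once the residue is in hand, the second step is immediate: feed $r(X,\eta) = e(X,\eta) \bmod h(X,\eta)$ into the classical Berlekamp--Massey--Welch algorithm of Theorem~\ref{berlekamp_theorem} over $\F[p^2]$, which outputs $e(X,\eta)$ in polynomial time since its Hamming weight $2\tau$ is within the guaranteed correction radius; extracting the $1$ and $\eta$ components recovers $\Vue$ and $\Vve$, and applying $V_{\sigma_\Sm\Vve}^\dagger U_\Vue^\dagger$ to $\ket{\phi}$ restores $\ket{\psi}$. The efficiency is clear throughout: $O(n)$ phase estimations per ideal generator, one classical linear solve, and one classical Berlekamp--Massey--Welch invocation. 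I expect the main obstacle to be the bookkeeping in the first step — verifying that the phase-estimation data from the generators of $S$ really does pin down the centralizer residue $e(X,\eta)\bmod h(X,\eta)$ (and nothing coarser), i.e. correctly matching the symplectic pairings computed against $S$ with evaluation at the complementary roots via Proposition~\ref{prop:centralizer-generator}, and handling the $c_0^{-1}$ twist between the map $(\Vu,\Vv)\mapsto \Vu+\eta\Vv$ and the one under which $h(X,\eta)$ literally generates the centralizer. The rest is a routine assembly of results already established.
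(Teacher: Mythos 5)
Your high-level plan is the same as the paper's: extract symplectic pairings nondestructively via phase estimation (Proposition~\ref{isotropy_poly_computation_proposition}), package them into a residue modulo $h(X,\eta)$, and hand a sparse polynomial of weight at most $2\tau$ to Berlekamp--Massey--Welch (Theorem~\ref{berlekamp_theorem}), which works since $4\tau+1 = 2(2\tau)+1$. Where you leave things abstract --- ``set up a linear system against the generators of $S$ and match pairings with evaluation at complementary roots'' --- is exactly where the paper is concrete, and the ingredient you are missing is Proposition~\ref{h_factor_proposition}: the polynomial $\Va(X)$ for which $(\Vg,\Va\Vg)\in S$ satisfies $\Va(X) \equiv \eta^{p} \pmod{h(X,\eta)}$. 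Taking $(\Va,\Vb)=(\Vg,\Va\Vg)$ in Proposition~\ref{isotropy_poly_computation_proposition} produces $r'(X)=\Vg(X)\Vve(X^{-\Sm}) - \Va(X)\Vg(X)\Vue(X^{-\Sm})$ in full; since $\gcd(g,h)=1$ (as $X^n-1$ is squarefree and $h \divides (X^n-1)/g$), one can divide by $g(X)$ modulo $h(X,\eta)$ and substitute $\Va\equiv\eta^{p}$ to get $\Vve(X^{-\Sm}) - \eta^{p}\Vue(X^{-\Sm}) \bmod h(X,\eta)$ directly, with no linear solve and no need to chase the $c_0^{-1}$ twist of Proposition~\ref{prop:centralizer-generator}. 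Note that the sparse polynomial fed to the classical decoder is therefore $\Vve(X^{-\Sm}) - \eta^{p}\Vue(X^{-\Sm})$, not $\Vue + \eta\Vve$ as you wrote; they are not literally the same element of $\PR(\eta)$, but the $\F[p^2]$-Hamming weight of the former equals the joint weight of $(\sigma_\Sm\Vue,\sigma_\Sm\Vve)$, which equals the joint weight of $(\Vue,\Vve)\le 2\tau$, so the distance budget still works out. So: right approach, but the ``bookkeeping'' you flag is real, and the specific identity $\Va\equiv\eta^{p}\pmod{h}$ is what discharges it.
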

\begin{proof}

From the proof of Theorem \ref{general_construction_theorem}
(Section \ref{proof_construction_theorem}), we know that
there exists a polynomial $\Va(X) \in \PR$ such that
$\Vg(X) + \eta \Va(X) \Vg(X)$ belongs to the ideal $S$. By abusing the notation,
 let $(\Vg,\Va\Vg) \in S$.
For $(\Vue,\Vve) \in \FFpn$,
 \begin{equation*}
    r'(X)=\Vg(X)\Vve(X^{-\Sm}) - \Va(X)\Vg(X)\Vue(X^{-\Sm})
\end{equation*}
By Proposition \ref{isotropy_poly_computation_proposition}, we can compute $r'(X)$ efficiently.

Divide both side by $g(X)$ and then take modulo $h(X,\eta)$. 
From the proof of Theorem \ref{general_construction_theorem}
(Section \ref{proof_construction_theorem}, Proposition \ref{h_factor_proposition})
we know that $\Va(X) = \eta^p \mod h(X,\eta)$. Thus,
\begin{equation*}
    r(X)=\Vve(X^{-\Sm}) - \eta^p \Vue(X^{-\Sm}) \mod h(X,\eta)
\end{equation*}
If joint weight  $wt\left(\Vue,\sigma_\Sm\Vve \right)$ is at most $\tau$
then the joint weight of $wt(\Vue,\Vve)$ is at most $2\tau$. 
Notice that the joint weight $wt\left( \sigma_\Sm\Vue,\sigma_\Sm\ \Vve \right)$ is
 equal to the joint weight $wt(\Vue,\Vve)$. Thus,  we could use 
 Berlekamp-Massey-Welch algorithm (Theorem \ref{berlekamp_theorem}) to
 compute $\Vue,\Vve$.

 \end{proof} 

\section{Explicit examples}\label{section_code_examples}

The Table \ref{constructed_codes_table} shows the codes constructed
over $\F[2]$ based on Theorem \ref{general_construction_theorem}. Fix a primitive
$n$-th root of unity $\beta$. The table uses the following notation:
$g_i$ (respectively $h_i$) is the irreducible factor of $X^n - 1$ over
the field $\F[2]$ (respectively $\F[2^2]$) with $\beta^i$ as one of
its roots.

\begin{table}[!ht]
\centering
\caption{Explicit examples of codes over $\F[2]$}
 \begin{tabular}{||c | c | c |c | c| c | c |c | c||}
 \hline
 \multirow{2}{*}{$n$} &\multirow{2}{*}{$k$} &
\multicolumn{2}{c|}{Factors}& Consecutive Root& \multicolumn{2}{c|}{Theorem \ref{thm:error_correction_sigma_stabilizer}} & \multicolumn{2}{c||}{Brute Force}\\ \cline{3-4}\cline{6-9}
                      &  & $g(X)$& $h(X,\eta)$ & of $h(X,\eta)$ & Detect & Correct & Detect & Correct \\ [0.5ex]

\hline\hline
 $5$ & $1$ & $g_0$ & $h_2$ & $\beta^2,\beta^3$  & 1 & 0 & 1 & 0 \\ \hline
 $9$ & $1$ & $g_0$ & $h_2 h_6$ & $\beta^5,\beta^6$  & 1 & 0 & 1 & 0 \\ \hline
 $11$ & $1$ & $g_0$ & $h_1$ & $\beta^3,\beta^4,\beta^5$  & 1 & 0 & 2 & 1 \\ \hline
 $13$ & $1$ & $g_0$ & $h_2$ & $\beta^5,\cdots,\beta^8$  & 2 & 1 & 4 & 2 \\ \hline
 $15$ & $1$ & $g_0$ & $h_1h_5h_6h_7$ & $\beta^4,\cdots,\beta^7$  & 2 & 1 & 3 & 1 \\ \hline
 $15$ & $5$ & $g_0g_7$ & $h_1h_5h_6$ & $\beta^4,\beta^5,\beta^6$  & 1 & 0 & 2 & 1 \\ \hline
 $15$ & $9$ & $g_0g_3g_7$ & $h_1h_5$ & $\beta^4,\beta^5$  & 1 & 0 & 1 & 0 \\ \hline
 $17$ & $1$ & $g_0$ & $h_2h_6$ & $\beta^6,\cdots,\beta^{11}$  & 3 & 1 & 5 & 2 \\ \hline
 $19$ & $1$ & $g_0$ & $h_1$ & $\beta^4,\cdots,\beta^{7}$  & 2 & 1 & 3 & 1 \\ \hline
 $21$ & $13$ & $g_0g_3g_5g_9$ & $h_1$ & $\beta^7,\beta^{8}$  & 1 & 0 & 1 & 0 \\ \hline
 $25$ & $1$ & $g_0$ & $h_1h_5$ & $\beta^4,\beta^5,\beta^{6}$  & 1 & 0 & 2 & 1 \\ \hline
 $27$ & $7$ & $g_0g_3$ & $h_1h_9$ & $\beta^9,\beta^{10}$  & 1 & 0 & 1 & 0 \\ \hline
 $29$ & $1$ & $g_0$ & $h_1$ & $\beta^4,\cdots,\beta^7$  & 2 & 1 & 3 & 1 \\ \hline

\end{tabular}
\label{constructed_codes_table}
\end{table}

\ppk{If the brute force distance is 4 then the code can only correct 1 error not 2 as given in the table. Please clarify this point.}

\tgandhi{ The value 4 is about error detection and not the distance. In previous version we had written distance values under the heading of detect which was incorrect. }

The  Table \ref{constructed_codes_table}  shows distance
based on Theorem \ref{thm:error_correction_sigma_stabilizer} as well as
by brute force computation. In our experiments we found the minimum 
distance of $\C{S}^{\sigma}$ to be almost equal to the minimum distance of 
$\C{S}$ ( which could be seen in the table). This corroborates our claim
 that the distance bound of Therorem \ref{thm:error_correction_sigma_stabilizer}
is a bit conservative.

In order to better analyze the performance of our codes we have 
performed simulation of the effect of depolarizing channel on some of the codes presented in Table
 \ref{constructed_codes_table}. These simulations show that there exists a threshold probability below which 
increasing the dimension improves the performance of our codes. 
More details are given in the appendix~\ref{sec_simulations}.

\section{Conclusion}

The main theme of this article is to construct stabilizer codes based
on alternate symplectic forms. Any two (full rank) symplectic forms
are equivalent in the sense that the associated Weyl operators
form a basis set for the error space and hence can mathematically
model all quantum operations on the relevant Hilbert space. Modifying
the symplectic form therefore is clearly \emph{not just restricted} to
cyclic codes. However, if we need to get meaningful bounds on the
distance, these changes needs to be balanced carefully. In the context
of cyclic linear stabilizer code, symplectic forms of the kind
$\Symplectic{\cdot}{\cdot}_{\sigma_m}$ were the only ones that gave us
enough control to carry out our constructions and get nontrivial
distance bounds at the same time. A future line of work would be to
extend some of the ideas here to general stabilizer codes. We believe
would lead to some interesting examples of quantum codes.

The equivalence of symplectic forms means that our construction could
as well be carried out by considering the set $S^{\sigma_m}$ under the
standard isotropy condition.  However, notice that the set
$S^{\sigma_m}$ as opposed to $S$ is \emph{not cyclic} and hence the
efficient decoding algorithms that we have will not be apparent in the
setting of the standard symplectic forms. The reason for this anomaly
is that properties like distance and cyclicity are \emph{not
  preserved} under a basis change. Therefore, visualizing this code as
the subspace $S$ as opposed to $S^{\sigma_m}$ is crucial. This is what
sets the codes apart from other constructions of codes for similar
lengths. In general, decoding is an intractable problem even for
classical codes.

\bibliographystyle{plainurl}
\bibliography{../QuantumAuxFiles/qecc}

\appendix
\section{Appendix}

We will setup some preliminaries before proceeding with the proofs.

For a given polynomial $\Vg(X)$ in $\PR$, the polynomial $\Vg(X^{-\Sm})$ plays an important role in the $\sigma_\Sm$-isotropy condition. 
If there exists $t$ such that $n \divides  p^t + \Sm$, then $X^{-\Sm }=X^{p^t}$ (If $m=-1$, then we will choose $t=0$). Therefore, for any polynomial $\Va \in \PR$, 

\begin{equation}\label{frobenius_equation}
 \Va(X^{-\Sm}) = \Va(X)^{p^t} \mod X^n-1 .
\end{equation}



The $\sigma_\Sm$-isotropy (Equation~\ref{polynomial_isotropy_condition}) becomes
\[ \Va(X)\Vd(X)^{p^t} - \Vb(X)\Vc(X)^{p^t} \]



For a given simultaneously cyclic subspace $S$ of $\mathbb{F}_p^n \times \mathbb{F}_p^n$, let's define the following two set.
\begin{equation}
G =  \set*{\Va \given \text{exists $\Vb$ such that } (\Va,\Vb) \in S}
\end{equation}
\begin{equation}
H =  \set*{\Vb \given (\mathbf{0},\Vb) \in S}
\end{equation}

As $S$ is simultaneously cyclic, both $G$, $H$ are cyclic code in $\F[p]^n$. As seen earlier,
they could be thought of as an ideal in $\PR$. Thus they are generated by a factor of $X^n-1$.
Let $\Vg(X)$ and $\Vh(X)$ be the factors of $X^n-1$ that generates $G$ and $H$ as the ideals of $\PR$ respectively.

It is easy to see that any simultaneously cyclic subspace of $\FFpn$ could be be described by three polynomials over $\PR$.

\begin{lemma}\label{simultaneously_generator_lemma}
Every simultaneously cyclic subspace $S$ of $\FFpn$ can be described by three polynomials 
$\Vg(X)$, $\Vh(X)$ which are factor of $X^n-1$ over $\F[p]$ and $\Vf(X) \in \PR$ such that $(\Vg,\Vf) \in S$ and $(\mathbf{0},\Vh) \in S$.
\end{lemma}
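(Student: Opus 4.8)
The plan is to realise $S$ as a finitely generated module over the cyclotomic ring $\PR = \F[p][X]/\langle X^n-1\rangle$ and then split off the pieces $G$ and $H$. First I would record that simultaneous cyclicity is exactly the statement that $S$ is closed under the diagonal action of $X$: the shift $\SH$ is multiplication by $X$ on each coordinate, so $(\SH\Va,\SH\Vb)\in S$ for all $(\Va,\Vb)\in S$ means $S$ is an $\PR$-submodule of $\PR\times\PR$. Consequently the first-coordinate projection $G=\set*{\Va\given \exists\,\Vb,\ (\Va,\Vb)\in S}$ and the second-coordinate kernel $H=\set*{\Vb\given (\mathbf{0},\Vb)\in S}$ are ideals of $\PR$; since $\gcd(n,p)=1$, every ideal of $\PR$ is principal and generated by a divisor of $X^n-1$, which yields the polynomials $\Vg(X)$ and $\Vh(X)$. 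Because $\Vg\in G$, there exists at least one $\Vf(X)\in\PR$ with $(\Vg,\Vf)\in S$; fix such an $\Vf$.

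Next I would check that $(\Vg,\Vf)$ and $(\mathbf{0},\Vh)$ generate $S$ as an $\PR$-module, which is the precise sense in which the three polynomials describe $S$. One inclusion is immediate, since both generators lie in $S$ and $S$ is an $\PR$-module. For the reverse inclusion, take an arbitrary $(\Va,\Vb)\in S$. Its first coordinate lies in $G=\langle\Vg\rangle$, so $\Va=a(X)\Vg(X)$ for some $a(X)\in\PR$, and then
\[
(\Va,\Vb)-a(X)\cdot(\Vg,\Vf) = \bigl(\mathbf{0},\ \Vb-a(X)\Vf(X)\bigr)\in S,
\]
so $\Vb-a(X)\Vf(X)\in H=\langle\Vh\rangle$, say $\Vb-a(X)\Vf(X)=b(X)\Vh(X)$. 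Hence $(\Va,\Vb)=a(X)\cdot(\Vg,\Vf)+b(X)\cdot(\mathbf{0},\Vh)$. Equivalently, this is just the splitting of the short exact sequence $0\to H\to S\to G\to 0$ of $\PR$-modules given by the first-coordinate projection.

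There is essentially no hard step here: the only points to be careful about are that the structure theory of cyclic codes (every ideal of $\PR$ is principal, generated by a divisor of $X^n-1$) is being invoked for both $G$ and $H$, and that $\Vf$ is recorded as an honest element of $\PR$ — there is no canonical choice, any lift of $\Vg$ into $S$ works. If the lemma is read as merely asserting the existence of $\Vg,\Vh,\Vf$ with the stated divisibility and membership properties, the first paragraph and the choice of $\Vf$ already suffice; the module-generation argument of the second paragraph is the natural strengthening that makes the description of $S$ faithful, and is what the later constructions (Theorem~\ref{general_construction_theorem}, Proposition~\ref{prop:centralizer-generator}) actually use.
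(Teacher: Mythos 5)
Your proof is correct and follows essentially the same route the paper sketches: project $S$ onto its first coordinate to get the ideal $G=\langle\Vg\rangle$, take the kernel $H=\langle\Vh\rangle$, lift $\Vg$ to a pair $(\Vg,\Vf)\in S$, and observe that these two elements generate $S$ as an $\PR$-module. The paper states this lemma with only the surrounding definitions of $G$ and $H$ and no explicit generation argument, so your second paragraph simply fills in the verification the authors left implicit.
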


In Lemma \ref{simultaneously_generator_lemma} if $\Vh(X)=0 \mod X^n -1$ then such a subspace is known as \emph{uniquely cyclic}.

\begin{lemma}\label{uniquely_cyclic_lemma}
Let $\Sm=-1$ or $n$ divides $p^t+\Sm$. Let $S$ be a simultaneously cyclic, $\sigma_\Sm $-isotropic and $\mathbb{F}_{p^2}$ linear subspace then $S$ is uniquely cyclic.
\end{lemma}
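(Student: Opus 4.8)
The plan is to prove that the cyclic subcode $H=\{\Vb : (\mathbf{0},\Vb)\in S\}$, which by Lemma~\ref{simultaneously_generator_lemma} is generated by a divisor $\Vh(X)$ of $X^n-1$, is in fact the zero code; equivalently, $\Vh(X)\equiv 0 \bmod X^n-1$, which is exactly the definition of $S$ being uniquely cyclic. Since $H$ is an ideal of $\PR$, we have $(\mathbf{0},\Vh)\in S$. The first step is to exploit $\mathbb{F}_{p^2}$-linearity: because $\eta^2=c_0+c_1\eta$, closure of $S$ under multiplication by $\eta$ sends the pair $(\mathbf{0},\Vh)$ to $(c_0\Vh,\;\mathbf{0}+c_1\Vh)=(c_0\Vh,c_1\Vh)\in S$. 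Here $c_0\neq 0$, since $c_0$ is (up to sign) the constant term of the irreducible quadratic $\mu(Y)=Y^2-c_1Y-c_0$, which cannot vanish.

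Next I would feed the two elements $(c_0\Vh,c_1\Vh)$ and $(\mathbf{0},\Vh)$ of $S$ into the polynomial form of $\sigma_\Sm$-isotropy, Theorem~\ref{permuted_isotropic_subspace_theorem}. The term involving $\mathbf{0}$ drops out, leaving $c_0\,\Vh(X)\,\Vh(X^{-\Sm})\equiv 0 \bmod X^n-1$; dividing by the nonzero scalar $c_0$ gives $\Vh(X)\Vh(X^{-\Sm})\equiv 0 \bmod X^n-1$, i.e.\ $(X^n-1)\mid \Vh(X)\Vh(X^{-\Sm})$ as polynomials in $\F[p][X]$.

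The last step is a separability argument. Since $\gcd(n,p)=1$, the polynomial $X^n-1$ is squarefree, so write $X^n-1=\pi_1\cdots\pi_r$ with the $\pi_i$ pairwise distinct irreducibles. Each $\pi_i$ divides $\Vh(X)$ or $\Vh(X^{-\Sm})$; I claim the second possibility already forces the first. If $\Sm=-1$ this is immediate, as $\Vh(X^{-\Sm})=\Vh(X)$. If instead $n\mid p^t+\Sm$, then by Equation~\ref{frobenius_equation} we have $\Vh(X^{-\Sm})\equiv \Vh(X)^{p^t}\bmod X^n-1$, so working modulo $\pi_i$ gives $\Vh(X^{-\Sm})\equiv\Vh(X)^{p^t}$, and $\pi_i\mid\Vh(X)^{p^t}$ forces $\pi_i\mid\Vh(X)$ by irreducibility. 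Hence every $\pi_i$ divides $\Vh(X)$, and as these are distinct, $(X^n-1)\mid\Vh(X)$. Since $\Vh(X)$ is itself a divisor of $X^n-1$, this forces $\Vh(X)$ to equal $X^n-1$ up to a scalar, i.e.\ $\Vh(X)\equiv 0 \bmod X^n-1$, so $H=\{0\}$ and $S$ is uniquely cyclic.

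The verifications that $c_0\neq 0$ and that the cross term vanishes are routine; the one point requiring care is handling the two hypotheses $\Sm=-1$ and $n\mid p^t+\Sm$ uniformly in the last step, which rests on the Frobenius identity $\Vh(X)^{p^t}=\Vh(X^{p^t})$ over $\F[p]$ combined with the squarefreeness of $X^n-1$.
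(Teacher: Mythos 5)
Your proof is correct, and it takes a genuinely different route from the paper's. The paper splits the argument into two divisibility claims: first $\Vg(X)\mid\Vh(X)$, which it does not reprove but attributes to Lemma~IV.2 of Dutta et al., and second $(X^n-1)/\Vg(X)\mid\Vh(X)$, obtained by feeding the pair $(\Vg,\Vf)$ together with $(\mathbf{0},\Vh)$ into the $\sigma_\Sm$-isotropy condition and cancelling $\Vg(X)^{p^t}$; the two facts combine via squarefreeness of $X^n-1$. You instead never touch $(\Vg,\Vf)$ at all: you multiply $(\mathbf{0},\Vh)$ by $\eta$ (using $\F[p^2]$-linearity) to obtain $(c_0\Vh,c_1\Vh)\in S$, pair it against $(\mathbf{0},\Vh)$ in Theorem~\ref{permuted_isotropic_subspace_theorem}, and get $\Vh(X)\Vh(X^{-\Sm})\equiv 0 \bmod X^n-1$ in one shot. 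The squarefreeness/Frobenius argument then forces every irreducible factor of $X^n-1$ to divide $\Vh(X)$, hence $\Vh\equiv 0$. Your version is self-contained (it avoids the external citation), uses only the hypotheses as stated, and makes the role of $\F[p^2]$-linearity more transparent: the whole point is that $H$ is itself stable under $\eta$ in a way that is incompatible with isotropy unless $H=0$. The paper's version, by contrast, gives slightly more information along the way (the two divisibility relations involving $\Vg$) that it later reuses when characterizing the generator of $S$; your cleaner argument does not produce those by-products. Both are valid proofs of the lemma as stated.
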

\begin{proof}
The proof of this lemma is similar to the proof of Lemma $IV.2$ in \cite{DuttaK2011cyclic}. For completeness we state it here.

The Lemma \ref{simultaneously_generator_lemma} states that $S$ could be expressed by three polynomials $\Vg(X)$, $\Vf(X)$, and $h\left( X \right)$.
 The polynomials $\Vg(X)$, $\Vh(X)$ are factors of $X^n-1$ and $\Vf(X)$ is a polynomial in $\PR$ such that the elements $(\Vg,\Vf)$, $(0,\Vh)$ are in $S$.

To show that $S$ is an uniquely cyclic subspace we need to show that $\Vh(X)$ is a multiple of $X^n -1$.
 We would show this in two steps. 
First, we would show $\Vg(X) \divides \Vh(X)$. This proof is similar as in Lemma $IV.2$\cite{DuttaK2011cyclic}(\cite[5.11]{dutta2012exploiting}).
Now we would show $\frac{X^n-1}{\Vg(X)} \divides \Vh(X)$.

 The elements $(\Vg,\Vh)$, $(0,\Vh)$ belongs to $S$ and it is $\sigma_\Sm $-isotropic. Thus,  from the $\sigma_\Sm$-isotropy
condition (\ref{polynomial_isotropy_condition}) between $(\Vg,\Vf)$ and $(0,\Vh)$,  we have the following.
\begin{align*} 
&&  \Vg(X^{-\Sm })\Vh(X)&=0 \mod X^n-1&&  \\
&&  \Vg(X)^{p^t}\Vh(X)&=0 \mod X^n-1 && \textrm{(From Equation \ref{frobenius_equation})}\\
&& \Vh(X)&=  0 \mod \frac{X^n-1}{\Vg(X)}&&
\end{align*}
Thus $\frac{X^n-1}{\Vg(X)} \divides \Vh(X)$.

 \end{proof} 

\subsection{Proof of Theorem \ref{general_construction_theorem} } \label{proof_construction_theorem}

We would characterize the $\sigma_\Sm$-isotropic ideal $S$ of $\PR(\eta)$.
From Lemma \ref{simultaneously_generator_lemma}  and Lemma \ref{uniquely_cyclic_lemma}, 
we know that such an ideal $S$ could be expressed by two polynomials $\Vg(X)$ and $\Vf(X)$. 
The polynomials $\Vg(X)$ is a factor of $X^n-1$ and $\Vf(X)$ is a polynomial in $\PR $ such that $(\Vg,\Vf) \in S$.

Any element of $S$ can be expressed as $\Vb(X)\Vg(X) + \eta \Vb(X)\Vf(X)$, for some $\Vb(X)$ in $\PR$.
 Since, we have $\Vg(X)+\eta \Vf(X)$ in $S$, by $\mathbb{F}_{p}\left( \eta \right)$ linearity,
 we have  $\eta(\Vg(X)+\eta \Vf(X))$ in $S$. There exists a polynomial $\Va(X) \in \PR$ such that we get the following.
\begin{align}
&&\eta\left( \Vg(X) + \eta \Vf\left( X \right) \right)=& \Va(X)\left( \Vg(X) + \eta \Vf(X) \right) &\mod X^n-1 \label{gen_equation}
\end{align}

Compare the coefficient of $\eta$ in Equation \ref{gen_equation}, we have the following.

\begin{align}
&&\Vf(X) =&c_0^{-1}\Va(X)\Vg(X) &\mod X^n -1 \label{f_equation}\\
&&\mu(\Va(X)) =& 0 &\mod \frac{X^n-1}{\Vg(X)} \label{c_a_equation}
\end{align}

Recall that $\eta$ is a root of an irreducible quadratic polynomial $\mu(Y)=Y^2-c_1 Y -c_0$.
Let $r(X)$ be any irreducible factor $r(X)$ of $\frac{X^n-1}{\Vg(X)}$. 
Then Equation \ref{c_a_equation} implies that the $\Va(X) \mod r(X)$ is a root of $\mu(X)$. 
An immediate consequence of this is the following proposition.
\begin{proposition}\label{even_degree_proposition}
$\Vg(X)$ contains all the odd degree factors.
\end{proposition}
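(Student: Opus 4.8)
The plan is to combine Equation~\ref{c_a_equation} with the irreducibility of the quadratic $\mu(Y) = Y^2 - c_1 Y - c_0$ whose root is $\eta$. First I would note that since the block length $n$ is coprime to $p$, the polynomial $X^n - 1$ is squarefree over $\F[p]$; hence its irreducible factors are distinct, and $\Vg(X)$ together with $(X^n-1)/\Vg(X)$ are complementary products of these factors. Consequently, an odd-degree irreducible factor of $X^n-1$ divides $\Vg(X)$ precisely when it does \emph{not} divide $(X^n-1)/\Vg(X)$, so it suffices to show that every irreducible factor of $(X^n-1)/\Vg(X)$ has even degree.

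To that end, fix an irreducible factor $r(X)$ of $(X^n-1)/\Vg(X)$, say $\deg r = d$, and set $K = \F[p][X]/\langle r(X)\rangle \iso \F[p^d]$. Reducing Equation~\ref{c_a_equation} modulo $r(X)$, the image $\bar{\Va}$ of $\Va(X)$ in $K$ satisfies $\mu(\bar{\Va}) = 0$, i.e. the irreducible quadratic $\mu$ acquires a root in $\F[p^d]$. Because $\mu$ is irreducible of degree $2$ over $\F[p]$, any root of it generates the unique quadratic extension $\F[p^2]$ over $\F[p]$; therefore $\F[p^2]$ embeds into $\F[p^d]$, which forces $2 \mid d$. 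Thus $d$ is even, and by the squarefree factorization discussed above, all odd-degree irreducible factors of $X^n - 1$ must divide $\Vg(X)$, which is the claim.

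There is no genuine obstacle here beyond bookkeeping. The one point that needs care is that $\Va(X)$ in Equation~\ref{gen_equation} is only determined modulo $(X^n-1)/\Vg(X)$ — which is exactly the content of Equation~\ref{c_a_equation} — so the reduction modulo each irreducible factor $r(X)$ of $(X^n-1)/\Vg(X)$ is well defined, and the Galois-theoretic argument above applies to each such $r(X)$ independently.
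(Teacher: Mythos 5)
Your proof follows the same route as the paper: you reduce Equation~\ref{c_a_equation} modulo an irreducible factor $r(X)$ of $(X^n-1)/\Vg(X)$, observe that $\F[p][X]/\langle r(X)\rangle$ then contains a root of the irreducible quadratic $\mu$ and hence a copy of $\F[p^2]$, and deduce that $\deg r$ is even. Your write-up is merely more explicit about squarefreeness of $X^n-1$ and the divisibility $2 \mid d$, but there is no substantive difference.
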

\begin{proof}

The field extension $\F[p][X]/r(X)$ contains the root of the polynomial $\mu(X)$. 
Thus it contains $\F[p^2]$ as a subfield. This implies $r(X)$ has to be of even degree.
\end{proof}

By Proposition \ref{even_degree_proposition}, we know that $r(X)$ is of even degree.
Thus each $r(X)$ factorizes as $r'(X,\eta)r'(X,\eta)^p$ over $\F[p^2]$.
\begin{proposition}\label{h_factor_proposition}
$r'(X,\eta) \divides h(X,\eta)$ if and only if $r'(X,\eta)^p \ndivides h(X,\eta)$
\end{proposition}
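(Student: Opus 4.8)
Recall the context: $r(X)$ is an irreducible factor of $(X^n-1)/g(X)$ over $\F[p]$, which by Proposition~\ref{even_degree_proposition} has even degree, and hence splits over $\F[p^2]$ as $r(X) = r'(X,\eta) r'(X,\eta)^p$, where $r'(X,\eta)^p$ denotes the Frobenius conjugate. We are given the ideal $S$ generated by $g(X)h(X,\eta)$, and $\eta$ acts on $S$ as multiplication by a polynomial $\Va(X)$ satisfying $\mu(\Va(X)) = 0 \bmod (X^n-1)/g(X)$, i.e. $\Va(X)$ reduces to a root of $\mu$ modulo each such $r(X)$.

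**Plan.** The proof proceeds by analyzing the two roots of $\mu(Y)$ modulo $r(X)$. Since $r(X)$ has even degree, $\F[p][X]/r(X)$ is a field containing $\F[p^2]$, so $\mu(Y) = (Y-\eta)(Y-\eta^p)$ splits there with two distinct roots $\eta$ and $\eta^p$ (distinct because $\mu$ is irreducible over $\F[p]$, so has no repeated root). By the Chinese Remainder Theorem over $\F[p^2]$, $\F[p][X]/r(X) \cong \F[p^2][X]/r'(X,\eta) \times \F[p^2][X]/r'(X,\eta)^p$, and $X \bmod r'$ corresponds to a choice identifying which factor of $r$ over $\F[p^2]$ we sit in. The key observation is that $r'(X,\eta) \divides h(X,\eta)$ exactly captures the condition that, in the corresponding component, $\Va(X)$ reduces to one specific root (say $\eta$) rather than the other ($\eta^p$) — this is forced by Equation~\ref{f_equation} relating $\Vf$, $\Va$, and $\Vg$, together with the requirement that $g(X)h(X,\eta)$ generates a genuine ideal closed under $\eta$-multiplication. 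First I would make precise the statement that for each even-degree irreducible $r(X)$ over $\F[p]$ dividing $(X^n-1)/g(X)$, the two $\F[p^2]$-irreducible factors $r'(X,\eta)$ and $r'(X,\eta)^p$ correspond to the two roots of $\mu$, and that $\Va(X) \bmod r'(X,\eta)$ equals one of $\eta, \eta^p$ while $\Va(X) \bmod r'(X,\eta)^p$ equals the other (they cannot be equal, else $\Va$ would be a scalar root of $\mu$ lying in $\F[p^2]$, contradicting that $\Va$ genuinely generates the $\eta$-action on a nontrivial ideal — or this is simply ruled out by Equation~\ref{c_a_equation} combined with how $h$ is built).

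**Main argument.** Then I would argue both directions. For ($\Rightarrow$): suppose both $r'(X,\eta)$ and $r'(X,\eta)^p$ divide $h(X,\eta)$; then $r(X) = r'(X,\eta)r'(X,\eta)^p \divides h(X,\eta)$ (as a polynomial over $\F[p]$, since $h$ times a unit absorbs the conjugate structure), so $r(X)$ would be a common factor forcing $g(X) r(X) \divides g(X)h(X,\eta)$; but then modulo $r(X)$ the ideal $S$ vanishes on that component and the generator relation $\eta \cdot (\Vg + \eta\Vf) = \Va(\Vg+\eta\Vf)$ degenerates — more concretely, the $\F[p^2]$-dimension count for a genuine isotropic ideal (matching Theorem~\ref{general_construction_theorem}'s requirement of \emph{exactly one} of $r', (r')^p$) is violated, contradicting that $S$ is a well-defined $\sigma$-isotropic ideal. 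For ($\Leftarrow$): if neither divides $h(X,\eta)$, then $\gcd(h(X,\eta), r(X)) = 1$, so $r(X) \divides (X^n-1)/(g(X)h(X,\eta))$; tracing through Equation~\ref{f_equation} and the isotropy condition~\eqref{eqn:standard-isotropy}, this component would force the $\sigma_m$-isotropy pairing of two generic elements of $S$ to be nonzero (because neither root of $\mu$ is "killed"), contradicting $\sigma_m$-isotropy of $S$. Hence exactly one of the two conjugate factors divides $h$.

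**Anticipated obstacle.** The main difficulty is bookkeeping the correspondence between the abstract root choice "$\Va \equiv \eta$ vs. $\Va \equiv \eta^p$ mod the relevant factor" and the concrete divisibility "$r'(X,\eta) \divides h$ vs. $r'(X,\eta)^p \divides h$", since $\eta$ appears both as the formal symbol adjoined to build $\F[p^2]$ and as the value $\Va(X)$ takes — one must fix a consistent embedding $\F[p^2] \hookrightarrow \F[p][X]/r(X)$ and check the Frobenius $x \mapsto x^p$ swaps $r'$ with $(r')^p$ while fixing $\F[p]$, so that the two components of the CRT decomposition are genuinely interchanged by conjugation. I expect this to require carefully stating that $h(X,\eta)$ is by construction built from picking, for each $r$, one of the two $\F[p^2]$-factors, and then the isotropy/ideal conditions (Equations~\ref{f_equation}, \ref{c_a_equation}) force this pick to be unambiguous — which is essentially the content of Theorem~\ref{general_construction_theorem} that this proposition is supporting. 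The rest is routine CRT and degree counting over $\F[p]$ and $\F[p^2]$.
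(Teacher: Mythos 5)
There is a genuine gap. You correctly identify the framework — $r(X)$ splits over $\F[p^2]$ into Frobenius-conjugate factors $r'(X,\eta)$ and $r'(X,\eta)^p$; Equation~\eqref{c_a_equation} forces $\Va(X)$ to reduce to a root of $\mu$ modulo each such factor; and one should track which root occurs in each component. But the argument is missing the decisive algebraic mechanism that the paper's sketch supplies: an explicit formula
\[
h(X,\eta) \;=\; \gcd\!\left(\frac{X^n-1}{\Vg(X)},\; 1 + \eta c_0^{-1}\Va(X)\right),
\]
from which $r'(X,\eta) \divides h$ is \emph{equivalent} to $\Va(X)\equiv -c_0/\eta = \eta^p \bmod r'(X,\eta)$. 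Once that equivalence is in hand, the proposition reduces to the single observation that $\Va(X)\equiv \eta^p \bmod r'(X,\eta)$ iff $\Va(X)\equiv \eta \bmod r'(X,\eta)^p$, which follows by applying the Frobenius $(\cdot)^p$ to the congruence: it fixes $\Va$ (coefficients in $\F[p]$), sends $\eta^p \mapsto \eta$, and swaps $r'$ with $(r')^p$. Without the gcd characterization, the biconditional between divisibility by $r'$ and the value of $\Va \bmod r'$ is not established, and neither direction of your ``main argument'' actually closes.

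In place of this missing link you offer two substitutes, both problematic. For the claim that $\Va\bmod r'$ and $\Va\bmod (r')^p$ are the two \emph{distinct} roots, the parenthetical reason given (``else $\Va$ would be a scalar root of $\mu$ lying in $\F[p^2]$'') does not follow; the correct reason is the Frobenius conjugation just described, which forces the two reductions to be conjugate. For the two implications, you argue by contradiction via a dimension count and via ``the generator relation degenerates'' / ``the isotropy pairing would be nonzero'', but these appeals are either vague or circular: they lean on the conclusion of Theorem~\ref{general_construction_theorem} (that $h$ contains exactly one conjugate factor per $r$), which is precisely what this proposition is feeding into. Finally, the stated CRT isomorphism is incorrect as written: $\F[p][X]/r(X)$ is a field and cannot decompose as a product; the intended decomposition is $\F[p^2][X]/r(X) \cong \F[p^2][X]/r'(X,\eta) \times \F[p^2][X]/r'(X,\eta)^p$ after extending scalars to $\F[p^2]$.
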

\begin{proof}[Sketch]
It could be shown that $\Vh(X,\eta)=gcd\left( \frac{x^n-1}{\Vg(X)},1+\eta c_0^{-1} \Va\left( X \right) \right)$.
By Equation \ref{c_a_equation} we know $\Va(X) \mod r'(X,\eta)$ is either $\eta$ or $\eta^p$.
We could also show that  $a(X)=\eta^p \mod r'(X,\eta)$
if and only if $ a(X)=\eta \mod r'(X,\eta)^p $.
Thus $r'(X,\eta) \divides h(X,\eta)$ if and only if $\Va(X,\eta) = \eta^p \mod r'(X,\eta)$.
The details of the proof could be derived from Theorem 5.15 \cite[Chapter 5]{dutta2012exploiting}.
 \end{proof} 

When $m=-1$, $t$ is the order of $p$ in $\ZZ_n$. However, when 
the order of $p$ in $\ZZ_n$ is odd there are no even degree factors of
$X^n-1$ over $\F[p]$. From Proposition \ref{even_degree_proposition} we have 
$g(X)$ is a multiple of $X^n-1$. Thus for ideal to be 
non-trivial $t$ has to be even. 
This completes the proof for the case when $m=-1$. However, for $m \neq -1$ a little more is required.

As the $S$ is $\sigma_\Sm $-isotropic subspace, the $\sigma_\Sm $-isotropy condition of Equation \ref{polynomial_isotropy_condition} for $(\Vg,\Vf)$ with itself yields the following.

\begin{align*}
&&\Vg(X)\Vf(X^{-\Sm })=&\Vf(X)\Vg(X^{-\Sm }) &\mod X^n-1 \\
&&\Vg(X)\Va(X^{-\Sm })\Vg(X^{-\Sm })=&\Va(X)\Vg(X)\Vg(X^{-\Sm }) &\mod X^n-1 \\
\end{align*}
From Equation \ref{frobenius_equation}, we have,
\begin{align}
&&\Va(X)^{p^t}\Vg(X)^{p^t + 1}=&\Va(X)\Vg(X)^{p^t+1} & \mod X^n-1 \nonumber \\
&&\Va(X)^{p^t}=&\Va(X) &\mod \frac{X^n-1}{\Vg(X)} \label{a_equation} 
\end{align}

Its easy to see when $m=-1$ Equation \ref{a_equation} 
is trivially satisfied. Hence does not impose any further constraints in such a case.
However, for $m\neq -1$ assuming $t$ to be odd, the Equation \ref{c_a_equation} and 
Equation \ref{a_equation} leads to contradiction. Details could be 
derived from Theorem 5.15 \cite[Chapter 5]{dutta2012exploiting}.
Thus $t$ has to be even.

This completes the proof of Theorem \ref{general_construction_theorem}.

\subsection{Proof of Proposition \ref{prop:centralizer-generator}} \label{proof_centralizer-generator}

 Let $S$ be a $\sigma_\Sm$-isotropic ideal of $\PR(\eta)$ as in  Theorem \ref{general_construction_theorem}.
We need to show that $h(X,\eta)$ maps to the $\sigma_\Sm$-centralizer of $S$.
At first, we compute the size of $S$. This would determine the size
 of $\sigma_\Sm$-centralizer $\C{S}$.

By Lemma \ref{uniquely_cyclic_lemma}  we know that $S$ is expressed 
by two polynomials $g\left( X \right)$ and $h\left( X,\eta \right)$.
From Section \ref{proof_construction_theorem} we know that there 
exists a polynomial $\Va(X) \in \PR$ and  $\Vg(X)$ is a factor of $X^n-1$ over $\F[p]$
such that 
$\Vg(X)+\eta\Va(X)\Vg(X)$ generates $S$.
The following proposition formalizes the dimension of the $S$ in terms of the polynomial $g(X)$.
\begin{proposition}
\cite[Theorem 5.24]{dutta2012exploiting}\cite[Theoverm V.7]{DuttaK2011cyclic} The dimension of the $S$ as a subspace is  $n-\deg \Vg(X)$.
\end{proposition}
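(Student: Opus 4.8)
The plan is to view $S$ as an ideal of $\PR(\eta)=\F[p^2][X]/\langle X^n-1\rangle$ and to extract its $\F[p]$-dimension (the relevant notion here, as in Theorem~\ref{thm:error_correction_sigma_stabilizer}) from the degree of its generating polynomial. Since $\gcd(n,p)=1$, the polynomial $X^n-1$ is squarefree over $\F[p^2]$, so every ideal of $\PR(\eta)$ is generated by some divisor $F$ of $X^n-1$, and such an ideal has $\F[p^2]$-dimension exactly $n-\deg F$ (it is spanned over $\F[p^2]$ by $F,XF,\dots,X^{\,n-\deg F-1}F$). Because the encoding $(\Va,\Vb)\mapsto\Va+\eta\Vb$ is an $\F[p]$-linear isomorphism $\FFpn\to\F[p^2]^n$, this gives $\dim_{\F[p]}S=2(n-\deg F)$, so the whole task reduces to computing $\deg F$. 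By Theorem~\ref{general_construction_theorem} we may take $F=\Vg(X)\,h(X,\eta)$, which divides $X^n-1$, with $\Vg(X)$ a factor of $X^n-1$ over $\F[p]$ containing all the odd-degree irreducible factors, and $h(X,\eta)$ selecting, for each $\F[p^2]$-irreducible factor $r(X,\eta)$ of $\tfrac{X^n-1}{\Vg(X)}$, exactly one of $r(X,\eta)$ and its conjugate $r(X,\eta)^p$.

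The key point is then that $\deg h(X,\eta)=\tfrac12\bigl(n-\deg\Vg(X)\bigr)$. Indeed, by Proposition~\ref{even_degree_proposition} every $\F[p]$-irreducible factor of $\tfrac{X^n-1}{\Vg(X)}$ has even degree, hence splits over $\F[p^2]$ into a conjugate pair $r(X,\eta)\,r(X,\eta)^p$ with $\deg r(X,\eta)=\deg r(X,\eta)^p$. Grouping the $\F[p^2]$-irreducible factors of $\tfrac{X^n-1}{\Vg(X)}$ into such pairs and using that $h(X,\eta)$ contains exactly one factor of each pair, $\deg h(X,\eta)=\sum_{\text{pairs}}\tfrac12\deg\bigl(r(X,\eta)\,r(X,\eta)^p\bigr)=\tfrac12\deg\tfrac{X^n-1}{\Vg(X)}=\tfrac12\bigl(n-\deg\Vg(X)\bigr)$. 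Therefore $\deg F=\deg\Vg(X)+\deg h(X,\eta)=\tfrac12\bigl(n+\deg\Vg(X)\bigr)$, so $\dim_{\F[p^2]}S=n-\deg F=\tfrac12\bigl(n-\deg\Vg(X)\bigr)$ and $\dim_{\F[p]}S=2\cdot\tfrac12\bigl(n-\deg\Vg(X)\bigr)=n-\deg\Vg(X)$, as claimed.

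I do not expect a serious obstacle once Theorem~\ref{general_construction_theorem} is available; the only places needing care are the elementary $\F[p^2]$-dimension formula for ideals of $\PR(\eta)$ and the fact (Proposition~\ref{even_degree_proposition}) that the co-factor $\tfrac{X^n-1}{\Vg(X)}$ has only even-degree irreducible factors over $\F[p]$, which is precisely what makes the halving in the degree count legitimate. As a consistency check, the same bookkeeping shows that $\C{S}$, which maps to the ideal generated by $h(X,\eta)$ of degree $\tfrac12(n-\deg\Vg(X))$ (Proposition~\ref{prop:centralizer-generator}), has $\F[p]$-dimension $n+\deg\Vg(X)$; thus $\dim_{\F[p]}S+\dim_{\F[p]}\C{S}=2n$, in accordance with the nondegeneracy of $\pSymplectic{\cdot}{\cdot}$.
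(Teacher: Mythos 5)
Your argument is correct. Note that the paper itself does not prove this proposition --- it only cites Theorem 5.24 of \cite{dutta2012exploiting} and Theorem V.7 of \cite{DuttaK2011cyclic} --- so there is no in-paper proof to compare against; what you give is a self-contained derivation from the ingredients already established here. The two pillars are exactly right: (i) since $\gcd(n,p)=1$, the polynomial $X^n-1$ is squarefree over $\F[p^2]$, so the ideal of $\PR(\eta)$ generated by a divisor $F\mid X^n-1$ has $\F[p^2]$-dimension $n-\deg F$, and the encoding $\FFpn\to\F[p^2]^n$ is an $\F[p]$-linear isomorphism so $\dim_{\F[p]}S=2(n-\deg F)$; (ii) by Proposition~\ref{even_degree_proposition} every $\F[p]$-irreducible factor of $(X^n-1)/\Vg(X)$ has even degree and hence splits over $\F[p^2]$ into two conjugate factors of equal degree, of which $h(X,\eta)$ picks exactly one, giving $\deg h(X,\eta)=\tfrac12(n-\deg\Vg(X))$ and hence $\deg F=\tfrac12(n+\deg\Vg(X))$, so $\dim_{\F[p]}S=n-\deg\Vg(X)$. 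The consistency check $\dim_{\F[p]}S+\dim_{\F[p]}\C{S}=2n$ is also a nice confirmation, and the observation that the harmless $c_0^{-1}$ twist in the encoding used in the appendix does not affect dimensions is implicit but worth keeping in mind.
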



Let $\Va(X) \in \PR$ be the polynomial as in the proof of Theorem 
\ref{general_construction_theorem} (Section \ref{proof_construction_theorem}, Equation \ref{gen_equation}). 
For such a fixed polynomial $\Va(X)$, let $Z$ be the set defined as follows.

\begin{equation}
Z=\set*{
 \left( \Vu,\Vu\Va + \Vv' \right) \given
 \begin{aligned}
 & \forall \Vu(X) \in \PR, \\
 & \forall \Vv(X) \in \F[p]\left[ X \right]/\left( \frac{X^n-1}{\Vg(X)} \right) ,\Vv'(X)=\Vv(X)\left(\frac{ X^n-1}{\Vg(X)}\right)
  \end{aligned}
}
\end{equation}

\begin{proposition}
The set $Z$ is the $\sigma_{\Sm}$-centralizer of $S$.
\end{proposition}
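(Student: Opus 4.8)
The plan is to prove the inclusion $Z\subseteq\C{S}$ by a single polynomial identity and then conclude $Z=\C{S}$ by a dimension count. For the dimension count I would argue as follows: by Lemma~\ref{lem-sigma-standard-connection} the set $S^{\sigma_\Sm}$ is isotropic for the (non-degenerate) standard symplectic form and $\C{S}^{\sigma_\Sm}$ is its standard centralizer, while $(\Va,\Vb)\mapsto(\Va,\sigma_\Sm\Vb)$ is a dimension-preserving bijection of $\FFpn$; hence $\dim_{\F[p]}\C{S}=2n-\dim_{\F[p]}S=n+\deg\Vg(X)$, using the cited formula $\dim_{\F[p]}S=n-\deg\Vg(X)$. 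On the other hand $Z$ is the set of pairs $(\Vu,\Vu\Va+\Vv')$ where $\Vu$ ranges freely over $\PR$ and $\Vv'$ ranges over the ideal of multiples of $\tfrac{X^n-1}{\Vg(X)}$ in $\PR$, and reading off the first component recovers $\Vu$ and hence $\Vv'$, so this parametrisation is injective and $\dim_{\F[p]}Z=n+\deg\Vg(X)$ as well. Therefore it suffices to show $Z\subseteq\C{S}$.

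Next I would record that $Z$ is simultaneously cyclic and $\F[p^2]$-linear, hence an ideal of $\PR(\eta)$: shift-invariance is immediate since the ideal $\langle\tfrac{X^n-1}{\Vg(X)}\rangle$ is shift-invariant and the shift of $(\Vu,\Vu\Va+\Vv')$ is $(X\Vu,(X\Vu)\Va+X\Vv')$, while closure under the $\eta$-multiplication map $(\Va_0,\Vb_0)\mapsto(c_0\Vb_0,\Va_0+c_1\Vb_0)$ is a short computation that collapses precisely to the quadratic relation that $\Va(X)$ satisfies modulo $\tfrac{X^n-1}{\Vg(X)}$ (Equation~\ref{c_a_equation}). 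Now $S$, being the $\PR(\eta)$-ideal generated by $\Vg(X)+\eta\Va(X)\Vg(X)$, is spanned over $\F[p]$ by the cyclic shifts of $(\Vg,\Va\Vg)$ together with those of $\eta\cdot(\Vg,\Va\Vg)$. Using that the shift commutes with $\eta$-multiplication, that $Z$ is $\F[p^2]$-linear, and the identity $\pSymplectic{\Vu}{\eta\Vv}=-\pSymplectic{\eta\Vu}{\Vv}$ (valid when $\pSymplectic{\Vu}{\Vv}=0$), it then suffices to check that every element of $Z$ is $\sigma_\Sm$-orthogonal to every cyclic shift of the single element $(\Vg,\Va\Vg)$.

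By the coefficient computation inside the proof of Theorem~\ref{permuted_isotropic_subspace_theorem}, this is equivalent to the identity
\[
\Vu(X)\,(\Va\Vg)(X^{-\Sm})-(\Vu\Va+\Vv')(X)\,\Vg(X^{-\Sm})\equiv 0\pmod{X^n-1}.
\]
Substituting $X^{-\Sm}=X^{p^t}$ and applying Frobenius (Equation~\ref{frobenius_equation}) rewrites the left-hand side as $\Vu(X)\,\Vg(X)^{p^t}\bigl(\Va(X)^{p^t}-\Va(X)\bigr)-\Vv'(X)\,\Vg(X)^{p^t}$. The second summand is $0$ because $\Vv'$ is a multiple of $\tfrac{X^n-1}{\Vg(X)}$, so $\Vv'\Vg^{p^t}$ is a multiple of $(X^n-1)\Vg^{p^t-1}$; the first summand is $0$ because Equation~\ref{a_equation} says $\Va^{p^t}-\Va$ is a multiple of $\tfrac{X^n-1}{\Vg(X)}$, so after multiplying by $\Vg^{p^t}$ one again obtains a multiple of $X^n-1$. (For $\Sm=-1$, where $t=0$, only the $\Vv'$-term is present and the argument is the same.) This proves $Z\subseteq\C{S}$, and combined with the dimension equality, $Z=\C{S}$.

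The one delicate point — calling for care rather than ideas — is the bookkeeping around the normalisation of $\Va(X)$ and the identification $\PR\times\PR\cong\PR(\eta)$. With the untwisted identification $(\Vu,\Vv)\mapsto\Vu+\eta\Vv$ the generator of $S$ is $\Vg+\eta c_0^{-1}\Va\Vg$, not $\Vg+\eta\Va\Vg$, so to make the identities above hold verbatim one either carries the twisted identification $(\Vu,\Vv)\mapsto\Vu+\eta c_0^{-1}\Vv$ throughout (under which $Z$ maps onto the ideal generated by $h(X,\eta)=\gcd\bigl(\tfrac{X^n-1}{\Vg(X)},\,1+\eta c_0^{-1}\Va(X)\bigr)$, which is where Proposition~\ref{prop:centralizer-generator} comes from), or rescales $\Va\mapsto c_0^{-1}\Va$ and tracks the effect on Equations~\ref{c_a_equation} and~\ref{a_equation}: the quadratic $\mu$ is replaced by its reciprocal polynomial, but the relation $\Va^{p^t}\equiv\Va\pmod{\tfrac{X^n-1}{\Vg(X)}}$ is unchanged since $c_0\in\F[p]$. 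With this settled, every step above is routine.
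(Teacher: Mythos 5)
Your proof is correct and follows essentially the same strategy as the paper: show $Z\subseteq\C{S}$ via the Frobenius-substituted polynomial identity (reducing everything to Equation~\ref{a_equation} and the fact that $\Vv'$ is a multiple of $(X^n-1)/\Vg$), then conclude equality by comparing $\F[p]$-dimensions, $n+\deg\Vg$ on both sides. The only small deviation is your reduction step: you invoke $\F[p^2]$-linearity of $Z$ and the sign identity $\pSymplectic{\Vu}{\eta\Vv}=-\pSymplectic{\eta\Vu}{\Vv}$ to handle the shifts of $\eta\cdot(\Vg,\Va\Vg)$, whereas the paper can bypass this because Equation~\ref{gen_equation} already shows $\eta$ acts as multiplication by $\Va(X)$, so $S$ is the $\PR$-span (not just $\PR(\eta)$-span) of the single generator; your closing remark on the $c_0^{-1}$ normalisation correctly flags a genuine imprecision that the paper glosses over.
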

\begin{proof} 

First, we need to show that any element $\left( \Vu,\Vv \right) \in Z$ satisfies the 
$\sigma_\Sm$-isotropy condition with every element of $S$. 
It is enough to show that the element $(\Vu,\Vv)$ is $\sigma_\Sm$-isotropic with $\left( \Vg,\Va\Vg \right)$.
From the polynomial form of the 
$\sigma_\Sm$-isotropy condition (\ref{polynomial_isotropy_condition}), we have
\begin{align*}
&&   \Vu(X)\Va(X^{-\Sm})\Vg(X^{-\Sm})=& \left( \Vu(X)\Va(X) + t(X) \frac{X^n-1}{\Vg(X)}\right) \Vg(X^{-\Sm}) & \mod X^n-1\\
\end{align*}
 From Equation \ref{frobenius_equation}
\begin{align*}
&&    \Vu(X)\Va(X)^{p^t}\Vg(X)^{p^t} =& \left( \Vu(X)\Va(X) + t(X) \frac{X^n-1}{\Vg(X)}\right) \Vg(X)^{p^t} &\mod X^n-1\\
&&    \Vu(X)\Va(X)^{p^t}\Vg(X)^{p^t} =&  \Vu(X)\Va(X)\Vg(X)^{p^t} &\mod X^n-1  \\
&&    \Vu(X)\Va(X)^{p^t}           =&  \Vu(X)\Va(X) &\mod \frac{X^n-1}{\Vg(X)}  \\
\end{align*}
Since $\Va(X)^{p^t} = \Va(X) \mod (X^n-1)/\Vg(X)$ (from Section \ref{proof_construction_theorem}), the above equation is satisfied. The cardinality of the set $Z$ is $p^{n+\deg g}$ which is same as the cardinality of $\sigma_\Sm$-centralizer $\C{S}$. Hence, we have shown that  $Z$ is the centralizer of $S$.

Now we need to show $\sigma_\Sm$-centralizer $\C{S}$ maps to $h(X,\eta)$. 
Let $(\Vu,\Vv) \in \FFpn$  maps to an element $\Vu(X) + c_0^{-1}\eta \Vv(X) \in \PR(\eta)$.
It is easy to see that the joint weight of an element remains unchanged under this mapping.
Any element of $\left( \Vu,\Vv \right) \in Z$ maps to $\Vu(X)\left(1+c_0^{-1}\eta \Va(X)\right) + c_0^{-1}\eta \left(t(X)\left( \frac{X^n-1}{\Vg(X)} \right)\right)$.
As $\Vh(X,\eta)=gcd\left( \frac{x^n-1}{\Vg(X)},1+\eta c_0^{-1} \Va\left( X \right) \right)$ (from Section \ref{proof_construction_theorem}), $\Vh(X,\eta)$ divides it.
This completes the proof.
 \end{proof} 



\section{Simulations} \label{sec_simulations}

Consider a depolarizing channel in which errors $X$, $Y$, $Z$ occur indpendently with 
probability $p$ and $I$ with probability $1-p$, for various values of $p$.
We simulated the performance of the maximum likelihood decoding algorithm for the codes $[[11,1,3]]$, 
$[[13,1,5]]$, and $[[17,1,6]]$ over this channel.

For the simulation, we send an all zero codeword through the depolarizing channel and compute the syndrome for the received word.
We employ the hard decision maximum likelihood decoder based on a look up table. The table stores the minimum symplectic weight vector for all syndromes. 
Based on this vector, a recovery operator is applied to complete the decoding. 

After decoding, if the resultant is a non zero codeword then we consider it as an error.
We repeat this process multiple times to compute the Quantum Block Error Rate(QBER). The QBER is defined as,
\[ QBER := \frac{\text{Total number of bits with error after decoding}}{\text{Total number of bits received}} .\]
In this simulation we do not take degeneracy into account.

For a general code, there is a trade-off between minimum distance and coding rate.
We consider similar trade-off between QBER and the coding rate. Our simulations show that below a certain probability (threshold probability), decreasing the 
coding rate results in a better QBER.


We plot the QBER versus the depolarizing probability of $[[11,1,3]]$, 
$[[13,1,5]]$, and $[[17,1,6]]$ in Figure \ref{fig_11_qubit}, 
Figure \ref{fig_13_qubit}, and Figure \ref{fig_17_qubit} respectively. 
Figure \ref{fig_threshold_probability} shows the threshold probability which 
is indicated by the dashed line. It is the point where the QBER curves of $[[11,1,3]]$, 
$[[13,1,5]]$, and $[[17,1,6]]$  crosses over each other.

\begin{figure}
\subcaptionbox{$[[ 11,1,3 ]]$ 
    \label{fig_11_qubit}%
}[0.5\textwidth]%
{ 
   \includegraphics[width=0.5\textwidth]{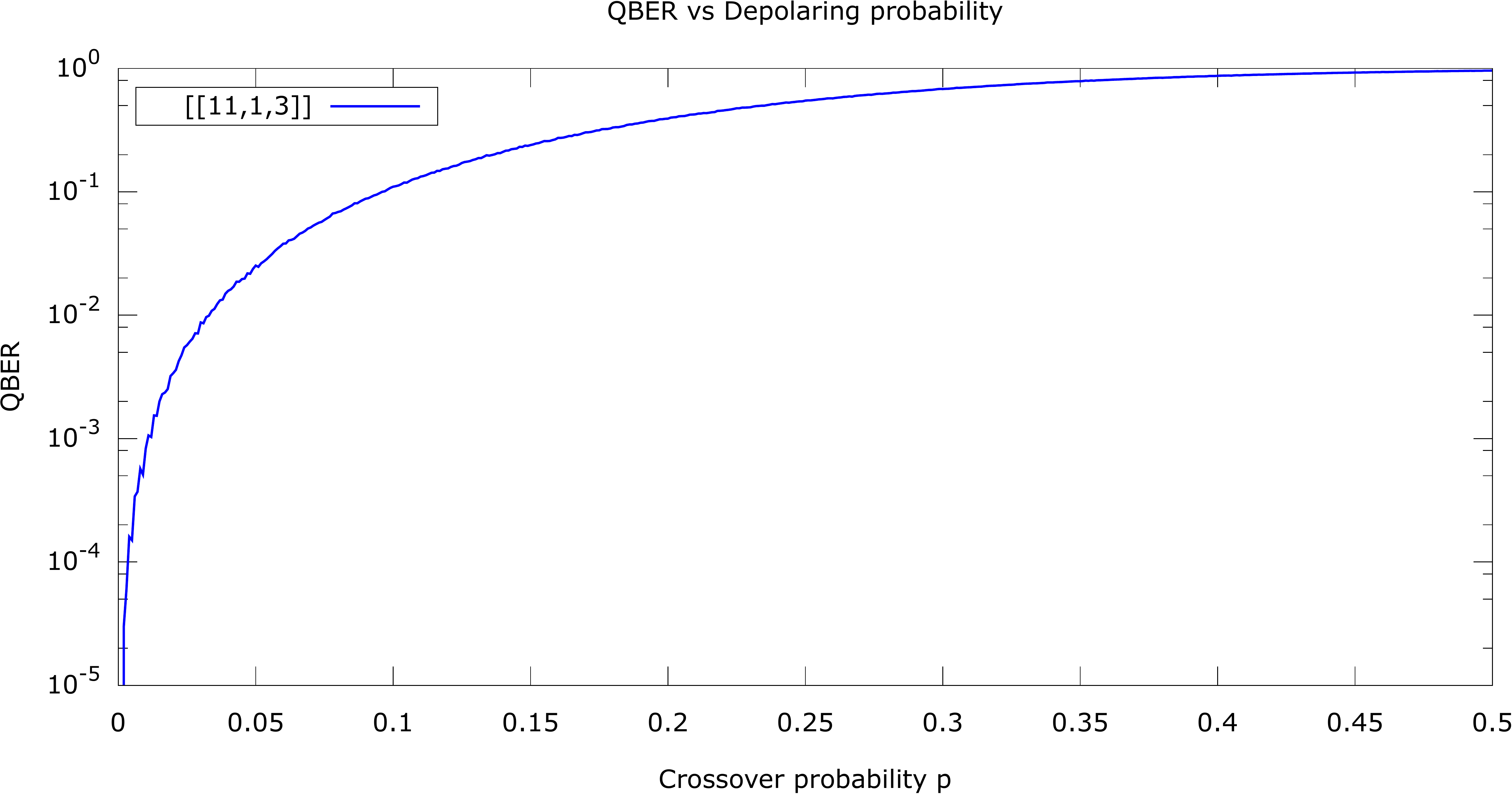}
}
\subcaptionbox{$[[ 13,1,5 ]]$ \label{fig_13_qubit}}
[0.5\textwidth]
{
  \includegraphics[width=0.5\textwidth]{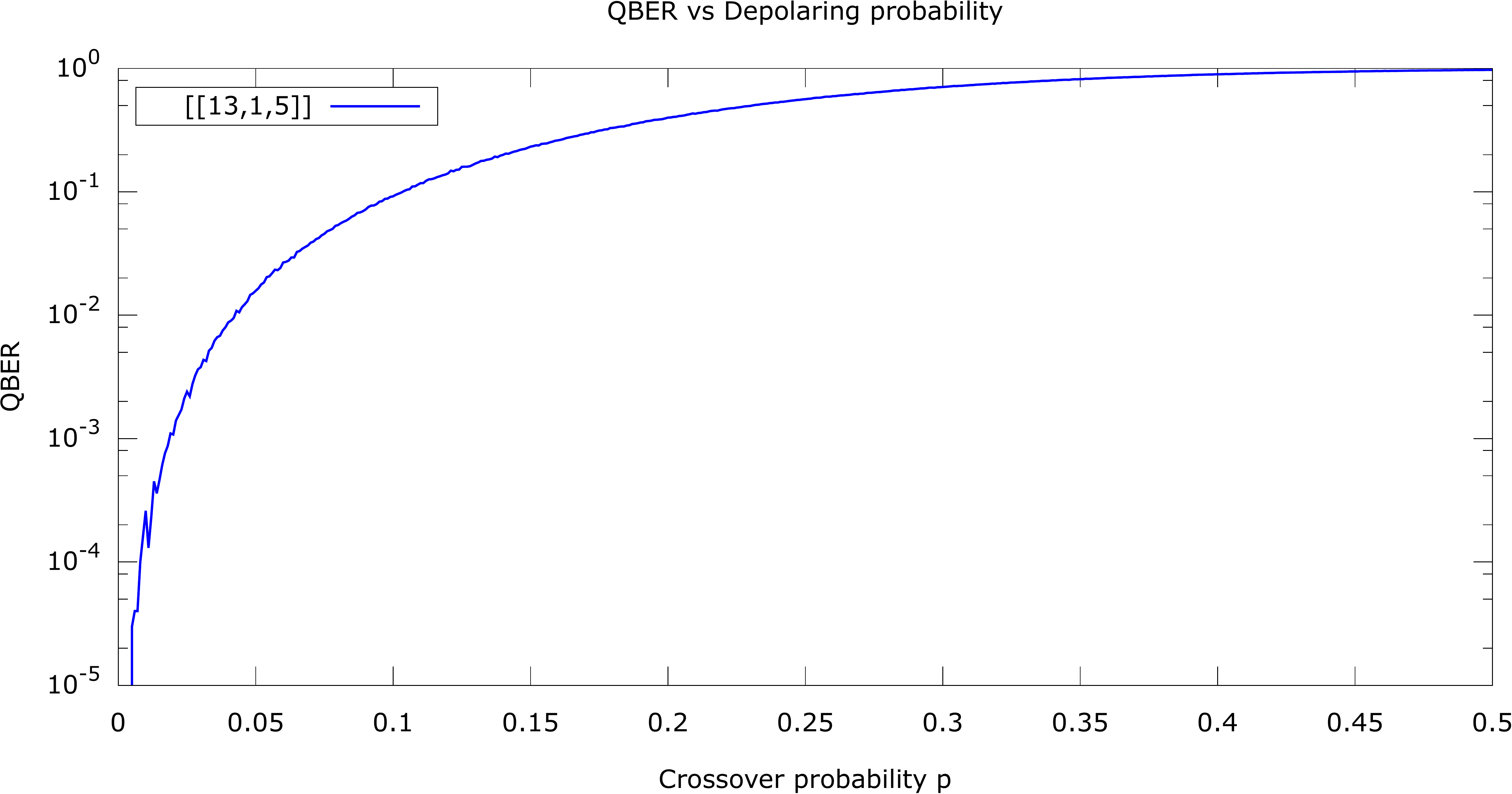}%
}
\subcaptionbox{$[[ 17,1,6 ]]$\label{fig_17_qubit}}
[0.5\textwidth]
{
    \includegraphics[width=0.5\textwidth]{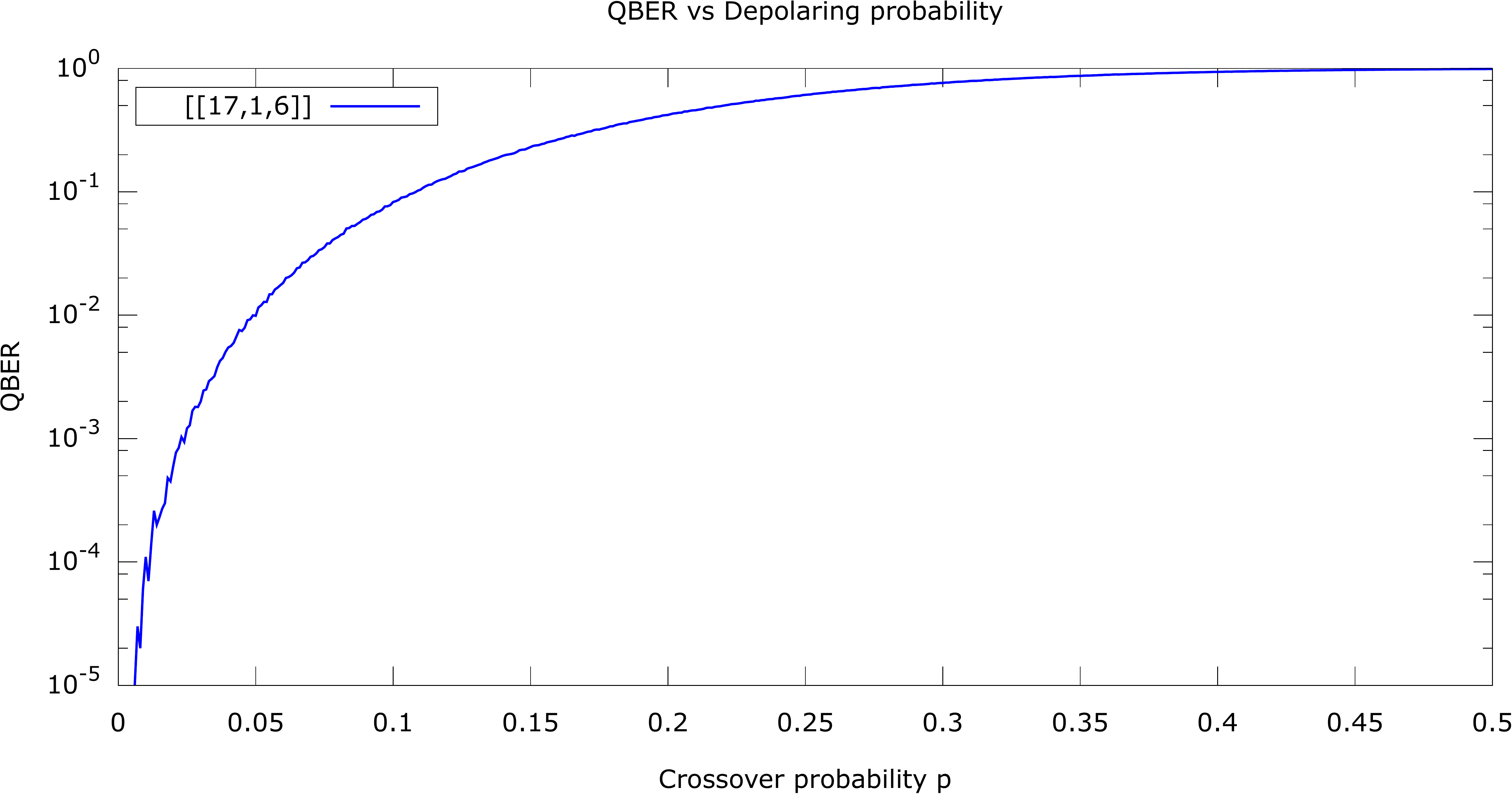}%
}
\caption{Crossover probability vs QBER for different codes}
\label{fig:label}
\end{figure}

\begin{figure}[ht!]
    \includegraphics[width=400pt]{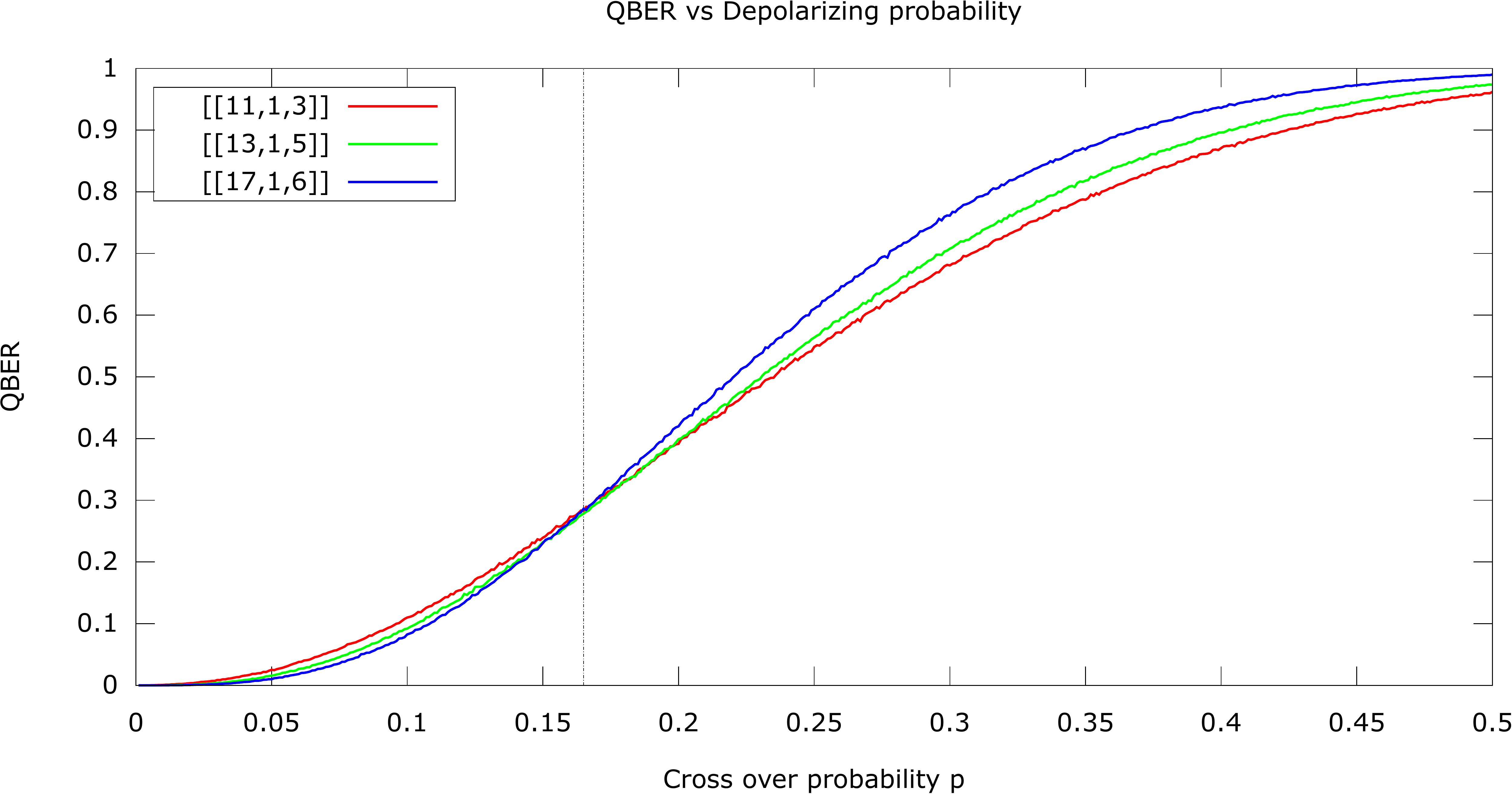}
    \caption{QBER performance as the coding rate decreases}
    \label{fig_threshold_probability}
\end{figure}

\end{document}